\newtheorem{theorem}{Theorem}[section]
\newtheorem{lemma}{Lemma}[section]
\newtheorem{definition}{Definition}[section]
\newtheorem{remark}{Remark}
\titleformat*{\section}{\large\bfseries}
\titleformat*{\subsection}{\large\bfseries}
\titleformat*{\subsubsection}{\large\bfseries}
\titleformat*{\paragraph}{\large\bfseries}
\titleformat*{\subparagraph}{\large\bfseries}
\theoremstyle{definition}
\numberwithin{mytheorem}{section}
\DeclareMathOperator*{\argmin}{arg\,min}
\DeclareMathOperator*{\argmax}{arg\,max}
\begin{document}

\def\spacingset#1{\renewcommand{\baselinestretch}%
{#1}\small\normalsize} \spacingset{1}

\title{\bf Optimal Change Point Detection and Inference in the Spectral Density of General Time Series Models}
  \author{Sepideh Mosaferi \hspace{.2cm}\\
    Department of Mathematics and Statistics\\ University of Massachusetts Amherst\\
    and \\
    Abolfazl Safikhani \\
    Department of Statistics \\ George Mason University\\
    and \\
    Peiliang Bai \\
    Microsoft}

  \date{}
    
  \maketitle


\bigskip
\begin{abstract}
This paper addresses the problem of detecting change points in the spectral density of time series, motivated by EEG analysis of seizure patients. Seizures disrupt coherence and functional connectivity, necessitating precise detection. Departing from traditional parametric approaches, we utilize the Wold decomposition, representing general time series as autoregressive processes with infinite lags, which are truncated and estimated around the change point. Our detection procedure employs an initial estimator that systematically searches across time points. We examine the localization error and its dependence on time series properties and sample size. To enhance accuracy, we introduce an optimal rate method with an asymptotic distribution, facilitating the construction of confidence intervals. The proposed method effectively identifies seizure onset in EEG data and extends to event detection in video data. Comprehensive numerical experiments demonstrate its superior performance compared to existing techniques.
\end{abstract}

\noindent%
{\it Keywords:}  Confidence interval; spectral density functions; optimal detection; Wold representation; Yule-Walker equation.

\newpage
\spacingset{1.9} 

\section{Introduction}
\label{sec:intro}

Change point detection is a critical field in time series analysis, focused on identifying points in data where statistical properties, such as mean, variance, or distribution, change abruptly \citep{basseville1993detection}. These change points often indicate significant shifts in the underlying processes, which are important for various applications in fields. For example, in neuroscience, change point detection helps identify transitions in brain activity, such as shifts in neural dynamics during cognitive tasks or disease progression \citep{safikhani2018joint}. In finance, detecting shifts in market trends can be essential for identifying regime changes and risk assessment \citep{frisen2008financial}. Also, in signal processing, it is used to detect abrupt changes in sensor data, such as anomalies in speech or sound signals \citep{foggia2015audio}. Finally, in manufacturing, it can help identify faults or quality control problems by detecting changes in production processes \citep{qiu2013introduction}. Detection algorithms can be broadly categorized into online/sequential and offline approaches. In the online/sequential setting, change points must be detected in real-time as data are observed sequentially \citep{wang2015large,chan2021self}. In contrast, the offline approach involves a retrospective analysis of the entire data set to identify all change points \citep{yu2020review}. In this paper, we focus on the offline detection problem, which is particularly relevant for applications requiring a comprehensive post hoc analysis of structural changes in time series data.

There are various methods in the literature (CUSUM-based, fused lasso, dynamic programming, PCA-based, rolling window-based, etc.) to perform offline change point detection in univariate or high-dimensional data across different statistical models. Many of these methods are mainly focused on detecting mean changes. Examples include pruned exact linear time (PELT) method \citep{killick2012optimal} which uses optimal partitioning approach of \cite{jackson2005algorithm} and a pruning step within the dynamic program to detect the structural breaks, \cite{fryzl2018} that apply a tail-greedy Haar transformation to consistently estimate the number and locations of multiple change points in the univariate piecewise-constant model, Inspect method \citep{wang2018high} which proposes a high-dimensional change-point detection method that uses a sparse projection to project the high-dimensional data into a univariate case while \cite{aue2018detecting} develop a method based on the (scaled) functional cumulative sum (CUSUM) statistic for detecting shifts in the mean of a functional data model  \citep{wang2020univariate,fryzlewicz2014wild}. While some other methods accommodate a broader range of change types (e.g. changes in regression coefficients), they often assume a parametric structure, which can be limiting in certain contexts  \citep{pein2017heterogeneous,bai2022unified,kolar2012estimating,aue2009break,avanesov2018change}. Additionally, more recent methods have emerged that are capable of handling changes in temporal dependence, which manifest as shifts in spectral density \citep{davis2006structural,safikhani2018joint,cho2015multiple,cho2016change,wang2019localizing,bai2021multiple}. These methods offer a more flexible modeling framework while focusing on detecting changes. However, they typically do not extend to providing inference on the location of the change points, which is a critical aspect for a comprehensive analysis of temporal dynamics.

Providing inference for change point locations has become a major focus in the recent literature. For example, \cite{xu2024change} performs estimation and inference for change point locations and allows the covariate and noise sequences to be temporally dependent in the functional dependence framework, but the changes are only in the regression coefficients, and it is unclear whether their method can handle changes in the temporal dependence. \cite{kaul2021inferencenew} and \cite{kaul2021inference} provide inference for multiple change points in mean shift models, and thus cannot handle changes in spectral density \citep{wang2023dating}. Further, \cite{madrid2021optimal} provides inference for change points in multivariate data, but focuses on independent observations. More recently, \cite{madrid2023change} and \cite{padilla2021optimal} provide detection and inference in multivariate non-parametric models. However, their detection method focuses on estimating marginal distributions of data, and thus, it is unclear whether they are suitable for change point inference due to changes in temporal dependence/spectral density  \citep{chen2022inference,fryzlewicz2024narrowest}. To the best of our knowledge, there are no methods in the literature designed for detection and inference for the spectral density of general time series models. Thus, there remains a gap in the literature that our paper aims to address by developing a method that not only accommodates changes in the temporal dependence through different spectral densities but also offers inference on the location of the change point.










In this article, we investigate the problem of change point detection in the spectral density of time series models, motivated by the analysis of EEG data from patients undergoing seizures. Seizure events are known to alter the coherence and functional connectivity, making precise detection critical for understanding brain dynamics \citep{safikhani2018joint}. While existing methods in the literature primarily assume specific parametric forms before and after the change point, we take a nonparametric approach by leveraging the well-known Wold decomposition. This allows us to represent the pre- and post-change time series models as autoregressive processes with infinite lags \citep{wang2021consistent}. The core idea is to appropriately truncate these infinite-lag autoregressive models and estimate model parameters before and after the change point.

Our detection procedure begins with an initial estimation of the change point location by systematically searching over all possible time points, while excluding boundary regions as per standard practice in change point analysis. We theoretically establish the localization error for this estimator, linking its non-asymptotic rate to key characteristics of the time series before and after the change point, as well as the sample size (Theorem~\ref{thm:1}). To further enhance detection accuracy, we develop an improved method that not only achieves an optimal localization error rate but also admits an asymptotic distribution (Theorem~\ref{thm:4.1}). This is particularly significant, as it facilitates the construction of confidence intervals for the change point location, an aspect that has received limited attention in the literature (Theorem~\ref{thm:4.2}). Key technical developments include (1) utilizing central limit theorems for martingale differences to derive asymptotic distributions for the change point estimator (due to the existence of temporal dependence), (2) controlling the impact of temporal dependence properly by imposing mixing-type assumptions on time series models, and (3) carefully quantifying the effect of model misspecification (due to existence of change point) in spectral density estimations and Yule-Walker estimations. Additionally, by allowing time series observations to be sub-Weibull random variables, we allow observations to follow several heavy-tail distributions, unlike several existing methods which focus on Gaussian or sub-Gaussian random variables. The effectiveness of our proposed method is demonstrated through its application to EEG data recorded during seizures, where it successfully identifies the onset of seizures in channels near the affected brain region. Furthermore, we illustrate the versatility of our method by applying it to video data for event detection, highlighting its broader applicability beyond EEG analysis. Finally, extensive numerical studies benchmark our method against several existing approaches (both in terms of detection and inference), showcasing its superior performance.

The structure of the paper is as follows. In Section \ref{sec:model-formulation}, we begin by presenting the Wold representation for general time series data, along with the preliminary step of spectral density estimation. The proposed change point modeling framework is then introduced. In Section \ref{sec:near-optimal}, we propose a near-optimal estimator for the change point parameter and provide the corresponding theoretical results to demonstrate its convergence rate. Section \ref{sec:optimal} outlines the construction of the optimal estimator and presents the theoretical analysis, including the asymptotic distribution. In Section \ref{sec:simulations}, we assess the empirical performance of our methodology through several time series processes and compare it with existing approaches in the literature. Section \ref{sec:application} includes two real data applications, using EEG and video data sets. Finally, Section \ref{sec:discussion} summarizes our findings and suggests directions for future research. Technical details and additional numerical results are available in the Supplementary Material.

\noindent
\textit{Notation.} Throughout this paper, we use $\mathbb{R}$ to represent the real line. For any vector $v \in \mathbb{R}^p$, $\|v\|_1$, $\|v\|_2$, and $\|v\|_\infty$ indicate the $\ell_1$ norm, the Euclidean norm, and the maximum norm, respectively. For the sub-Weibull distribution, we use $\|\cdot\|_{\psi_{\gamma}}$ to denote the Orlicz norm with parameter $\gamma$. 
We denote $C,C_1,C_2,...$ as generic constants that may differ in each appearance.
Suppose $A$ is a symmetric or Hermitian matrix, we denote $\lambda_{\min}(A)$ and $\lambda_{\max}(A)$ as the smallest and largest eigenvalues of $A$. For any set $A$, we use $|A|$ and $A^c$ to represent the cardinality and complement of $A$. We also use $a \vee b = \max\{a, b\}$ and $a \wedge b = \min\{a, b\}$ for any $a, b \in \mathbb{R}$, and the notation $\lfloor \cdot \rfloor$ is used to represent the greatest integer function. All limits in this paper are with respect to the sample size $T \to +\infty$.

\section{Preliminaries and Model Formulation}\label{sec:model-formulation}

In this section, we first present some necessary preliminaries and definitions. Then, we formulate the model based on the Wold representation theorem.

\subsection{Preliminaries} \label{sec:preliminary}

We know from the well-known Wold representation that every purely nondeterministic, (weakly) stationary
and zero-mean time series $X_t$ with nonvanishing
spectral density and absolutely summable autoregressive coefficients \citep{wang2021consistent} can be written as

\begin{equation}
    \label{eq:1}
    X_t = \sum_{j=1}^\infty \phi^\star_jX_{t-j} + \epsilon_t,
\end{equation}
where $\epsilon_t$ is white noise with mean zero and some variance $\sigma^2$, i.e. $\epsilon_t \sim \text{WN}(0, \sigma^2)$. Here we have the additional assumption that $\sum_{j=1}^\infty |\phi^\star_j| < \infty$. We can obtain the sample autocovariance as $\hat{\gamma}_k = \frac{1}{T}\sum_{t=1}^{T-|k|}(X_t - \Bar{X})(X_{t+|k|} - \Bar{X})$ based on the data $X_1, X_2, \dots, X_T$ where $\Bar{X}$ is the average of these observations and $T$ is the sample size. Then, for some choice of lag $p$, we fit an AR($p$) model and estimate parameter ($\hat{\phi}$) by using the Yule-Walker equations \citep{brockwell2002introduction}. The spectral density of the fitted AR($p$) model is given by $\hat{f}^{AR}(\lambda) = \frac{\hat{\sigma}^2}{2\pi |\hat{\phi}(e^{i\lambda})|^2}$, where $\hat{\phi}(z) \overset{\text{def}}{=} 1 - \hat{\phi}_1z - \hat{\phi}_2z^2 - \cdots - \hat{\phi}_p z^p$ and $\lambda \in [-\pi, \pi]$. 


\subsection{Time series model with a change point}

Consider the zero-mean univariate process $X_t$ defined in \eqref{eq:1} with a change point. More 

\noindent specifically, there exists a change point $\lfloor T\tau^\star \rfloor \in (1,T)$ such that
\begin{equation}
    \label{eq:3}
    \begin{aligned}
        X_t &= \sum_{j=1}^\infty \phi_{1,j}^\star X_{t-j} + \epsilon_t, \quad t = 1, \dots, \lfloor T\tau^\star \rfloor, \\
        X_t &= \sum_{j=1}^\infty \phi_{2,j}^\star X_{t-j} + \tilde{\epsilon}_t, \quad t = \lfloor T\tau^\star \rfloor +1, \dots, T,
    \end{aligned}
\end{equation}
where $\{ \phi^\star_{1,j}, \phi^\star_{2,j} \}  \in \mathbb{R}$ are model coefficients corresponding to the $j$-th lag before and after the change point, respectively. The terms $\epsilon_t, \tilde{\epsilon}_t \in \mathbb{R}$ are the unobserved zero-mean white noises and are independent of each other. To simplify the notation, in the remainder, we refer to both error terms as $\epsilon_t$ if there are no ambiguities. The change point $\lfloor T\tau^\star \rfloor \in (1,T)$ is the main parameter of interest in this article. Furthermore, the true spectral density functions before and after the change point are denoted as $f_1(\lambda)$ and $f_2(\lambda)$, respectively, and the corresponding characteristic functions are given by $\Phi_1^\star(z) = \sum_{j=1}^\infty \phi_{1,j}^\star z^{j}$ and $\Phi_2^\star(z) = \sum_{j=1}^\infty \phi^\star_{2,j} z^j$.

The main goal of this article is to provide an estimate of the change point $\lfloor T\tau^\star \rfloor$, which achieves the optimal convergence rate, as well as the corresponding limiting distribution, so that it allows the construction of confidence intervals for the unknown change point.

\subsection{Definition of sub-Weibull random variable}

In the following, we provide the definition of an Orlicz norm for random variables and properties of the sub-Weibull distribution. Here, we refer to Section 2.7.1 of \cite{vershynin2018high} and the references therein for more details.
\begin{definition}[Orlicz norms]\label{def:1}
Let $g: [0, \infty) \to [0, \infty)$ be a nondecreasing function with $g(0) = 0$. The \emph{$g$-Orlicz norm} is defined as $\|X\|_g \overset{\text{def}}{=} \inf \left\{ \eta > 0: \mathbb{E}(g(|X| / \eta)) \leq 1 \right\}$ for a real-valued random variable $X$.
\end{definition}
We now define sub-Weibull random variables as follows:
\begin{definition}[Sub-Weibull($\gamma$) random variable and norm]\label{def:2}
A random variable $X$ is called to be sub-Weibull with parameter $\gamma>0$, denoted as sub-Weibull($\gamma$), if there exist constants $K_1, K_2$, and $K_3$ differ from each other at most by a constant depending only on $\gamma$, such that one of the following equivalent statements holds:
\begin{enumerate}
    \item Tails of $X$ satisfy: $\mathbb{P}(|X|>t) \leq 2\exp\left\{ -\left( \frac{t}{K_1} \right)^{\gamma} \right\},\quad \forall \,  t\geq 0$.
    
    \item Moments of $X$ satisfy: $\|X\|_p := (\mathbb{E}|X|^p)^{1/p} \leq K_2 \, p^{1/\gamma},\quad \forall \,  p \geq 1\wedge \gamma$.
    
    \item The moment generating function of $|X|^\gamma$ is finite at some point; i.e. $\mathbb{E}(\exp(|X|/K_3)^\gamma) \leq 2$.
\end{enumerate}
Then the associated norm of $X$, denoted as $\|X\|_{\psi_{\gamma}}$, is defined as:
\begin{equation*}
    \|X\|_{\psi_{\gamma}} \overset{\text{def}}{=} \sup_{p\geq 1}\left(\mathbb{E}|X|^{p}\right)^{1/p}p^{-1/\gamma} < +\infty.
\end{equation*}
\end{definition}
Now we introduce the definition of a sub-Weibull random vector. It is defined via using one-dimensional projections of the random vector along the same lines as the definitions of sub-Gaussian and sub-exponential random vectors.
\begin{definition}\label{def:3}
    Let $\gamma > 0$. A random vector $X \in \mathbb{R}^p$ is called a sub-Weibull($\gamma$) random vector if all of its one dimensional projections are sub-Weibull($\gamma$) random variables. Then we define the norm of $X$ as $\|X\|_{\psi_\gamma} := \sup_{v\in S^{p-1}}\|v^\prime X\|_{\psi_\gamma}$.
\end{definition}

\section{Construction of A Near Optimal Estimator of \texorpdfstring{$\lfloor T\tau^\star \rfloor$}{}}\label{sec:near-optimal}
Let $X_1, X_2, \dots, X_T$ be a sequence of observations with a single change point $\lfloor T\tau^\star \rfloor$ generated by the model presented in \eqref{eq:3}. Then, for any time point $\lfloor T\tau \rfloor \in (1,T)$, we first fit AR($p$) processes using the intervals $[1, \lfloor T\tau \rfloor]$ and $[\lfloor T\tau \rfloor+1, T]$. The fitting procedure is described next.

First, for some choice of $p$ (to be specified later), we consider the sets of observations $\mathcal{T}(1,\lfloor T\tau \rfloor) \overset{\text{def}}{=} \{X_1, X_2, \dots, X_{\lfloor T\tau \rfloor}\}$ and $\mathcal{T}(\lfloor T\tau \rfloor+1, T) \overset{\text{def}}{=} \{X_{\lfloor T\tau \rfloor+1}, X_{\lfloor T \tau \rfloor+2}, \dots, X_T\}$. Then, we obtain the sample autocovariances using $\mathcal{T}(1,\lfloor T\tau \rfloor)$ and $\mathcal{T}(\lfloor T\tau \rfloor+1, T)$, respectively:
\begin{equation}
    \label{eq:4}
    \hat{\gamma}_{1,k} = \frac{1}{\lfloor T\tau \rfloor}\sum_{t=1}^{\lfloor T\tau \rfloor-|k|}(X_t - \Bar{X}_1)(X_{t+|k|} - \Bar{X}_1),\ \  \hat{\gamma}_{2,k} = \frac{1}{T-\lfloor T\tau \rfloor}\sum_{t=\lfloor T\tau \rfloor+1}^{T-\lfloor T\tau \rfloor-|k|}(X_t - \Bar{X}_2)(X_{t+|k|} - \Bar{X}_2),
\end{equation}
where $k=1,2,\dots, p$ and $\Bar{X}_1$ and $\Bar{X}_2$ are the average of observed data in intervals $[1, \lfloor T\tau \rfloor]$ and $[\lfloor T\tau \rfloor+1, T]$, respectively. Then we construct the $p\times p$ autocovariance matrices $\hat{\Gamma}_1 = \left(\hat{\gamma}_{1,|i-j|}\right)_{i,j=1}^p$ and $\hat{\Gamma}_2 = \left(\hat{\gamma}_{2,|i-j|}\right)_{i,j=1}^p$, as well as the vectors $\hat{\gamma}_1 = (\hat{\gamma}_{1,1}, \dots, \hat{\gamma}_{1,p})^\prime$ and $\hat{\gamma}_2 = (\hat{\gamma}_{2,1}, \dots, \hat{\gamma}_{2,p})^\prime$, respectively.


Next, we use them to fit AR($p$) models on $\mathcal{T}(1,\lfloor T\tau \rfloor)$ and $\mathcal{T}(\lfloor T\tau \rfloor+1, T)$ via solving the Yule-Walker equations and find two $p$-dimensional parameter estimates:
\begin{equation}
    \label{eq:5}
    \hat{\phi}_1 = \hat{\Gamma}_1^{-1} \hat{\gamma}_1, \quad 
    \hat{\phi}_2 = \hat{\Gamma}_2^{-1} \hat{\gamma}_2.
\end{equation}

Note that the selected lag $p$ increases by increasing sample size, and thus, is a function of  $T$, i.e. $p = p(T)$. However, for ease of exposition, we omit this dependence and denote the lag as $p$. Also, the selected lag $p$ can be different for observations before and after $\lfloor T\tau \rfloor$. Here, we use the same lag for notational simplicity (see Section~\ref{sec:simulations} for more details on how the lag is selected in practice). Finally, we construct the objective function according to sum of squared error with respect to the time point $\lfloor T\tau \rfloor$ as well as fitted AR($p$) model parameters $\hat{\phi}_1$ and $\hat{\phi}_2$:
\begin{equation}
    \label{eq:7}
    \mathcal{L}(\tau) \overset{\text{def}}{=} \sum_{t=1}^{\lfloor T\tau \rfloor} (X_t - \hat{\phi}_1^\prime Z_t)^2 + \sum_{t=\lfloor T\tau \rfloor +1}^T(X_t - \hat{\phi}_2^\prime Z_t)^2,
\end{equation}
where $Z_t \overset{\text{def}}{=} (X_{t-1}, \cdots, X_{t-p})^\prime \in \mathbb{R}^p$. The \emph{near optimal} estimated change point $\lfloor T\hat{\tau} \rfloor$ is defined as:
\begin{equation}
    \label{eq:8}
    \hat{\tau} = \argmin_{\tau \in \lbrace 2/T, 3/T, \ldots,(T-1)/T \rbrace} \mathcal{L}(\tau).
\end{equation}

\subsection{Theoretical guarantee for the near optimal estimator}

In this section, we establish the theoretical results for the near optimal estimator $\lfloor T\hat{\tau} \rfloor$ of the change point parameter.
Before stating the main theorem, we state two conditions as follows:

\noindent 
{\bf Condition A} (On underlying model distributions): We assume that the underlying distribution in model \eqref{eq:3} satisfies the following three assumptions:
\begin{itemize}
    \item[(a)] The model error terms $\epsilon_t, \tilde{\epsilon}_t$'s are i.i.d. random variables from sub-Weibull($\gamma_2$) distribution for some parameter $\gamma_2$, with variance $\sigma^2$, and are independent of each other and independent of observation $X_{t-1}$. 
    \item[(b)] The processes $X_t$ before and after the change point are geometrically $\beta$-mixing. More specifically, there exist constants $c, \gamma_1 > 0$ such that for all $n \in \mathbb{N}$, $\beta(n) \leq 2\exp(-cn^{\gamma_1})$ for both processes. The beta mixing coefficient is defined as $\beta(n) = \sup_{t \in \mathbb{Z}} \beta(X_{-\infty:t}, X_{t+n:\infty})$ where $\beta(X, Y) = \frac{1}{2}\sup_{I,J}\sum_{i\in I}\sum_{j\in J}|\mathbb{P}(A_i \cap B_j) - \mathbb{P}(A_i)\mathbb{P}(B_j)|$. 
    \item[(c)] Denote the covariance matrices as $\Gamma_p^X(h) = \text{Cov}(X_t, X_{t+h})$ for $t, h \in \mathbb{Z}$. Then, the covariance matrix $\Gamma_p^X(0)$ has bounded eigenvalues, that is, there exist $\kappa_{\min}$ and $\kappa_{\min}$ such that $0 < \kappa_{\min} \leq \lambda_{\min}(\Gamma_p^X(0)) \leq \lambda_{\max}(\Gamma_p^X(0)) \leq \kappa_{\max} < +\infty$. Specifically, this implies that the condition number of $\Gamma_p^X(0)$ is bounded by $\nu = \kappa_{\max} / \kappa_{\min}$. 
\end{itemize}


\begin{remark}
\label{remark:1}
In condition (a), the sub-Weibull assumption significantly relaxes the sub-Gaussian distribution assumption in many detection results. As shown in \cite{wong2020lasso}, given the assumption (a), each random realization $X_t$ follows a sub-Weibull($\gamma_2$) distribution with zero mean and its Orlicz norm $\|X_t\|_{\psi_{\gamma_2}} \leq K_X < +\infty$. We refer to supplementary material for more details and proofs. Condition (b) is an essential assumption to control the overall temporal dependence. The last condition (c) is a natural assumption and provides the assumption for the covariance matrices $\Gamma^X$ of the series $X_t$, which controls the upper and lower bounds for the maximum/minimum eigenvalues of $\Gamma^X$ as well as the condition number. 
\end{remark}

\noindent
{\bf Condition B} (On the model parameters): 
\begin{itemize}
    \item[(a)] Assume that the change point $ \lfloor T\tau^\star \rfloor $ is sufficiently separated from the boundaries of $(0,1)$. In other words, there exists a positive sequence $l_T \to 0$, such that $\lfloor T\tau^\star \rfloor \wedge (T - \lfloor T\tau^\star \rfloor) \geq Tl_T \to +\infty$. Specifically, the sequence $l_T$ satisfies the condition $Tl_T \geq p^2\log p$. 
    \item[(b)] For some appropriately chosen small enough universal constant $c_{u1}>0$, we have the following assumption:
    \begin{equation*}
        c_u\sqrt{1+\nu^2}\frac{\sigma^2}{\kappa_{\min}\xi_2}\left( \frac{p^2\log(p\vee T)}{T^{\frac{1}{2}-b}\sqrt{l_T}}\right) \leq c_{u1},
    \end{equation*}
    where $0 < b < \frac{1}{2}$, and parameters $\sigma^2, \nu, \kappa_{\min}$ are defined in Condition A. Finally, $\xi_2 = \|\phi_1^\star - \phi_2^\star\|_2 = \|\eta^\star\|_2$ is recognized as the jump size and $\phi_1^\star$, $\phi_2^\star \in \mathbb{R}^p$ are first $p$ orders coefficients in model \eqref{eq:3}.
    \item[(c)] There exists a positive constant $v$ such that the spectral densities $f_1(\lambda)$ and $f_2(\lambda)$ satisfy the inequality $\sup_{\lambda \in [-\pi, \pi]}|f_1(\lambda) - f_2(\lambda)| \geq v > 0$.
\end{itemize}

\begin{remark}
Condition (a) describes the boundary conditions for the change point $ \lfloor T\tau^\star \rfloor $. It illustrates that the change point in the given time series cannot be close to the boundaries. Condition (b) demonstrates the connection between sample size and lag p, as well as imposing a lower bound on the model parameter differences before/after the change point \citep{kaul2021inference}. As stated, the selected lag $p$ increases by increasing sample size, and thus, is a function of  $T$, i.e. $p = p(T)$. However, for ease of exposition, we omit this dependence and denote the lag as $p$.

Condition (c) is necessary since it is assumed that the spectral densities of the time series model differ before and after the change point; otherwise, there is no need to perform detection. Note that such an assumption is typically written in terms of an autoregressive model parameter. For example, in univariate time series change point detection problems, often they propose the assumption on the difference between the model parameters, that is, for some $k$, $|\phi^1_k - \phi^2_k| \geq c_0 > 0$ holds for some constant $c_0$, where $\phi^1$ and $\phi^2$ are the model parameters \citep{timmer2003change, bai2003computation}. It is possible to make a connection between these two sets of assumptions. To that end, note that:
\begin{equation*}
    \begin{aligned}
        |f_1(\lambda) - f_2(\lambda)| &= \left| \frac{\sigma^2}{2\pi|\phi^\star_1(e^{i\lambda})|^2} - \frac{\sigma^2}{2\pi|\phi^\star_2(e^{i\lambda})|^2} \right| = \frac{\sigma^2}{2\pi}\left|  \frac{\left| |\phi^\star_2(e^{i\lambda})|^2 - |\phi^\star_1(e^{i\lambda})|^2 \right|}{|\phi^\star_1(e^{i\lambda})\phi^\star_2(e^{i\lambda})|^2}\right| \\
        &\leq \frac{\sigma^2}{2\pi}\frac{\left| \phi^\star_1(e^{i\lambda}) - \phi^\star_2(e^{i\lambda}) \right|\left( |\phi^\star_1(e^{i\lambda})| + |\phi^\star_2(e^{i\lambda})| \right)}{|\phi^\star_1(e^{i\lambda})\phi^\star_2(e^{i\lambda})|^2} \leq \frac{C_0\sigma^2}{2\pi}\left| \phi^\star_1(e^{i\lambda}) - \phi^\star_2(e^{i\lambda}) \right| \\
        &= \frac{C_0\sigma^2}{2\pi}\sum_{k=1}^\infty|(\phi_{1,k}^\star - \phi_{2,k}^\star)e^{ik\lambda}| \leq \frac{C_0 \sigma^2}{2\pi}\sum_{k=1}^\infty|\phi_{1,k}^\star - \phi_{2,k}^\star|.
    \end{aligned}
\end{equation*}

The first inequality holds because both $(\phi^\star_j(e^{i\lambda}))^{-1}$ and $|\phi^\star_j(e^{i\lambda})|$ for $j=1,2$ are bounded. Hence, $C_0>0$ is a positive constant large enough to indicate the upper bound for the previous quantities. Since the model parameters are assumed to be absolutely summable, we can easily derive that if $|f_1(\lambda) - f_2(\lambda)| \geq v$, then we have $\sum_{j=1}^\infty |\phi_{1,j}^\star - \phi_{2,j}^\star| \geq {2\pi v}/{C_0\sigma^2}$, which implies that it is sufficient for us to assume that the difference between the spectral densities is lower bounded to identify the change point.
\end{remark}

\begin{theorem}
\label{thm:1}
Suppose conditions A and B are satisfied. Then, there exists a large enough positive constant $K_0 > 0$ such that as $T \to \infty$:
\begin{equation*}
    \mathbb{P}\left( \Big|\lfloor T\hat{\tau} \rfloor - \lfloor T\tau^\star \rfloor \Big| \leq K_0\frac{p^2\log (p \vee T)}{\xi_2^2}\right) \to 1,
\end{equation*}
where $\xi_2 = \|\phi_1^\star - \phi_2^\star\|_2 = \|\eta^\star\|_2$ is recognized as the jump size and $\phi_1^\star$, $\phi_2^\star \in \mathbb{R}^p$ are first $p$ orders coefficients in model \eqref{eq:3}. Specifically, when $\xi_2 = \mathcal{O}(p)$, we have $\lfloor T\hat{\tau} \rfloor - \lfloor T\tau^\star \rfloor = \mathcal{O}(\log(p\vee T))$ with probability $1 - o(1)$.
\end{theorem}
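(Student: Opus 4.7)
\textbf{Proof plan for Theorem~\ref{thm:1}.}
My plan is a ``basic inequality'' argument tailored to an AR($p$) fit with sub-Weibull, $\beta$-mixing noise. Without loss of generality assume $\lfloor T\hat\tau\rfloor<\lfloor T\tau^\star\rfloor$ (the other side is symmetric) and let $\mathcal{D}=\{\lfloor T\hat\tau\rfloor+1,\ldots,\lfloor T\tau^\star\rfloor\}$, $d=|\mathcal{D}|$. Since $\hat\tau$ minimises $\mathcal{L}(\cdot)$ over $\{2/T,\ldots,(T-1)/T\}$, we have $\mathcal{L}(\hat\tau)\le \mathcal{L}(\tau^\star)$. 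Expanding both sides and cancelling the terms on $[1,\lfloor T\hat\tau\rfloor]$ and $[\lfloor T\tau^\star\rfloor+1,T]$ that use essentially the same fitted coefficient, the inequality reduces to a comparison of the fits on $\mathcal{D}$, where under $\hat\tau$ the block is fit with $\hat\phi_2$ (contaminated by pre-change data) but the truth is $\phi_1^\star$. My first task is to reduce the event $\{d>K_0 p^2\log(p\vee T)/\xi_2^2\}$ to a deterministic chain of inequalities of the form ``signal $\le$ noise''.

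The second step is to isolate the signal. On $\mathcal{D}$ the true model is $X_t=\phi_1^{\star\prime}Z_t+\epsilon_t+r_t^{(1)}$, where $r_t^{(1)}=\sum_{j>p}\phi_{1,j}^\star X_{t-j}$ is the truncation remainder. Writing $X_t-\hat\phi_2^\prime Z_t=\epsilon_t+r_t^{(1)}+(\phi_1^\star-\hat\phi_2)^\prime Z_t$, squaring, and summing over $\mathcal{D}$, the basic inequality produces a leading quadratic form
\begin{equation*}
(\phi_1^\star-\hat\phi_2)^\prime\Bigl(\sum_{t\in\mathcal{D}}Z_tZ_t^\prime\Bigr)(\phi_1^\star-\hat\phi_2)\;\le\; 2\bigl|\text{cross terms}\bigr|+\text{boundary slack}.
\end{equation*}
Using the Yule--Walker consistency for $\hat\phi_2$ on $[\lfloor T\hat\tau\rfloor+1,T]$ (which by Condition~B(b) contains only an $o(1)$ fraction of pre-change samples) gives $\|\hat\phi_2-\phi_2^\star\|_2=o(\xi_2)$, so $\|\phi_1^\star-\hat\phi_2\|_2\gtrsim \xi_2$. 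Combined with the matrix concentration $\lambda_{\min}\bigl(\frac1d\sum_{t\in\mathcal{D}}Z_tZ_t^\prime\bigr)\ge \kappa_{\min}/2$ on an event of probability $1-o(1)$ (which holds as soon as $d\gtrsim p^2\log p$; otherwise the bound in Theorem~\ref{thm:1} is already attained trivially), the left side is lower-bounded by $c\,\kappa_{\min}\,\xi_2^2\,d$.

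The third step is to control the noise. The cross terms split into (i) $\sum_{t\in\mathcal{D}}\epsilon_t(\phi_1^\star-\hat\phi_2)^\prime Z_t$, (ii) truncation-remainder terms involving $r_t^{(1)}$, and (iii) boundary terms arising from the symmetric comparison on $[1,\lfloor T\hat\tau\rfloor]$. Under Condition~A the pairs $(\epsilon_t,Z_t)$ form a martingale-difference-like sequence within $\beta$-mixing, sub-Weibull$(\gamma_2)$ observations; applying a Bernstein-type inequality for sub-Weibull $\beta$-mixing sequences (of the kind developed in \cite{wong2020lasso}) together with a union bound over $\tau$ and over the $p$ coordinates bounds (i) by $C\|\phi_1^\star-\hat\phi_2\|_2\sqrt{d\,p\log(p\vee T)}$ on an event of probability $1-o(1)$. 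The summable-coefficient assumption handles (ii) by $o(\xi_2^2 d)$, while (iii) is absorbed into the main quadratic form after the same Yule--Walker analysis applied to $\hat\phi_1$.

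Putting these together yields $c\,\kappa_{\min}\,\xi_2^2\,d\;\le\;C\,\xi_2\sqrt{d\,p\log(p\vee T)}$ (plus negligible terms), and solving for $d$ gives $d\lesssim p^2\log(p\vee T)/\xi_2^2$, which is exactly the claimed rate, with the corollary for $\xi_2=\mathcal{O}(p)$ following immediately. The main obstacle I foresee is the joint control of the Yule--Walker estimator $\hat\phi_2$ under model misspecification: the window $[\lfloor T\hat\tau\rfloor+1,T]$ mixes pre- and post-change observations, so $\hat\phi_2$ converges to a perturbed Yule--Walker solution, not $\phi_2^\star$. Quantifying the perturbation in terms of $d/T$ and $\xi_2$, and showing it is small enough not to overwhelm the $\xi_2$ signal, is the delicate step that genuinely uses Condition~B(b) together with the bounded condition number from Condition~A(c), and is where the $p^2$ factor in the rate enters.
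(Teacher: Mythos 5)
There is a genuine gap, and it sits exactly at the step you yourself flag as ``delicate'': the treatment of the contaminated Yule--Walker estimate. With $\lfloor T\hat\tau\rfloor<\lfloor T\tau^\star\rfloor$, the coefficient $\hat\phi_2$ is computed on $[\lfloor T\hat\tau\rfloor+1,T]$, and the fraction of pre-change observations in that window is $d/(d+T-\lfloor T\tau^\star\rfloor)$, which is \emph{not} controlled by Condition~B(b): that condition is a signal-to-noise requirement on $\xi_2$, $p$, $T$, $l_T$ and says nothing about where $\hat\tau$ falls. Since you must argue uniformly over all candidate split points (that is the whole point of excluding the event $d>K_0p^2\log(p\vee T)/\xi_2^2$), you cannot assume the contamination is an $o(1)$ fraction --- asserting $\|\hat\phi_2-\phi_2^\star\|_2=o(\xi_2)$ there is circular, because it presupposes the very localization bound you are proving. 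In the adversarial regime where $d$ is comparable to (or larger than) $T-\lfloor T\tau^\star\rfloor$, the Yule--Walker solution on the mixed window can be dragged arbitrarily close to $\phi_1^\star$, so your key signal term $d\,\|\phi_1^\star-\hat\phi_2\|_2^2\gtrsim d\,\xi_2^2$ simply fails, and the chain ``signal $\le$ noise'' collapses. Worse, your opening reduction --- ``cancelling the terms on $[1,\lfloor T\hat\tau\rfloor]$ and $[\lfloor T\tau^\star\rfloor+1,T]$'' --- discards precisely the term that rescues the argument in that regime: if $\hat\phi_2\approx\phi_1^\star$, then it is the long \emph{pure post-change} block $(\lfloor T\tau^\star\rfloor,T]$, of length at least $Tl_T$, that is badly fit under the split at $\hat\tau$ but well fit under the split at $\tau^\star$, and this mismatch of order $Tl_T\xi_2^2$ (large by Conditions~B(a)--(b)) is what contradicts the basic inequality.

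The paper's proof avoids this trap by never claiming the contaminated estimator is consistent for its nominal target. Working with $\tau>\tau^\star$, it writes the contaminated Yule--Walker estimate explicitly as a mixture of the two regimes' sample autocovariances, and then uses a triangle-inequality dichotomy: since $\|\phi_1^\star-\phi_2^\star\|_2=\xi_2$, the contaminated $\hat\phi_1$ satisfies $\|\hat\phi_1-\phi_1^\star\|_2\ge\xi_2/4$ or $\|\hat\phi_1-\phi_2^\star\|_2\ge\xi_2/4$, and \emph{both} sub-blocks (the pure block of length $\lfloor T\tau^\star\rfloor$ with truth $\phi_1^\star$ and the contaminated block of length $d$ with truth $\phi_2^\star$) are retained in the lower bound for $\mathcal L(\hat\tau)$, so whichever branch of the dichotomy holds yields a quadratic gain proportional to the corresponding block length; Condition~B(b) then rules out the branch in which the gain accrues only on the pure block, leaving the desired $d\,\xi_2^2\lesssim p^2\log(p\vee T)$, with the $p^2\log(p\vee T)$ slack coming from the squared estimation error of the AR fits times block length (your ``boundary slack''). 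The rest of your plan (basic inequality, Gram-matrix lower bound when $d\gtrsim p^2\log p$, sub-Weibull/$\beta$-mixing deviation bounds for $\sum_t Z_t\epsilon_t$ with a union bound, summability for the truncation remainder) matches the paper and is fine; to repair the proposal you should replace the ``$\hat\phi_2$ is consistent'' step by the dichotomy argument and keep, rather than cancel, the pure boundary blocks in the comparison.
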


Theorem \ref{thm:1} provides a near optimal convergence rate of the change point parameter $\tau^\star$ given the nuisance parameter estimators $\hat{\phi}_1$ and $\hat{\phi}_2$ obtained by the Yule-Walker equations in \eqref{eq:5}.

\section{Construction of An Optimal Estimator of \texorpdfstring{$\lfloor T\tau^\star \rfloor$}{}}\label{sec:optimal}
In this section, our aim is to construct the optimal estimator of the change point parameter $\lfloor T\tau^\star \rfloor$ while assuming that the \emph{near optimal} estimator $ \lfloor T\hat{\tau} \rfloor $ is available. Let $\eta^\star = \phi_1^\star - \phi_2^\star \in \mathbb{R}^p$, and define $\xi_2 = \|\eta^\star\|_2$ and $\xi_1 = \|\eta^\star\|_1$. Based on the near optimal estimator $\lfloor T\hat{\tau} \rfloor$, we can partition the time series into $\mathcal{T}(1, \lfloor T \hat{\tau} \rfloor )$ and $\mathcal{T}(\lfloor T \hat{\tau} \rfloor+1, T)$, and separately refit the AR($p$) coefficients $\hat{\phi}_1(\hat{\tau})$ and $\hat{\phi}_2(\hat{\tau})$ using the Yule-Walker equations as in \eqref{eq:5}. To simplify the notation, we omit $\hat{\tau}$ in the estimated coefficients and just denote them by $\hat{\phi}_1$ and $\hat{\phi}_2$.

Consider the squared loss for any $\tau \in (0, 1)$, and $\phi_1, \phi_2 \in \mathbb{R}^p$:
\begin{equation}\label{eq:9}
    Q(\tau; \phi_1, \phi_2) = \frac{1}{T-p+1}\left( \sum_{t=p}^{\lfloor T\tau \rfloor} (X_t - \phi_1^\prime Z_t)^2 + \sum_{t=\lfloor T\tau\rfloor +1}^{T}(X_t - \phi_2^\prime Z_t)^2 \right).
\end{equation}
Under this setup, we define the refitted least squares estimator as:
\begin{equation}
    \tilde{\tau} = \argmin_{\tau \in \lbrace 2/T, 3/T, \ldots,(T-1)/T \rbrace} Q(\tau; \hat{\phi}_1, \hat{\phi}_2).
\end{equation}
The following Algorithm 1 summarizes the procedure to obtain the optimal estimator of the change point parameter $\lfloor T\tau^\star \rfloor$. The main idea is to separate the estimation of the model parameters and the detection of the change points \citep{kaul2021inference,kaul2021inferencenew}. 
\begin{algorithm}[!ht]
    \label{algo:1}
    \caption{{\bf Optimal estimator of $\tau^\star$ detection procedure}}
    \ \KwInput{ Consider the time series data set $\{X_t\}^T_{t=1}$.}
    
    \ {\bf Step 1:} Obtain \emph{near optimal} estimator $\lfloor T\hat{\tau} \rfloor$ according to Section \ref{sec:near-optimal}.
    
    \ {\bf Step 2:} Refit AR($p$) coefficients $\hat{\phi}_1 = \hat{\phi}_1(\hat{\tau})$, and $\hat{\phi}_2 = \hat{\phi}_2(\hat{\tau})$.
    
    \ {\bf Step 3:} Update the estimator for the change point parameter as:
    \begin{equation*}
        \tilde{\tau} = \argmin_{\tau \in \lbrace 2/T, 3/T, \ldots,(T-1)/T \rbrace}Q(\tau; \hat{\phi}_1, \hat{\phi}_2). 
    \end{equation*}
    
    \ \KwOutput{ The final estimated change point parameter is  $ \lfloor T\tilde{\tau} \rfloor $.}
\end{algorithm}

\subsection{Theoretical results of convergence of the change point estimator}\label{sec:optimal-theory}
First, we start by defining a few notation that are essential in this section. For any generic AR($p$) coefficients $\phi_1, \phi_2 \in \mathbb{R}^p$, and $\tau \in (0,1)$, we define:
\begin{equation}
    \mathcal{U}(\tau; \phi_1, \phi_2) = Q(\tau; \phi_1, \phi_2) - Q(\tau^\star; \phi_1, \phi_2),
\end{equation}
where $\tau^\star \in (0,1)$ is the unknown change point parameter and $Q$ is the least squares loss defined in \eqref{eq:9}. For any non-negative sequences $u_T$ and $v_T$, with $0 \leq v_T \leq u_T$, we consider the collection of change point parameters:
\begin{equation}
    \mathcal{G}(u_T, v_T) = \left\{ \tau \in (0,1): Tv_T \leq \Big| \lfloor T\tau \rfloor - \lfloor T\tau^\star \rfloor \Big| < Tu_T \right\}. 
\end{equation}
Next, we present the following lemma that provides a uniform lower bound with respect to $\mathcal{U}(\tau; \hat{\phi}_1, \hat{\phi}_2)$ in the collection $\mathcal{G}(u_T, v_T)$. 
\begin{lemma}\label{lemma:4.1}
Suppose conditions A and B hold. Let $u_T$ and $v_T$ be any nonnegative sequences such that $(1/T) \leq v_T \leq u_T$. For any sufficiently small constant $0 < a< 1$, let $c_{a1} = 2^{2/{\gamma_2}}c_{a2}$ with $c_{a2} \geq \max\left\{ (-c_2\log a)^{1/2}, \sqrt{(1/a)} \right\}$, where the constant $c_2$ depends only on $\gamma_1$, $\gamma_2$, and $c$. Then, we have:
\begin{equation}
    \inf_{\tau\in \mathcal{G}(u_T, v_T)} \mathcal{U}(\tau; \hat{\phi}_1, \hat{\phi}_2) \geq \kappa_{\min}\xi_2^2\left[ v_T - c_{a3}\max\left\{ \left(\frac{u_T}{T}\right)^{\frac{1}{2}}, \frac{u_T}{T^{2b}} \right\} \right],
\end{equation}
with probability at least $1-3a-o(1)$.
\end{lemma}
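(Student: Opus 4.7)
The plan is to decompose $\mathcal{U}(\tau;\hat\phi_1,\hat\phi_2)$ into a deterministic signal term and several stochastic error terms, then bound the latter uniformly over $\mathcal{G}(u_T,v_T)$. Without loss of generality I would take $\tau>\tau^\star$; the other case is symmetric. From the definition of $Q$ in \eqref{eq:9}, telescoping gives
\begin{equation*}
(T-p+1)\,\mathcal{U}(\tau;\hat\phi_1,\hat\phi_2) = \sum_{t=\lfloor T\tau^\star\rfloor+1}^{\lfloor T\tau\rfloor}\Bigl[(X_t-\hat\phi_1'Z_t)^2-(X_t-\hat\phi_2'Z_t)^2\Bigr].
\end{equation*}
On this window the process follows the post-change dynamics, so I would write $X_t=\phi_2^{\star\prime}Z_t+\epsilon_t+r_t$, where $r_t=\sum_{j>p}\phi^\star_{2,j}X_{t-j}$ is the Wold-truncation remainder. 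Expanding the squared residuals and using the identity $\hat\phi_2-\hat\phi_1=-\eta^\star+(\phi_1^\star-\hat\phi_1)-(\phi_2^\star-\hat\phi_2)$, the summand splits into a leading quadratic form $(\eta^{\star\prime}Z_t)^2$, cross terms linear in $\epsilon_t$, terms linear in $r_t$, and plug-in errors weighted by $\hat\phi_j-\phi_j^\star$.

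For the signal, Condition~A(c) gives $\mathbb{E}[(\eta^{\star\prime}Z_t)^2]=\eta^{\star\prime}\Gamma_p^X(0)\eta^\star\geq \kappa_{\min}\xi_2^2$, and a law-of-large-numbers along the mixing sequence yields
\begin{equation*}
\sum_{t=\lfloor T\tau^\star\rfloor+1}^{\lfloor T\tau\rfloor}(\eta^{\star\prime}Z_t)^2 \geq \bigl|\lfloor T\tau\rfloor-\lfloor T\tau^\star\rfloor\bigr|\,\kappa_{\min}\xi_2^2 \geq Tv_T\,\kappa_{\min}\xi_2^2
\end{equation*}
on a high-probability event, producing the leading $\kappa_{\min}\xi_2^2 v_T$ term after division by $T-p+1$. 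The stochastic corrections come in three flavors: (i) martingale-type partial sums $\sum_t\epsilon_t(\eta^{\star\prime}Z_t)$, controlled by a Bernstein inequality for sub-Weibull$(\gamma_2)$ innovations under geometric $\beta$-mixing (Condition~A(a)--(b)), which gives a deviation of order $\xi_2(u_T/T)^{1/2}$ after normalization; (ii) truncation residuals $r_t$, whose contribution is negligible by the absolute summability of $\phi^\star_{2,j}$; and (iii) plug-in errors $\|\hat\phi_j-\phi_j^\star\|_2\cdot\xi_2$, which by the near-optimal rate from Theorem~\ref{thm:1} together with the accompanying Yule--Walker consistency bounds scale as $T^{-2b}$ and so generate the $u_T/T^{2b}$ term. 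Uniformity over $\mathcal{G}(u_T,v_T)$ is obtained via dyadic peeling: slice the set by $2^j\leq |\lfloor T\tau\rfloor-\lfloor T\tau^\star\rfloor|<2^{j+1}$, apply the deviation bound on each of the $O(\log T)$ slices, and union-bound.

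The main obstacle is step~(i): obtaining a sharp Bernstein-type tail for the sub-Weibull$(\gamma_2)$ partial sum under geometric $\beta$-mixing with the correct window-length scaling. I would use a blocking argument (Berbee/Yu coupling) to approximate the mixing sequence by independent blocks at a cost of $\beta(n)\lesssim \exp(-cn^{\gamma_1})$, then apply a sub-Weibull Bernstein inequality at the block level; the constants $c_{a1},c_{a2}$ in the statement, and their explicit dependence on $\gamma_1,\gamma_2,c$, emerge from balancing block length against the tail level $a$. Combining this with the empirical Gram-matrix control needed for the signal lower bound and the plug-in bound from Theorem~\ref{thm:1}, summing the three failure probabilities and absorbing the peeling error into $o(1)$ delivers the overall $1-3a-o(1)$ guarantee.
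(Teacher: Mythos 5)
Your decomposition is the same as the paper's: after telescoping, $\mathcal{U}(\tau;\hat\phi_1,\hat\phi_2)$ is split into a quadratic signal term, a noise cross term $\sum_t \epsilon_t(\cdot)'Z_t$, and a plug-in term weighted by $\hat\phi_2-\phi_2^\star$, each bounded uniformly over $\mathcal{G}(u_T,v_T)$ with the three failure probabilities summing to $3a+o(1)$. The one concrete gap is in your treatment of the signal term. The displayed claim
\begin{equation*}
\sum_{t=\lfloor T\tau^\star\rfloor+1}^{\lfloor T\tau\rfloor}(\eta^{\star\prime}Z_t)^2 \;\geq\; \bigl|\lfloor T\tau\rfloor-\lfloor T\tau^\star\rfloor\bigr|\,\kappa_{\min}\xi_2^2
\end{equation*}
cannot hold with high probability uniformly over the shell: the window length can be as small as $Tv_T$ (possibly a bounded number of terms), and an empirical quadratic form over so few observations can fall below its mean. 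The paper's corresponding bound (Lemma C.3 in the supplement) necessarily subtracts a fluctuation of order $\xi_2^2 K_X^2\sqrt{u_T/T}$, obtained by applying the sub-Weibull$(\gamma_2/2)$ concentration under $\beta$-mixing to the centered variables $(\eta_j^\star Z_{t,j})^2-\mathbb{E}(\eta_j^\star Z_{t,j})^2$; this fluctuation is part of the $c_{a3}\max\{(u_T/T)^{1/2},\,u_T/T^{2b}\}$ correction in the lemma. Your error budget attributes the $(u_T/T)^{1/2}$ correction solely to the $\epsilon_t$ cross term and so misses this contribution. The fix is within your own toolbox (the same blocking/concentration argument applied to the quadratic form), but as written the step fails.

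Two further remarks on where your route genuinely differs from the paper's. For the key cross term $\sum_t\epsilon_t\eta^{\star\prime}Z_t$ you propose Berbee--Yu blocking plus a sub-Weibull Bernstein inequality and dyadic peeling; the paper instead observes that the partial sums form a martingale and applies a Birnbaum--Marshall/Chebyshev-type maximal inequality (its Lemma C.4), which directly yields the supremum over the shell and is exactly why the constant requirement $c_{a2}\geq\sqrt{1/a}$ (polynomial in $1/a$) appears in the statement; your exponential-tail route would give different, and in principle sharper, constants but must be checked not to lose a logarithmic factor through the union bound. Second, your control of the plug-in term via "Theorem 3.1 together with Yule--Walker consistency bounds" is, in substance, an appeal to what the paper formalizes as Condition C combined with Condition B(b); the paper's own proof of this lemma uses those rates (through its Lemmas C.5--C.6) even though the lemma statement lists only Conditions A and B, so you should state explicitly that the refitted-estimator rate of Condition C is what converts the plug-in error into the $T^{-b}$-type factor. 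Your explicit tracking of the Wold truncation remainder $r_t$ is more careful than the paper's expansion, but "absolute summability" alone is not enough; you need the tail-decay rate assumed in Condition C to make its contribution negligible at the required order.
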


We are now ready to state the localization error result for the new estimator.

\begin{theorem}\label{thm:4.1}
Suppose conditions A and B hold. Let the constants $c_{a1}, c_{a2}$, and $c_{a3}>0$ be the same as defined in Lemma \ref{lemma:4.1}. Then, the refitted least squares estimate $\lfloor T\tilde{\tau} \rfloor$ satisfies the following conclusions as $T\to +\infty$:
\begin{itemize}
    \item[(i)] When $\xi_2 \to 0$, we have
    \begin{equation*}
        \kappa_{\min}^2(K_\epsilon \vee K_X)^{-4}\xi_2^2\Big| \lfloor T\tilde{\tau} \rfloor - \lfloor T\tau^\star \rfloor \Big| \leq c_u^2c_{a1}^2,
    \end{equation*}
    with probability at least $1-3a-o(1)$.
    \item[(ii)] When $\xi_2 \not\to 0$, we have
    \begin{equation*}
        \Big| \lfloor T\tilde{\tau}\rfloor - \lfloor T\tau^\star \rfloor \big| \leq c_{a3}^2,
    \end{equation*}
    with probability at least $1-3a-o(1)$. Hence, we have $\Big| \lfloor T\tilde{\tau} \rfloor - \lfloor T\tau^\star \rfloor \Big| = \mathcal{O}_p(1)$.
\end{itemize}
\end{theorem}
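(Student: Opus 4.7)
The plan is to combine the variational optimality of $\tilde{\tau}$ with the uniform lower bound provided by Lemma~\ref{lemma:4.1}. First, I would observe that $\tilde{\tau}$ minimizes $Q(\cdot;\hat{\phi}_1,\hat{\phi}_2)$ over the finite grid and that $\tau^\star$ also lives on this grid (up to the floor function, which is absorbed into $\lfloor T\tau^\star\rfloor$); this immediately delivers the deterministic basic inequality
\[
  \mathcal{U}(\tilde{\tau};\hat{\phi}_1,\hat{\phi}_2) \;=\; Q(\tilde{\tau};\hat{\phi}_1,\hat{\phi}_2)-Q(\tau^\star;\hat{\phi}_1,\hat{\phi}_2)\;\leq\;0,
\]
which plays the role of the upper half of a sandwich on the excess loss.

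Next, I would let $\hat{d}=|\lfloor T\tilde{\tau}\rfloor-\lfloor T\tau^\star\rfloor|$ and apply Lemma~\ref{lemma:4.1} with $v_T=\hat{d}/T$ and $u_T$ chosen just larger, say $u_T=(\hat{d}+1)/T$, so that $\tilde{\tau}\in\mathcal{G}(u_T,v_T)$. Chaining with the basic inequality yields, on an event of probability at least $1-3a-o(1)$,
\[
  0\;\geq\;\kappa_{\min}\xi_2^2\Big[\hat{d}/T-c_{a3}\max\{\sqrt{u_T/T},\,u_T/T^{2b}\}\Big],
\]
a self-bounded inequality in $\hat{d}$. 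When the $\sqrt{u_T/T}$ branch dominates, squaring and cancelling $\hat{d}/T$ from both sides produces $\hat{d}\leq c_{a3}^2$; the $u_T/T^{2b}$ branch would force $T^{2b}\leq c_{a3}$ and is therefore excluded for $T$ large enough under Condition~B. This delivers part (ii).

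For part (i), I would retain the $\xi_2$-dependent pieces that were absorbed into the constant $c_{a3}$ above. Revisiting the derivation of Lemma~\ref{lemma:4.1}, the right-hand side decomposes into a signal term of order $\kappa_{\min}\xi_2^2 v_T$ and two noise contributions: one of order $\xi_2$ (which vanishes as $\xi_2\to 0$) and one of order $(K_\epsilon\vee K_X)^2$ arising from the martingale-difference cross term $\sum_{t}\epsilon_t Z_t^\prime(\hat{\phi}_j-\phi_j^\star)$, whose sub-Weibull Orlicz control produces the constant $c_{a1}=2^{2/\gamma_2}c_{a2}$. When $\xi_2\to 0$ the second noise term dominates, and rebalancing the sandwich produces the scale-free bound $\kappa_{\min}^2(K_\epsilon\vee K_X)^{-4}\xi_2^2\hat{d}\leq c_u^2 c_{a1}^2$, which is exactly (i).

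The main obstacle, and where most of the technical work will sit, is justifying that the refitted estimators $\hat{\phi}_1,\hat{\phi}_2$ are close enough to $\phi_1^\star,\phi_2^\star$ for the two-scale decomposition above to hold uniformly. This will require invoking Theorem~\ref{thm:1} to obtain $|\lfloor T\hat{\tau}\rfloor-\lfloor T\tau^\star\rfloor|=O_p(p^2\log(p\vee T)/\xi_2^2)=o(T)$, then controlling the contamination caused by the $O(p^2\log(p\vee T)/\xi_2^2)$ misplaced observations entering each Yule-Walker system, and finally propagating the resulting concentration bounds through $Q$ using the sub-Weibull tails, $\beta$-mixing, and eigenvalue conditions of Condition~A, all without losing a logarithmic factor that would spoil the $O_p(1/\xi_2^2)$ rate claimed in part~(i).
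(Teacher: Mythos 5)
Your skeleton (basic inequality $\mathcal{U}(\tilde{\tau};\hat{\phi}_1,\hat{\phi}_2)\leq 0$ plus the uniform lower bound of Lemma~\ref{lemma:4.1}) matches the paper's, but the step that actually closes the argument has a genuine gap. You invoke Lemma~\ref{lemma:4.1} with $v_T=\hat{d}/T$ and $u_T=(\hat{d}+1)/T$, where $\hat{d}=|\lfloor T\tilde{\tau}\rfloor-\lfloor T\tau^\star\rfloor|$ is itself random and defined through $\tilde{\tau}$. The lemma is stated and proved only for deterministic sequences $u_T,v_T$: the high-probability event behind it (the concentration and martingale maximal inequalities in the auxiliary lemmas) is built for a fixed pair $(u_T,v_T)$ and its bound scales like $\sqrt{u_T/T}$, so you cannot substitute data-dependent radii without a uniformity argument over the possible values of $\hat{d}$ (e.g.\ dyadic peeling plus a union bound, which would alter the constants or the $1-3a-o(1)$ probability). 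This is precisely the difficulty the paper's proof is engineered to avoid: it applies the lemma first with the crude deterministic choice $u_T=1$, deduces a deterministic localization radius $Tv_T^\star=c_{a3}T^{1-b}$, then resets $u_T$ to that deterministic radius and reapplies the lemma, iterating the recursion (with explicit exponent sequences $g_m,u_m,v_m$) infinitely many times so the radius contracts to $c_{a3}^2$ on an event of probability at least $1-3a-o(1)$. Your one-shot ``self-bounding'' shortcut skips exactly this bootstrapping and is not justified as written.

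Two smaller points. For part (i), your plan to reopen the proof of Lemma~\ref{lemma:4.1} and ``rebalance'' the $\xi_2$-dependent terms is not an argument as stated; in the paper part (i) is immediate from part (ii): once $|\lfloor T\tilde{\tau}\rfloor-\lfloor T\tau^\star\rfloor|\leq c_{a3}^2$, the quantity $\kappa_{\min}^2(K_\epsilon\vee K_X)^{-4}\xi_2^2\,c_{a3}^2$ tends to zero when $\xi_2\to 0$ and is therefore eventually below the fixed constant $c_u^2c_{a1}^2$. Finally, the ``main obstacle'' you flag (accuracy of the refitted $\hat{\phi}_1,\hat{\phi}_2$ and contamination from the initial estimator) does not need to be re-derived at the level of this theorem: it is already packaged inside Lemma~\ref{lemma:4.1} through the nuisance-rate condition (Condition C and Lemma C.6 in the supplement), so the theorem's proof only needs the recursion described above.
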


Theorem~\ref{thm:4.1} quantifies the localization error to estimate the change point, and its rate is improved compared to the result of Theorem~\ref{thm:1} by removing terms related to the estimation error of autoregressive parameters, i.e. $p^2\log (p \vee T)$. The new rate only depends reciprocally on the defined jump size $\xi_2$, which is considered the optimal localization error rate \citep{basseville1993detection,kaul2021inferencenew}.

\subsection{Asymptotic distribution of the change point estimator}
Before we continue to work on the asymptotic distribution of the change point estimator, the following additional assumptions are required. 

\noindent
{\bf Condition C} (On nuisance parameter estimators): For a fixed positive integer $p$, let $p$-dimensional vectors $\hat{\phi}_1$ and $\hat{\phi}_2$ be the estimators of Wold representation coefficient vectors $\phi_1^\star$ and $\phi_2^\star$, respectively. Then, the following bound should hold:
\begin{equation}
    \|\hat{\phi}_1 - \phi_1^\star\|_2 \vee \|\hat{\phi}_2 - \phi_2^\star\|_2 \leq c_u\sqrt{1+\nu^2}\frac{\sigma^2}{\kappa_{\min}}\sqrt{\frac{p^2\log (p\vee T)}{Tl_T}},
\end{equation}
with probability at least $1 - o(1)$, where $l_T$ is a sequence defined in Condition B(a), and $c_u>0$ is some universal large enough constant. Moreover, the following condition on the tail parameters of the Wold representation is required. Suppose $p$ is the fixed order of autoregressive model, then the tail lags of the true time series with respect to the Wold representation satisfies:
\begin{equation*}
    \sum_{k=p+1}^{\infty}(1+k)^r|\phi_{j,k}^\star| \leq \sqrt{\frac{p\log (p\vee T)}{T}},\quad j=1,2,
\end{equation*}
where $r \geq 0$ is a constant.

\noindent
{\bf Condition D} (On the asymptotic distribution): (a) Given the covariance matrices $\Sigma_Z^1 = \mathbb{E}(Z_tZ_t^\prime)$ for $t=1,2,\dots, \lfloor T\tau^\star \rfloor$ and $\Sigma_Z^2 = \mathbb{E}(Z_tZ_t^\prime)$ for $t=\lfloor T\tau^\star\rfloor+1, \dots, T$, we have the following limits:
\begin{equation*}
    \xi_2^{-2}(\eta^{\star^T}\Sigma_Z^1 \eta^\star) \to \sigma_1^2\quad \text{and}\quad \xi_2^{-2}(\eta^{\star^T}\Sigma_Z^2 \eta^\star) \to \sigma_2^2,
\end{equation*}
where $0 < \sigma_1^2, \sigma_2^2 < \infty$. 

(b) For error terms $\epsilon_t, \tilde{\epsilon}_t$, $t=1,2,\dots, T$, we assume that
\begin{equation*}
    \begin{aligned}
        &\xi_2^{-2}\text{Var}\left(\epsilon_t \eta^{\star^\prime} Z_t\right) \to \sigma_1^{\star^2},\quad \text{for }t=1, \dots, \lfloor T\tau^\star \rfloor, \\
        &\xi_2^{-2}\text{Var}\left(\tilde{\epsilon}_t \eta^{\star^\prime} Z_t\right) \to \sigma_2^{\star^2},\quad \text{for }t=\lfloor T\tau^\star\rfloor + 1,\dots, T,
    \end{aligned}
\end{equation*}
where $0 < \sigma_1^{\star^2}, \sigma_2^{\star^2} < +\infty$.

\begin{remark}
Condition C provides the rate of the nuisance parameter estimators and an upper bound rate for the tail parameters in the Wold representation. Also, the finite quantities $\sigma_1^2$ and $\sigma_2^2$ in condition D serve as the variance parameters of the limiting process. Similar conditions can be found in \cite{kaul2021inference}, \cite{wang2021consistent}, and \cite{kaul2021inferencenew}.
\end{remark}

Finally, we consider the following process:
\begin{equation}\label{eq:asymp-dist}
    Z(r) = 
    \begin{cases}
    2W_1(-r) + r, &\quad \text{if } r < 0, \\
    0, &\quad \text{if } r = 0, \\
    \frac{2\sigma_2^\star}{\sigma_1^\star}W_2(r) - \frac{\sigma_2^2}{\sigma_1^2}r, &\quad \text{if } r > 0,
    \end{cases}
\end{equation}
where $W_1(r)$ and $W_2(r)$ are two independent Brownian motions defined on $[0, +\infty)$, and $0 < \sigma_1, \sigma_2, \sigma_1^\star, \sigma_2^\star < +\infty$ are parameters defined in Condition D which determine the variance as well as the drift of the process $Z(r)$. The density of this limiting distribution is available in closed form in \cite{bai1997estimation}.

\begin{theorem}\label{thm:4.2}
Suppose Conditions A--D hold. Further, assume that $\xi_2 \to 0$ while satisfying
\begin{equation*}
    \frac{1}{\xi_2^2}\frac{p^2\log(p\vee T)}{Tl_T} = o(1).
\end{equation*}
Then, the change point estimator $\tilde{\tau}$ has the following asymptotic distribution:
\begin{equation}
    T\sigma_1^4\sigma_1^{\star^{-2}}\xi_2^2(\tilde{\tau} - \tau^\star) \overset{d}{\to} \argmax_{r\in \mathbb{R}}Z(r),
\end{equation}
where $Z(r)$ is defined in \eqref{eq:asymp-dist}.
\end{theorem}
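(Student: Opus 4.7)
The plan is to derive the limiting distribution through a rescaled two-sided process expansion of $\mathcal{U}(\tau;\hat\phi_1,\hat\phi_2)$, followed by the continuous mapping theorem for the argmax functional, in the spirit of Bai (1997) and \cite{kaul2021inference}. First, I would invoke Theorem~\ref{thm:4.1}(i) to restrict attention to a shrinking neighborhood $|\lfloor T\tau\rfloor - \lfloor T\tau^\star\rfloor| \leq M/\xi_2^2$ for arbitrary $M$, then reparametrize via $s = \lfloor T\tau\rfloor - \lfloor T\tau^\star\rfloor$ and $r = \sigma_1^4\sigma_1^{\star -2}\xi_2^2 s$, so that $r$ lives in the compact set $[-M',M']$.

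The second step is an algebraic expansion. For $s>0$, the only contributing terms to $\mathcal{U}(\tau;\hat\phi_1,\hat\phi_2)$ come from the indices $t\in(\lfloor T\tau^\star\rfloor,\lfloor T\tau\rfloor]$, for which I use the identity
\begin{equation*}
(X_t-\hat\phi_1'Z_t)^2-(X_t-\hat\phi_2'Z_t)^2 \;=\; -2(X_t-\hat\phi_2'Z_t)\,\hat\eta'Z_t + (\hat\eta'Z_t)^2,\qquad \hat\eta=\hat\phi_1-\hat\phi_2,
\end{equation*}
and substitute $X_t = \phi_2^{\star'}Z_t + \tilde\epsilon_t + R_t$, where $R_t$ is the truncation remainder of the Wold representation governed by the tail bound in Condition~C. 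A symmetric expansion handles $s<0$ with $(\epsilon_t,\phi_1^\star)$ in place of $(\tilde\epsilon_t,\phi_2^\star)$. Expanding and grouping terms produces a leading stochastic piece $-2\sum \tilde\epsilon_t\,\eta^{\star'}Z_t$, a quadratic drift $\sum(\eta^{\star'}Z_t)^2$, and remainders involving $\hat\eta-\eta^\star$, $R_t$, and $\hat\phi_2-\phi_2^\star$. Using Condition~C to control $\|\hat\eta-\eta^\star\|_2$ and $\sum|R_t|$, and Condition~A together with the sub-Weibull moment bounds to control the associated cross products via maximal inequalities for partial sums under $\beta$-mixing, I would show these remainders are $o_p(1)$ uniformly in $|r|\leq M'$.

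Third, for the two leading terms I would establish process-level limits on the rescaled time scale. For the quadratic drift, ergodicity under the mixing hypothesis in Condition~A(b) and the variance limit in Condition~D(a) yield
\begin{equation*}
\xi_2^{-2}\sum_{t=\lfloor T\tau^\star\rfloor+1}^{\lfloor T\tau^\star\rfloor+s}(\eta^{\star'}Z_t)^2 \;=\; s\,\sigma_2^2\,(1+o_p(1)),
\end{equation*}
which after the chosen rescaling gives the drift $(\sigma_2^2/\sigma_1^2)\,r$ in the right branch of $Z(r)$. For the stochastic piece, because $\tilde\epsilon_t$ is i.i.d.\ and independent of $Z_t$ (Condition~A(a)), $\tilde\epsilon_t\,\eta^{\star'}Z_t$ forms a martingale difference array; combining the variance identification in Condition~D(b) with a martingale functional CLT and a Cramér–Wold device in the rescaled time index delivers
\begin{equation*}
\xi_2^{-1}\sum_{t=\lfloor T\tau^\star\rfloor+1}^{\lfloor T\tau^\star\rfloor+\lfloor r\sigma_1^{\star 2}/(\sigma_1^4\xi_2^2)\rfloor}\tilde\epsilon_t\,\eta^{\star'}Z_t \;\Rightarrow\; \sigma_2^\star\,W_2(r)/\sigma_1^\star\cdot(\text{const})
\end{equation*}
on $[0,M']$, with an analogous independent limit $\sigma_1^\star W_1(-r)/\sigma_1^\star$ on the left branch. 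The two limits are independent because they arise from disjoint time indices with the pre- and post-change innovations being independent.

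Fourth and finally, combining steps two and three yields that the rescaled objective $-\sigma_1^2\,(T-p+1)\mathcal{U}(\tau^\star+s/T;\hat\phi_1,\hat\phi_2)$ converges in the Skorokhod topology on compact sets to $Z(r)$ defined in \eqref{eq:asymp-dist}. Since the rescaled estimator is tight by Theorem~\ref{thm:4.1}(i), the argmax continuous mapping theorem (Kim–Pollard, or Seijo–Sen) transfers the weak convergence to $\arg\max_{r\in\mathbb{R}} Z(r)$, completing the proof. The main obstacle will be the third step: the martingale FCLT must be applied to an array whose summands depend on $T$ through both the growing order $p$ and the vanishing jump $\eta^\star$, on a time horizon $r/\xi_2^2\to\infty$. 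Verifying the conditional Lindeberg condition and the convergence of the predictable quadratic variation on this diverging horizon requires carefully combining the sub-Weibull tails of $\eta^{\star'}Z_t/\xi_2$ with the $\beta$-mixing decay to control block dependence, and is the place where assumption $\xi_2^{-2}\,p^2\log(p\vee T)/(Tl_T)=o(1)$ is tightly used to ensure that estimation and truncation errors do not swamp the signal.
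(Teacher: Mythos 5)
Your proposal follows essentially the same route as the paper's proof: localization via Theorem~\ref{thm:4.1}, uniform replacement of $(\hat{\phi}_1,\hat{\phi}_2)$ by $(\phi_1^\star,\phi_2^\star)$ in the criterion (the paper's Lemma~C.7), a law-of-large-numbers limit $s\sigma_2^2$ for the quadratic drift, a martingale CLT for $\sum\epsilon_t\eta^{\star\prime}Z_t$ using Condition~D, independence of the two branches, and the argmax continuous mapping theorem. The only cosmetic difference is that you build the $\sigma_1^4\sigma_1^{\star^{-2}}$ rescaling into the time reparametrization from the start, whereas the paper first obtains the limit $G(r)$ with branches $2\sigma_j^\star W_j \mp \sigma_j^2 r$ and then reduces to $Z(r)$ by a Brownian scaling identity at the end; both are equivalent.
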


Theorem~\ref{thm:4.2} allows the construction of a confidence interval for the location of the change point, especially because the density of $Z(r)$ is available in closed form \citep{bai1997estimation}.

\section{Simulation Studies} \label{sec:simulations}

In this section, we evaluate the empirical performance of our methodology for a variety of scenarios. The metrics used for the evaluations include absolute bias (AB): $E[|\lfloor T\tau \rfloor - \lfloor T\tau^\star \rfloor|]$ and root mean squared error (RMSE): $\sqrt{E[(\lfloor T\tau \rfloor - \lfloor T\tau^\star \rfloor)^2]}$, where $\lfloor T\tau \rfloor$ is a generic estimator for the location of the change point. Both are computed numerically over 100 replications. In addition, for inference, we report the empirical coverage compared to the nominal ones (e.g. $90\%, 95\%, 99\%$) for the confidence intervals. We consider the following simulation scenarios inspired by \cite{wang2021consistent}:

\begin{itemize}
\item Scenario (I): before change point: MA(1) process $X_t = \epsilon_t + \theta \epsilon_{t-1}$ where $\theta \in  \{-0.9,0.9\}$; after change point: non-linear process $X_t= \phi |X_{t-1}|+ \epsilon_t$ where $\phi \in \{-0.5,0.5\}$.
\item Scenario (II): before change point: MA(1) process $X_t = \epsilon_t + \theta \epsilon_{t-1}$ where $\theta = -0.9$; after change point: AR(1) process $X_t=\phi X_{t-1}+\epsilon_t$ where $\phi=0.5$.
\item Scenario (III): before change point: AR(3) process $X_t=0.9 X_{t-1}-0.5 X_{t-2}+0.3 X_{t-3}+\epsilon_t$; after change point: AR(1) process $X_t=\phi X_{t-1}+\epsilon_t$ where $\phi= -0.9$.
\item Scenario (IV): before change point: MA(1) process $X_t = \epsilon_t + \theta \epsilon_{t-1}$ where $\theta = -0.9$; after change point: AR(3) process $X_t=0.9 X_{t-1}-0.5 X_{t-2}+0.3 X_{t-3}+\epsilon_t$.
\item Scenario (V): before change point: AR(3) process $X_t=0.9 X_{t-1}-0.5 X_{t-2}+0.3 X_{t-3}+\epsilon_t$; after change point: non-linear process $X_t= \phi |X_{t-1}|+ \epsilon_t$, where $\phi \in \{-0.5,0.5\}$.
\end{itemize}


For all cases, we generate the error term $\{ \epsilon_t \}$ as normally distributed from $N(0,\sigma^2)$, where $\sigma \in \{ 0.1,1\}$. We present the results for two sample sizes $T=500,1000$. We consider multiple true change point locations as $ \lfloor T \tau^\star \rfloor \in \{\lfloor T/3 \rfloor, \lfloor T/2 \rfloor, \lfloor 2T/3 \rfloor, \lfloor 4T/5 \rfloor \}$. 
After generating each time series, we used \texttt{AIC} criterion to select the lag $p$ for the time series before and after the change point following the suggestions in  \cite{wang2021consistent}.
The plots of the true spectral densities of the simulation scenarios before and after the change points are shown in Figure \ref{Fig.Spectrum}. 
Further, we present the results for $T=500$ in Tables \ref{Tab.settingI} -- \ref{Tab.settingV}.

The results show that when the jump size is relatively large, we can easily detect the change point, leading to a smaller AB and RMSE as well as more accurate confidence coverage compared to nominal levels. The most complicated setting is scenario V in which the spectral densities before and after the change point have a rather similar pattern (bottom panels in Figure \ref{Fig.Spectrum}) which, as expected, results in relatively larger detection error as seen in say AB in Table~\ref{Tab.settingV}. This also makes the confidence coverages to be slightly lower than the nominal levels (due to bias from estimating the change point location properly).

In Figure \ref{Fig.CP_CaseI}, we show some of the coverage probability (CP) results when the sample size is $T=1000$ for Scenario I. These results confirm that the coverage probabilities are closely aligned with the nominal levels. 
Additional simulation results for both sample sizes 500 and 1000 are summarized in the Supplementary Material (Section~D).

\begin{table}[ht]
\footnotesize
\caption{Performance of the model for scenario I with $T=500$. } \label{Tab.settingI}
\centering
\setlength{\tabcolsep}{1pt} 
\begin{tabular}{@{} c|cccccccc @{}}
\hline
Coefficient & Truth & AB ($\lfloor T\hat{\tau} \rfloor$) & AB ($\lfloor T\tilde{\tau} \rfloor$) & RMSE ($\lfloor T\hat{\tau} \rfloor$) & RMSE ($\lfloor T\tilde{\tau} \rfloor$) & 90$\%$ CP 
($\lfloor T\tilde{\tau} \rfloor$) & 95$\%$ CP ($\lfloor T\tilde{\tau} \rfloor$) & 99$\%$ CP ($\lfloor T\tilde{\tau} \rfloor$) \\
\hline\hline
& $\lfloor T/3 \rfloor$ & 4.380 & 4.000 & 6.927 & 6.442 & 0.930 & 0.950 & 0.980 \\
$\theta=-0.9$ & $\lfloor T/2 \rfloor$ & 4.030 & 3.770 & 6.684 & 6.199 & 0.960 & 0.980 & 0.980 \\
$\phi=-0.5$ & $\lfloor 2T/3 \rfloor$ &  3.630 & 3.330 & 5.529 & 5.037 & 0.950 & 0.980 & 0.990 \\
& $\lfloor 4T/5 \rfloor$ &  3.130 & 2.340 &  5.871 & 3.842 & 0.980 & 0.990 & 1.000 \\
\hline\hline
& $\lfloor T/3 \rfloor$ & 8.930 & 8.560 & 19.172 & 19.141 & 0.970 & 0.980 & 0.980 \\
$\theta=0.9$ & $\lfloor T/2 \rfloor$ & 6.810 & 6.240 & 11.822 & 11.426 & 0.960 & 0.970 & 0.990 \\
$\phi=-0.5$ & $\lfloor 2T/3 \rfloor$ & 7.940 & 7.350 & 14.313 & 14.111 & 0.930 & 0.950 & 0.970 \\
& $\lfloor 4T/5 \rfloor$ & 7.160 & 6.850 & 11.760 & 11.173 & 0.910 & 0.930 & 0.960 \\
\hline\hline
& $\lfloor T/3 \rfloor$ & 4.210 & 3.820 & 6.821 & 6.279 & 0.940 & 0.960 & 0.970 \\
$\theta=-0.9$ & $\lfloor T/2 \rfloor$ & 4.740 & 4.220 & 8.021 & 7.268 & 0.950 & 0.970 & 0.970 \\
$\phi=0.5$ & $\lfloor 2T/3 \rfloor$ & 4.230 & 3.930 & 7.560 & 7.197 & 0.930 & 0.960 & 0.980 \\
& $\lfloor 4T/5 \rfloor$ &  4.300 & 3.260 & 9.103 & 5.389 & 0.940 & 0.960 & 0.990 \\
\hline\hline
& $\lfloor T/3 \rfloor$ & 10.210 & 9.860 &  22.120 & 21.882 & 0.940 & 0.960 & 0.970 \\
$\theta=0.9$ & $\lfloor T/2 \rfloor$ & 7.050 & 6.530 & 11.630 & 10.766 & 0.970 & 0.980 & 1.000 \\
$\phi=0.5$ & $\lfloor 2T/3 \rfloor$ & 7.940 & 6.630 & 14.291 & 11.404 & 0.940 & 0.960 & 0.990 \\
& $\lfloor 4T/5 \rfloor$ &  9.480 & 7.820 & 15.726 & 12.820 & 0.910 & 0.960 & 0.980 \\
\hline
\end{tabular}
\end{table}

\begin{table}[ht]
\footnotesize
\caption{Performance of the model for scenario II with $T=500$.} \label{Tab.settingII}
\centering
\setlength{\tabcolsep}{1pt} 
\begin{tabular}{@{} c|cccccccc @{}}
\hline
Coefficient & Truth & AB ($\lfloor T\hat{\tau} \rfloor$) & AB ($ \lfloor T\tilde{\tau} \rfloor$) & RMSE ($\lfloor T\hat{\tau} \rfloor$) & RMSE ($\lfloor T\tilde{\tau} \rfloor$) & 90$\%$ CP 
($ \lfloor T\tilde{\tau} \rfloor$) & 95$\%$ CP ($\lfloor T \tilde{\tau} \rfloor$) & 99$\%$ CP ($\lfloor T\tilde{\tau} \rfloor$) \\\hline\hline
& $\lfloor T/3 \rfloor$ & 2.680 & 2.450 & 4.693 & 4.403 & 0.890 & 0.900 & 0.940 \\
$\theta=-0.9$ & $\lfloor T/2 \rfloor$ & 2.380 & 2.480 & 3.945 & 4.268 & 0.900 & 0.920 & 0.940 \\
$\phi=0.5$ & $\lfloor 2T/3 \rfloor$ & 2.470 & 2.130 & 3.759 & 3.318 & 0.910 & 0.930 & 0.980 \\
& $\lfloor 4T/5 \rfloor$ & 2.030 & 1.970 & 3.872 & 3.678 & 0.900 & 0.950 & 0.970 \\
\hline
\end{tabular}
\end{table}

\begin{table}[ht]
\footnotesize
\caption{Performance of the model for scenario III with $T=500$.} \label{Tab.settingIII}
\centering
\setlength{\tabcolsep}{1pt} 
\begin{tabular}{@{} c|cccccccc @{}}
\hline
Coefficient & Truth & AB ($\lfloor T\hat{\tau} \rfloor$) & AB ($\lfloor T\tilde{\tau} \rfloor$) & RMSE ($\lfloor T\hat{\tau} \rfloor$) & RMSE ($\lfloor T\tilde{\tau} \rfloor$) & 90$\%$ CP 
($\lfloor T\tilde{\tau} \rfloor$) & 95$\%$ CP ($\lfloor T\tilde{\tau} \rfloor$) & 99$\%$ CP ($\lfloor T\tilde{\tau} \rfloor$) \\\hline\hline
& $\lfloor T/3 \rfloor$ & 0.950 & 0.950 & 1.473 & 1.578 & 0.930 & 0.940 & 0.950 \\
$\phi=-0.9$ & $\lfloor T/2 \rfloor$ & 1.440 & 1.420 & 2.437 & 2.433 & 0.840 & 0.850 & 0.880 \\
& $\lfloor 2T/3 \rfloor$ & 1.110 & 1.200 & 1.694 & 1.871 & 0.900 & 0.910 & 0.930 \\
& $\lfloor 4T/5 \rfloor$ &  1.870 & 1.630 & 3.372 & 2.655 & 0.780 & 0.810 & 0.860 \\
\hline
\end{tabular}
\end{table}

\begin{table}[ht]
\footnotesize
\caption{Performance of the model for scenario IV with $T=500$.} \label{Tab.settingIV}
\centering
\setlength{\tabcolsep}{1pt} 
\begin{tabular}{@{} c|cccccccc @{}}
\hline
Coefficient & Truth & AB ($\lfloor T\hat{\tau} \rfloor$) & AB ($\lfloor T\tilde{\tau} \rfloor$) & RMSE ($\lfloor T\hat{\tau} \rfloor$) & RMSE ($\lfloor T \tilde{\tau} \rfloor$) & 90$\%$ CP 
($\lfloor T\tilde{\tau} \rfloor$) & 95$\%$ CP ($\lfloor T\tilde{\tau} \rfloor$) & 99$\%$ CP ($\lfloor T\tilde{\tau} \rfloor$) \\\hline\hline
& $\lfloor T/3 \rfloor$ & 1.380 & 1.440 & 2.691 & 2.728 & 0.910 & 0.910 & 0.920 \\
$\theta=-0.9$ & $\lfloor T/2 \rfloor$ & 1.740 & 1.760 & 2.768 & 2.775 & 0.810 & 0.830 & 0.900 \\
& $\lfloor 2T/3 \rfloor$ & 1.810 & 1.680 & 2.968 & 2.612 & 0.830 & 0.870 & 0.910 \\
& $\lfloor 4T/5 \rfloor$ & 1.420 & 1.410 & 2.276 & 2.296 & 0.870 & 0.890 & 0.930 \\
\hline
\end{tabular}
\end{table}

\begin{table}[ht]
\footnotesize
\caption{Performance of the model for scenario V with $T=500$.} \label{Tab.settingV}
\centering
\setlength{\tabcolsep}{1pt} 
\begin{tabular}{@{} c|cccccccc @{}}
\hline
Coefficient & Truth & AB ($\lfloor T\hat{\tau} \rfloor$) & AB ($\lfloor T\tilde{\tau} \rfloor$) & RMSE ($\lfloor T\hat{\tau} \rfloor$) & RMSE ($\lfloor T\tilde{\tau} \rfloor$) & 90$\%$ CP 
($\lfloor T\tilde{\tau} \rfloor$) & 95$\%$ CP ($\lfloor T \tilde{\tau} \rfloor$) & 99$\%$ CP ($\lfloor T\tilde{\tau} \rfloor$) \\\hline\hline
& $\lfloor T/3 \rfloor$ & 16.690 & 17.640 & 38.977 & 39.789 & 0.890 & 0.920 & 0.960 \\
$\phi=-0.5$ & $\lfloor T/2 \rfloor$ & 10.840 & 9.490 & 20.429 & 18.639 & 0.930 & 0.960 & 0.970 \\
& $\lfloor 2T/3 \rfloor$ & 8.260 & 8.390 & 13.854 & 14.798 & 0.910 & 0.950 & 0.990  \\
& $\lfloor 4T/5 \rfloor$ & 8.500 & 7.550 & 14.310 & 11.985 & 0.890 & 0.920 & 0.960 \\
\hline\hline
& $\lfloor T/3 \rfloor$ & 15.100 & 14.070 & 30.370 & 29.970 & 0.890 & 0.900 & 0.950 \\
$\phi=0.5$ & $\lfloor T/2 \rfloor$ & 9.470 & 9.000 & 17.434 & 17.214 & 0.960 & 0.970 & 0.980 \\
& $\lfloor 2T/3 \rfloor$ & 8.860 & 8.340 &  15.677 & 15.867 & 0.900 & 0.930 & 0.960 \\
& $\lfloor 4T/5 \rfloor$ & 12.140 & 10.280 & 18.489 & 15.405 & 0.810 & 0.880 & 0.940 \\
\hline
\end{tabular}
\end{table}

\begin{figure}[ht]
\centering
\begin{tabular}{ cc }
\includegraphics[width=0.5\textwidth]{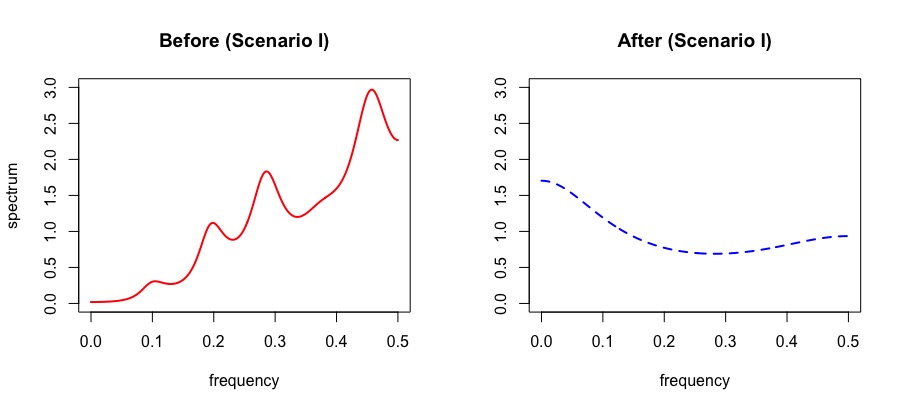} &
\includegraphics[width=0.5\textwidth]{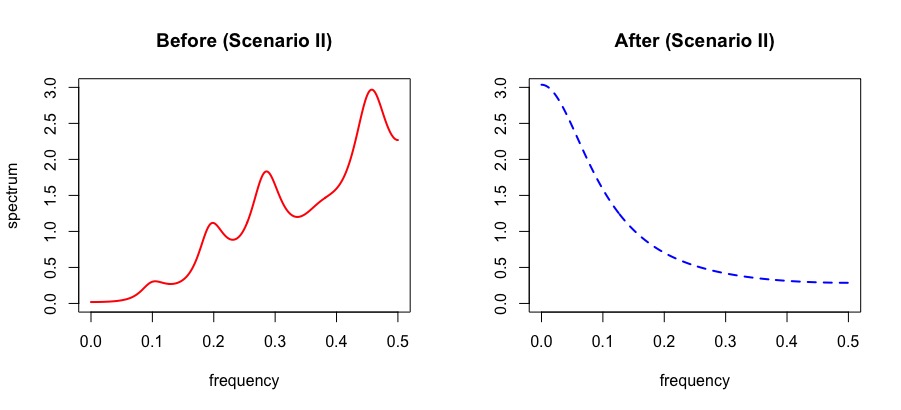} \\
\includegraphics[width=0.5\textwidth]{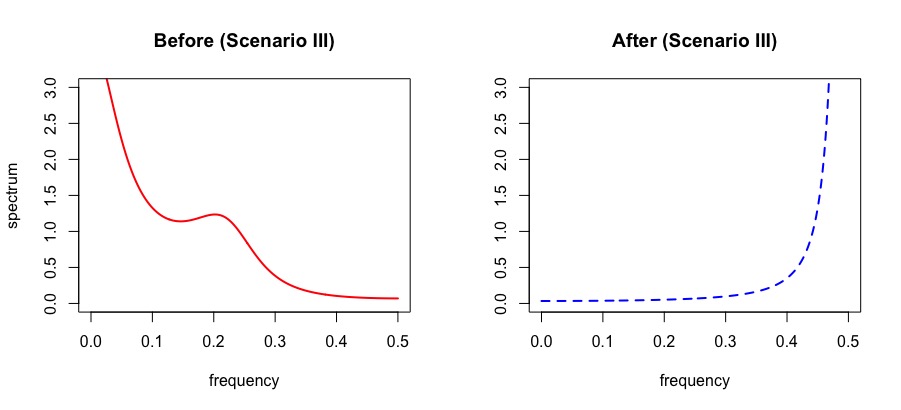} &  
\includegraphics[width=0.5\textwidth]{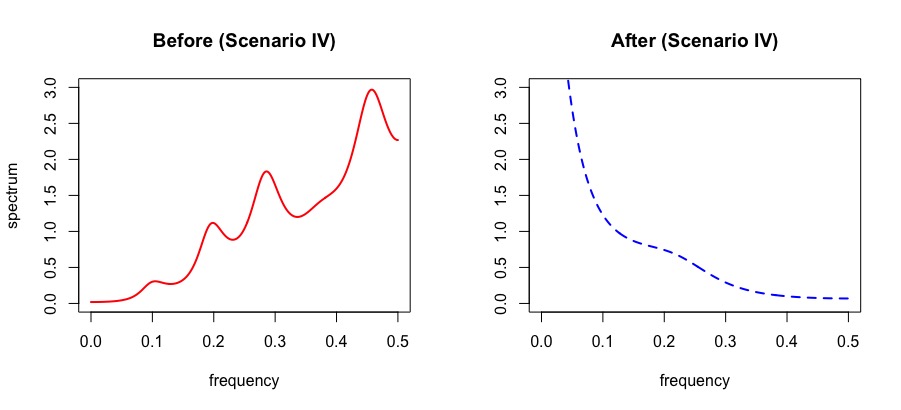} \\
\includegraphics[width=0.5\textwidth]{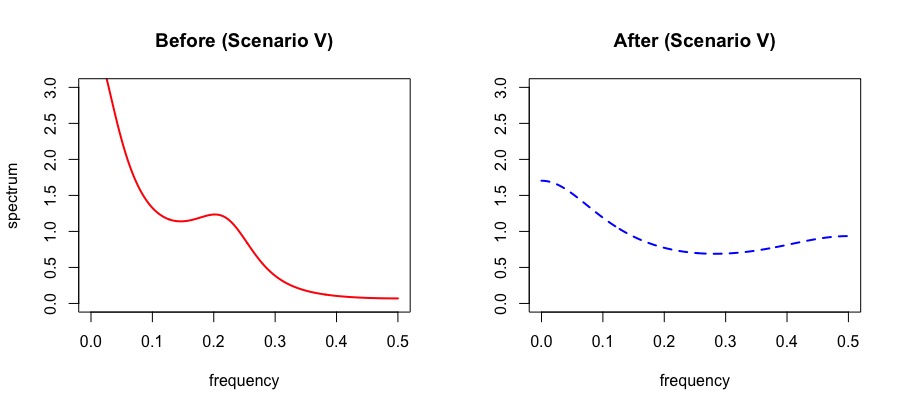}
\end{tabular}
\caption{Spectral densities of selected simulation scenarios for before and after change points. The values of coefficients for all the scenarios are as follows: Scenario (I) $\theta=-0.9$ and $\phi=-0.5$, Scenario (II) $\theta=-0.9$ and $\phi=0.5$, Scenario (III) $\phi=-0.9$, Scenario (IV) $\theta=-0.9$, and Scenario (V) $\phi=-0.5$.}
\label{Fig.Spectrum}
\end{figure}

\begin{figure}[ht]
\centering
\begin{tabular}{ cc }
\includegraphics[width=0.4\textwidth]{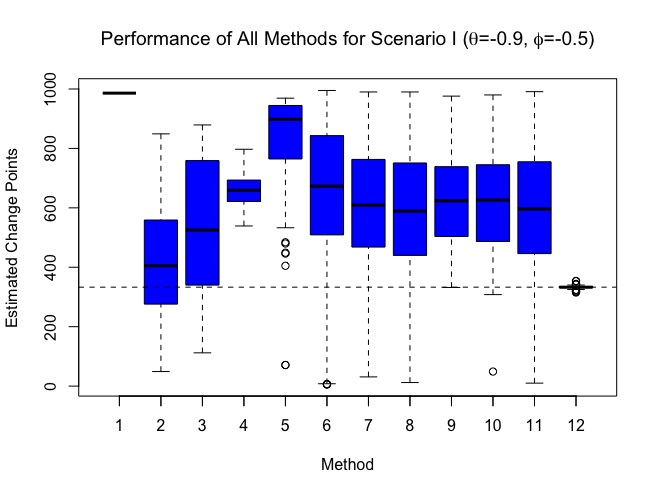} &
\includegraphics[width=0.4\textwidth]{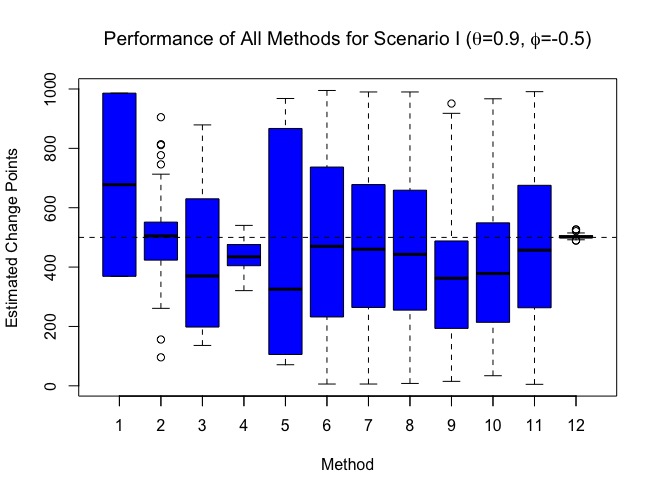} \\
\includegraphics[width=0.4\textwidth]{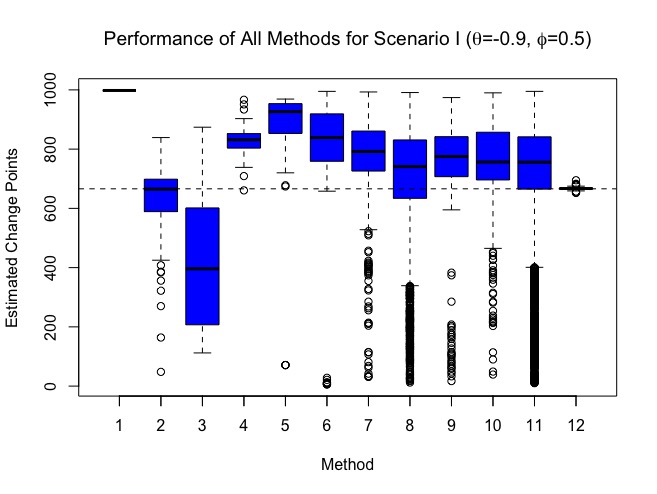} &
\includegraphics[width=0.4\textwidth]{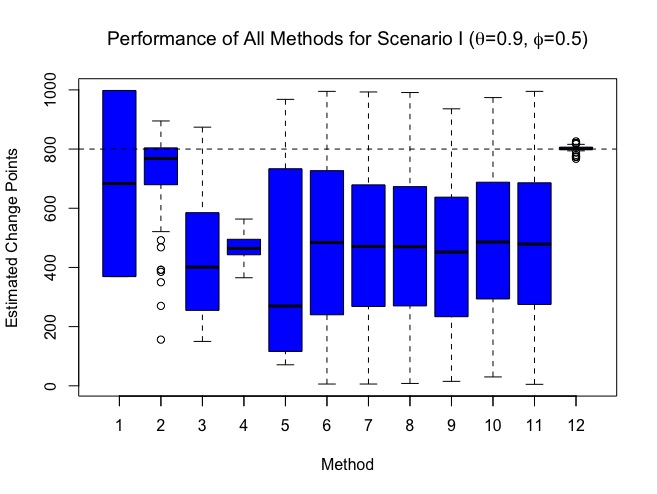}
\end{tabular}
\caption{Distributions of estimated change points for all the methods, where the ``true" values are specified with the dashed lines per case. Note that number 12 is our proposed method. Additionally, methods 1 and 3 have significant number of missing values. From top-left panel to bottom-right panel, the missing percentages are: (1) PELT $(99,98,99,98)\%$ and (3) ARC $(78,73,84,67)\%$.}
\label{Fig.comparing_alters}
\end{figure}

\begin{figure}[ht]
\centering
\begin{tabular}{ cc }
\includegraphics[width=0.5\textwidth]{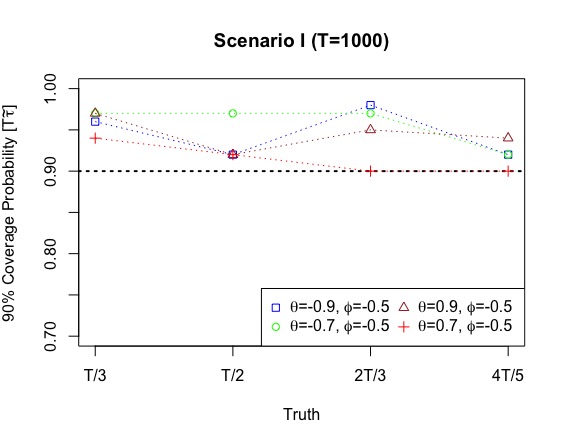} &
\includegraphics[width=0.5\textwidth]{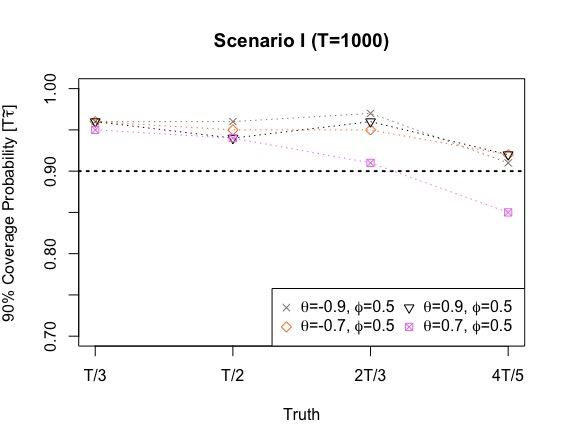} \\
\includegraphics[width=0.5\textwidth]{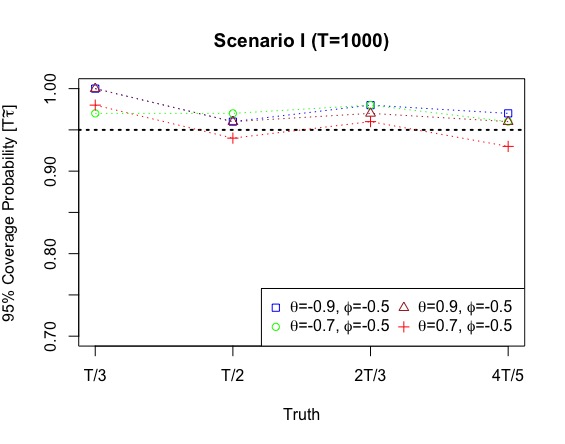} &
\includegraphics[width=0.5\textwidth]{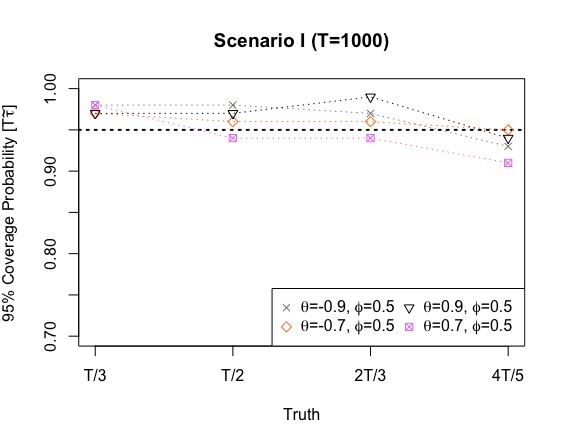} \\
\includegraphics[width=0.5\textwidth]{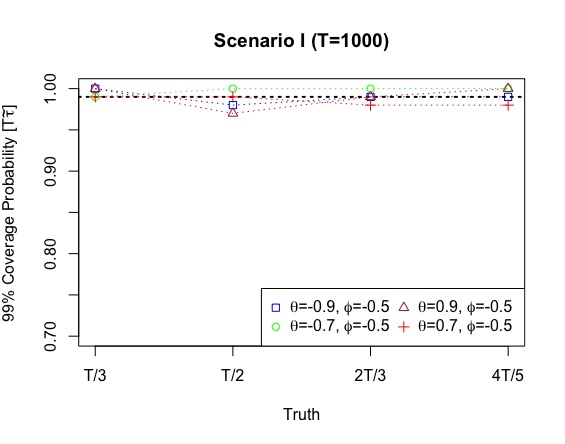} &
\includegraphics[width=0.5\textwidth]{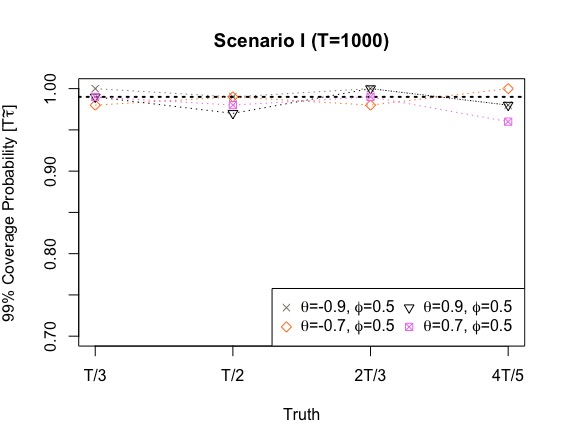} 
\end{tabular}
\caption{($90\%$, $95\%$, $99\%$) coverage probability in Scenario I with $T=1000$.}
\label{Fig.CP_CaseI}
\end{figure}

\subsection{Comparisons with some existing methods} \label{sec:sim_alters}

In this section, we report empirical comparisons with some alternative methods given in the literature and divide the section into two parts based on estimation and inference.

\noindent \textbf{Estimation Comparisons.} To make point estimation comparisons, we consider the following methods:
(1) pruned exact linear time (PELT) algorithm proposed by \cite{killick2012optimal};
(2) volatility change point estimator for diffusion processes based on the least squares (SDE) method proposed by \cite{iacus2008simulation};
(3) adversarially robust univariate mean change point detection (ARC) method proposed by \cite{li2021adversarially};
(4) dynamic programming method for the detection of univariate mean change points (DP-UNIVAR) proposed by \cite{wang2020univariate};
(5) local refinement for the univariate mean change point detection (LOCAL) method \citep{changepoints_package}; (6) standard binary segmentation for the univariate mean change point detection (STD) method \citep{changepoints_package}; 
(7) wild binary segmentation for the univariate mean change point detection (WBS UNI) method   \citep{changepoints_package};
(8) wild binary segmentation for univariate nonparametric change point detection (WBS UNI-NONPAR) proposed by \cite{madrid2021optimal};
(9) detection of univariate mean change points based on the standard binary segmentation with the tuning parameter (UNI TUNING1) method \citep{changepoints_package};
(10) detection of univariate mean change points based on wild binary segmentation with the tuning parameter (UNI TUNING2) method \citep{changepoints_package};
(11) robust wild binary segmentation for the univariate mean change point (ROB-WBS) method \citep{fearnhead2019changepoint}. Our method is denoted as method (12) in this comparison.

The PELT method is developed in the \texttt{changepoint} package in \texttt{R} written by \cite{killick2014changepoint}. Method 2 comes from the \texttt{sde} package developed by \cite{iacus2016package}. The rest of methods are in the \texttt{changepoints} package written by \cite{changepoints_package}. Note that all of these methods only provide point estimates for the change points, and they are not suitable for making an inference such as constructing confidence intervals unlike our method.

For the purpose of comparison, we consider the first scenario ``Scenario I" described earlier in this section. We assume $T=1000$ and the error terms follow $N(0,1)$. Then, we estimate the change point based on all the methods.  In Figure \ref{Fig.comparing_alters}, we show the distribution of estimated change points based on all methods using boxplots for some of the selected ``true" values. Based on these results, our method could perform significantly better than the alternatives.


\vspace{0.25cm} 

\noindent \textbf{Inference Comparisons.} For the inference comparisons, we use the recent method developed by \cite{fryzlewicz2024narrowest}. Specifically, we use the Narrowest Significance Pursuit (NSP) algorithm from the ``\texttt{nsp}" package in \texttt{R} developed by the author. We picked 2 cases from Scenario I and displayed our comparison results in Table \ref{Tab.CI_alters}. 
 
The NSP method provides multiple confidence intervals, and we take their unions (i.e. the sum of their lengths) per iteration to report the empirical average of the lengths of the confidence intervals in Table \ref{Tab.CI_alters}. Similarly, for coverage, the true value should be within at least one of the confidence intervals per iteration to record it as captured by the confidence interval. From Table \ref{Tab.CI_alters}, we observe that our method significantly outperforms the NSP method.

\begin{table}[ht]
\footnotesize
\caption{Comparison of coverage probabilities and average of confidence interval lengths for two methods. The results for the lengths are in the parentheses.} \label{Tab.CI_alters}
\centering
\setlength{\tabcolsep}{6pt} 
\begin{tabular}{@{} c|ccccc @{}} 
\hline
Scenario I & Truth & method & $90\%$ & $95\%$ & $99\%$   \\ \hline\hline
$(\theta=0.9, \phi=-0.5)$ & $T/2$ & NSP & 0.370 & 0.370 & 0.329   \\
& & & (279.130) & (275.920) & (191.210)\\
&& our method & 0.920 & 0.960 & 0.970 \\
& & & (50.220) & (60.420) & (85.850) \\
\hline
$(\theta=0.9, \phi=0.5)$ & $T/2$ & NSP & 0.273 & 0.343 & 0.271 \\
& & & (273.790) & (243.440) & (189.990) \\
&& our method & 0.940 & 0.970 & 0.970  \\ 
& & & (50.860) & (61.200) & (86.750) \\
\hline
\end{tabular}
\end{table}


\section{Real Data Applications} \label{sec:application}

In this section, we present two real-world applications to further assess the effectiveness of our methodology. The first involves an EEG dataset for epileptic seizure detection, while the second focuses on surveillance video analysis for human action recognition.


\subsection{Electroencephalogram (EEG) data} \label{sec:EEG}




To illustrate the nonstationarity inherent in some EEG signals, we consider electroencephalogram (EEG) data recorded from a patient diagnosed with left temporal lobe epilepsy during an epileptic seizure \citep{ombao2005slex}. The data consist of signals collected from 18 scalp locations, sampled over a duration of 228 seconds. Figure \ref{Fig.channels} displays the EEG recordings from three selected channels—P3, T3, and T5—positioned in the left temporal lobe, where the seizure activity was localized. Based on clinical assessment, the seizure is estimated to have occurred around t $\approx$ 85 s. The EEG recordings exhibit time-varying spectral characteristics, with notable changes in both magnitude and volatility occurring near the seizure onset. These features highlight the dynamic nature of the underlying brain activity and motivate the need for methodologies capable of capturing such temporal heterogeneity. In particular, data from channel P3 have been previously utilized by \cite{davis2006structural} and \cite{chan2014group} to identify structural breaks in time series, while all three channels have served as a case study in \cite{safikhani2018joint}.


In our analysis, we concentrate on the time interval $(1,110)$ seconds, which encompasses the primary seizure activity. We apply our algorithm to estimate the change points $ \lfloor T \tilde{\tau} \rfloor 
$ for the three channels of interest—P3, T3, and T5. The estimated change points are reported in Table \ref{Tab.EEG_seg1}, along with multiple confidence intervals computed at varying significance levels. The estimated values closely align with the true change points, and all corresponding confidence intervals successfully capture the true locations, demonstrating the strong inferential accuracy of the proposed method.

Figure \ref{Fig.channels} depicts the time series plots of these three channels while the confidence intervals are visualized as shaded areas. The bottom-right panel in Figure \ref{Fig.channels} describes the estimated spectral densities before and after the estimated change points, where before is estimated using the interval $(1, \lfloor T \tilde{\tau} \rfloor)$, and for after, we used additional data points (162,228) for description. The selection of the latter interval is justified using the results in \cite{safikhani2018joint} in which they report no detected change points after $t=162$. Looking at estimated spectral densities, we realize that there are significant changes before and after the occurrence of the seizure. We also used the NSP method to construct confidence intervals for all 3 channels and display the results in the Supplementary Material.
Additionally, we estimate the change points based on the competing methods listed in Section~\ref{sec:sim_alters}, and the results are given in the Supplementary Material (Section E). Among them, the SDE method performs better than the rest. Most of the competing methods detected multiple change points in the channels and the estimated change point locations are similar except the PELT method for Channel T5; see more details in Section~E.1 in the Supplementary Material. In addition, the NSP method provided multiple time intervals as confidence sets for the location of the true change point, while the overall length of these intervals when combined is much larger than the length of the confidence interval provided by the proposed method.

\begin{table}[ht]
\footnotesize
\caption{ Location of estimated change points in the EEG data set with their CIs for the segment $(1,110)$ based on our method. True change point is $\lfloor T \tau \rfloor = 85$.} \label{Tab.EEG_seg1}
\centering
\setlength{\tabcolsep}{6pt} 
\begin{tabular}{@{} ccccccc @{}}
\hline
Channel & value ($\lfloor T \tilde{\tau} \rfloor$) & $70\%$ CI & $80\%$ CI & $90\%$ CI & $95\%$ CI & $99\%$ CI   \\ \hline\hline
P3 & 90 & (64,96) & (63,100) & (60,107) & (58,115) & (52,136) \\
T3 & 86 & (85,87) & (85,87) & (85,91) & (85,98) & (85,118)  \\
T5 & 89 & (37,94) & (34,98) & (29,106) & (24,115) & (12,139)  \\
\hline
\end{tabular}
\end{table}

\begin{figure}[ht]
\centering
\begin{tabular}{ cc }
\includegraphics[width=0.5\textwidth]{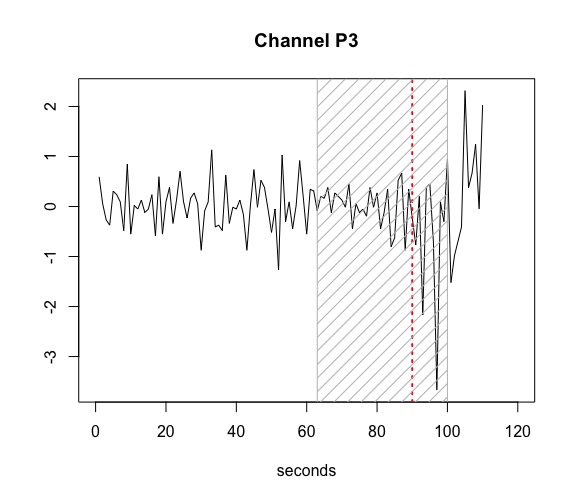} &
\includegraphics[width=0.5\textwidth]{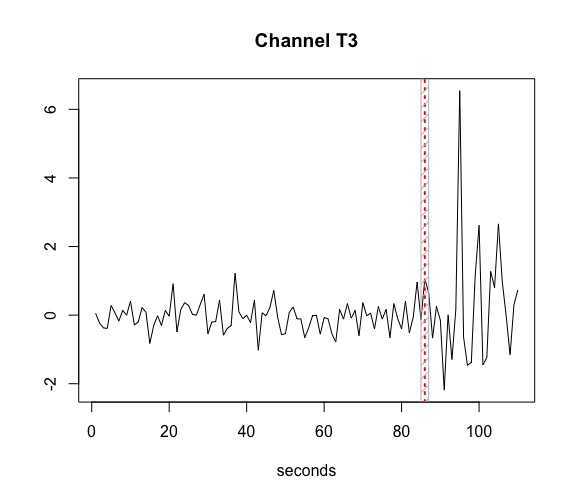} \\
\includegraphics[width=0.5\textwidth]{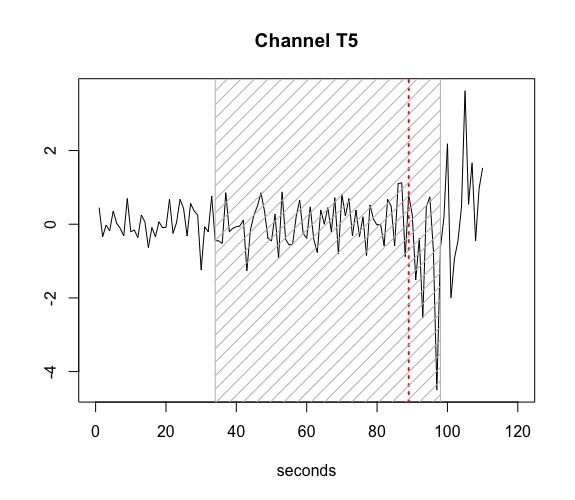} &
\includegraphics[width=0.5\textwidth]{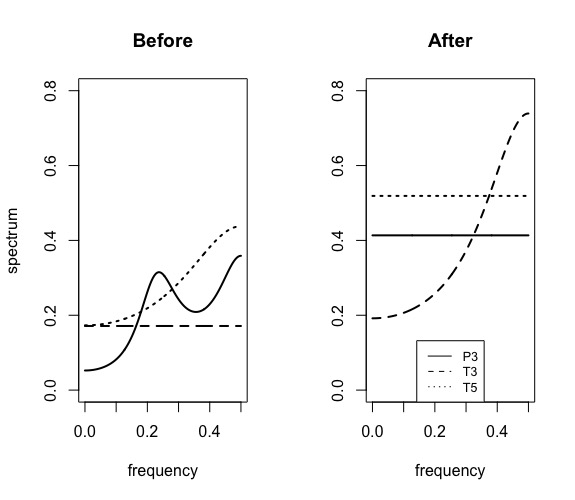}
\end{tabular}
\caption{EEG data set over 110 seconds with their estimated change points using our method. The $\% 80$ CIs are overlaid on the plots through shaded parts. In the last panel, we describe the spectral densities before and after the estimated change points for all the channels. Note that we used additional points (162,228) for the description of spectral density of after, and before is simply related to the interval of (1,$\lfloor T \tilde{\tau} \rfloor$).}
\label{Fig.channels}
\end{figure}

\subsection{Surveillance video data set} \label{sec:image}


In this section, we evaluate the performance of our methodology on a distinct data set consisting of surveillance video frames. Specifically, we use a sequence of 828 images obtained from the CAVIAR project\footnote{\url{http://homepages.inf.ed.ac.uk/rbf/CAVIARDATA1/}}. Each frame captures various human activities in different contexts, such as walking alone, meeting others, and entering or exiting a room. The original image resolution is $384 \times 288$ pixels, which we downsample to $24 \times 32$ pixels to facilitate computational efficiency. We focus our analysis on two distinct action sequences with known ground truth locations \citep{bai2020multiple}: (1) $ \lfloor T \tau^\star \rfloor=116$ (first person walks out of the lobby); (2) $\lfloor T \tau^\star \rfloor=174$ (second person walks into the lobby).


The results are presented in Figures E.1 and E.2 of the Supplementary Material, which illustrate that the pixels corresponding to the action regions are accurately identified. Additionally, the estimated change points and their associated confidence intervals are summarized in Tables \ref{Tab.Image1} and \ref{Tab.Image2}. Overall, the proposed method demonstrates strong performance. For comparison, we also apply several alternative change point detection methods to evaluate the robustness and effectiveness of our approach. Unlike SDE and LOCAL methods, all the other methods estimated multiple change points for two actions. For the first action, the ROB-WBS method performs better than the rest, and for the second action, DP-UNIVAR and STD outperform the rest. We refer to Section~E.2 of the Supplementary Material for further details and comparisons.

\begin{table}[ht]
\footnotesize
\caption{Location of estimated change points from the video data set with their CIs for the ``\textit{first person walks out of the lobby}". Note that the true value is $116$. } \label{Tab.Image1}
\centering
\setlength{\tabcolsep}{6pt} 
\begin{tabular}{@{} ccccccc @{}}
\hline
pixel & value ($\lfloor T \tilde{\tau} \rfloor$) & $70\%$ CI & $80\%$ CI & $90\%$ CI & $95\%$ CI & $99\%$ CI   \\ \hline\hline
700 & 114 & (113,120) & (113,121) & (113,125) & (113,129) & (113,139) \\
702 & 115 & (114,121) & (114,123) & (114,127) & (114,131) & (114,142) \\
731 & 109 & (108,123) & (108,127) & (108,136) & (108,147) & (108,173) \\
762 & 106 & (105,138) & (105,150) & (105,172) & (105,181) & (105,181) \\
764 & 118 & (117,169) & (117,169) & (117,169) & (117,169) & (117,169) \\
\hline
\end{tabular}
\end{table}

\clearpage

\begin{table}[ht]
\footnotesize
\caption{Location of estimated change points from the video data set with their CIs for the ``\textit{second person walks into the lobby}". Note that the true value is 174. } \label{Tab.Image2}
\centering
\setlength{\tabcolsep}{6pt} 
\begin{tabular}{@{} ccccccc @{}}
\hline
pixel & value ($\lfloor T \tilde{\tau} \rfloor$) & $70\%$ CI & $80\%$ CI & $90\%$ CI & $95\%$ CI & $99\%$ CI   \\ \hline\hline
48 & 178 & (176,188) & (176,191) & (176,198) & (175,206) & (175,226) \\
78 & 179 & (175,184) & (175,186) & (175,191) & (175,196) & (174,209) \\
110 & 181 & (168,187) & (165,189) & (160,194) & (156,198) & (145,211) \\
174 & 171 & (170,172) & (170,172) & (170,172) & (170,172) & (170,333) \\
209 & 174 & (170,185) & (169,190) & (169,199) & (168,209) & (168,234) \\
241 & 166 & (157,176) & (156,181) & (156,191) & (155,202) & (155,230) \\
\hline
\end{tabular}
\end{table}



\section{Discussion and Future Work} \label{sec:discussion}

In this paper, we proposed a nonparametric approach for detecting change points in the spectral density of time series, motivated by the need for precise seizure detection in EEG data. By leveraging the Wold decomposition, we modeled time series as autoregressive processes with infinite lags, truncating, and estimating them around potential change points. Our detection method, based on an initial estimator followed by an optimal rate procedure, enables accurate localization with quantifiable error bounds. The asymptotic distribution of the estimator further allows the construction of confidence intervals, which enhances reliability. Through extensive numerical experiments, we demonstrated the superior performance of the method over existing approaches and illustrated its practical effectiveness in EEG-based seizure detection and event detection in video data.  

Future work includes extending the method to detect multiple change points using techniques such as fused lasso regularization or rolling window schemes, which could further improve adaptability in dynamic environments. Additionally, the framework can be extended to multivariate time series by considering infinite-lag multivariate autoregressive models. Understanding the impact of dimensionality and quantifying its effect on localization error remain an important direction for further research. These extensions will broaden the applicability of the proposed approach, making it more versatile in complex high-dimensional settings.


\bigskip
\begin{center}
{\large\bf SUPPLEMENTARY MATERIAL}
\end{center}

\begin{description}
\item[Supplementary Material:] The file includes the proof of the main result and additional numerical results. (pdf)

\item[Code:] The file includes the R codes for the implementation of the proposed detection method. (zip)
\end{description}

\bibliographystyle{chicago}  

\bibliography{Bibliography}

\newpage
\clearpage

\def\spacingset#1{\renewcommand{\baselinestretch}%
{#1}\small\normalsize} \spacingset{1}

\begin{center}

\LARGE{Supplementary Material for \hspace{.2cm}\\ ``Optimal Change Point Detection and Inference in the Spectral Density of General Time Series Models"}

\end{center}

\spacingset{1.9} 

\vspace{1cm}

\setcounter{secnumdepth}{0}

This supplementary material is structured as follows. In Section A, we provide detailed proofs of the theoretical results of Section 3 of the manuscript. In Section B, we provide the proofs of the theoretical results of Section 4 of the manuscript. Section C contains auxiliary lemmas and their proofs. In Section D, we list additional simulation results, and in Section E, we provide additional numerical results from the applications part of the manuscript. Note that based on the manuscript and to simplify the notation, we refer to both error terms for before and after the change point as $\epsilon_t$ if there are no ambiguities. 

\clearpage

\section{A \quad Proofs of Theoretical Results in Section 3} \label{App_A}
\renewcommand{\theequation}{A.\arabic{equation}}
\setcounter{equation}{0}
\renewcommand{\thelemma}{A.\arabic{lemma}}
\renewcommand{\thedefinition}{A.\arabic{definition}}

\begin{proof}[Proof of Theorem 3.1.]

The main idea is to use the basic inequality:
\begin{equation*}
    \mathcal{L}(\hat{\tau}) \leq \mathcal{L}(\tau^\star),
\end{equation*}
to derive the error bound of estimated change point. Therefore, we need to find the upper bound of $\mathcal{L}(\tau^\star)$ as well as the lower bound of $\mathcal{L}(\hat{\tau})$.

Without loss of generality, we assume that $\tau > \tau^\star$, then for the interval $\left[1, \lfloor T\tau \rfloor \right]$, notice that the data in $\left[ \lfloor T\tau^\star \rfloor, \lfloor T\tau \rfloor \right]$ is generated from $f_2(\lambda)$ while we need to use it to fit an AR process combined with data generated by $f_1(\lambda)$. Therefore, we consider the fitting procedure demonstrated in (5)-(6):

For the data $\mathcal{T}(1,\lfloor T\tau \rfloor) = \{X_1, \dots, X_{\lfloor T\tau^\star \rfloor}, X_{\lfloor T\tau^\star \rfloor +1}, \dots, X_{\lfloor T\tau \rfloor}\}$, we first define the sample covariances $\hat{\gamma}_{1,k}^{(1)}$ and $\hat{\gamma}_{1,k}^{(2)}$ as follows:

\begin{equation}
    \begin{aligned}
        \hat{\gamma}_{1,k}^{(1)} &= \frac{1}{\lfloor T\tau^\star \rfloor }\sum_{t=1}^{\lfloor T\tau^\star \rfloor-|k|}(X_t - \Bar{X})(X_{t+|k|} - \Bar{X}),\quad t=1,2,\dots, \lfloor T\tau^\star \rfloor, \\
        \hat{\gamma}_{1,k}^{(2)} &= \frac{1}{\lfloor T\tau \rfloor - \lfloor T\tau^\star \rfloor}\sum_{t=\lfloor T\tau^\star \rfloor-|k|+1}^{\lfloor T\tau \rfloor -|k|}(X_t - \Bar{X})(X_{t+|k|} - \Bar{X}),\quad t=\lfloor T\tau^\star \rfloor+1, \dots, \lfloor T\tau \rfloor,
    \end{aligned}
\end{equation}
where $k=1,2,\dots, p$ indicates the lag-$k$ autocovariances. Recall the definition of sample covariance $\hat{\gamma}_{1,k}$ based on data $\mathcal{T}(1,\lfloor T\tau \rfloor)$, we can easily derive that:
\begin{equation}
    \hat{\gamma}_{1,k} = \frac{\tau^\star}{\tau}\hat{\gamma}_{1,k}^{(1)} + \frac{\tau-\tau^\star}{\tau}\hat{\gamma}_{1,k}^{(2)},
\end{equation}
furthermore, we obtain that

\begin{equation}
    \hat{\Gamma}_1 = \frac{\tau^\star}{\tau}\hat{\Gamma}_1^{(1)} + \frac{\tau-\tau^\star}{\tau}\hat{\Gamma}_1^{(2)},\quad \hat{\gamma}_1 = \frac{\tau^\star}{\tau}\hat{\gamma}_1^{(1)} + \frac{\tau-\tau^\star}{\tau}\hat{\gamma}_1^{(2)},
\end{equation}
where $\hat{\gamma}_1^{(1)}$ and $\hat{\gamma}_1^{(2)}$ are vectors of stacking $\hat{\gamma}_{1,k}^{(1)}$ and $\hat{\gamma}_{1,k}^{(2)}$, respectively, as well as the matrices $\hat{\Gamma}_1^{(1)} = (\hat{\gamma}_{1,|i-j|}^{(1)})_{i,j=1}^p$ and $\hat{\Gamma}_1^{(2)} = (\hat{\gamma}_{2,|i-j|}^{(2)})_{i,j=1}^p$. Then, by using Yule-Walker equations, we have the fitted AR($p$) parameter vector for the interval $\left[1, \lfloor T\tau \rfloor \right]$ in the following:
\begin{equation} \label{eq.phi1_hat}
    \begin{aligned}
        \hat{\phi}_1 &= \left(\frac{\tau^\star}{\tau}\hat{\Gamma}_1^{(1)} + \frac{\tau-\tau^\star}{\tau}\hat{\Gamma}_1^{(2)}\right)^{-1}\left(\frac{\tau^\star}{\tau}\hat{\gamma}_1^{(1)} + \frac{\tau-\tau^\star}{\tau}\hat{\gamma}_1^{(2)}\right) \\
        &= \left(\mathbf{I}_p + \frac{\tau-\tau^\star}{\tau^\star}\hat{\Gamma}_1^{(2)}\hat{\Gamma}_1^{(1)^{-1}}\right)^{-1}\left(\hat{\phi}_1^{(1)} + \frac{\tau-\tau^\star}{\tau^\star}\hat{\Gamma}_1^{(1)^{-1}}\hat{\gamma}_1^{(2)}\right),
    \end{aligned}
\end{equation}
where $\hat{\phi}_1^{(1)}$ is the fitted AR($p$) model parameter vector on the interval $\mathcal{T}(1, \lfloor T\tau^\star \rfloor)$. 

Then, by directly using the Searle's identity (see \citealp{searle2017matrix} Page 151), we have: 
\begin{equation}
    \hat{\phi}_1 = \left(\mathbf{I}_p - \frac{\tau-\tau^\star}{\tau^\star}\hat{\Gamma}_1^{(2)}\left(\mathbf{I}_p + \hat{\Gamma}_1^{(1)^{-1}}\hat{\Gamma}_1^{(2)} \right)^{-1}\hat{\Gamma}_1^{(1)^{-1}}\right)\left(\hat{\phi}_1^{(1)} + \frac{\tau-\tau^\star}{\tau^\star}\hat{\Gamma}_1^{(1)^{-1}}\hat{\gamma}_1^{(2)}\right),
\end{equation}
therefore, we obtain that
\begin{equation*}
    \begin{aligned}
        \hat{\phi}_1 &= \hat{\phi}_1^{(1)} + \left(\mathbf{I}_p - \frac{\tau-\tau^\star}{\tau^\star}\hat{\Gamma}_1^{(2)}\left(\mathbf{I}_p + \hat{\Gamma}_1^{(1)^{-1}}\hat{\Gamma}_1^{(2)} \right)^{-1}\hat{\Gamma}_1^{(1)^{-1}}\right)\frac{\tau-\tau^\star}{\tau^\star}\hat{\Gamma}_1^{(1)^{-1}}\hat{\gamma}_1^{(2)} \\
        &- \frac{\tau-\tau^\star}{\tau^\star}\hat{\Gamma}_1^{(2)}\left(\mathbf{I}_p + \hat{\Gamma}_1^{(1)^{-1}}\hat{\Gamma}_1^{(2)} \right)^{-1}\hat{\Gamma}_1^{(1)^{-1}}\hat{\phi}_1^{(1)},
    \end{aligned}
\end{equation*}
by using the similar equation, we can also derive that:
\begin{equation*}
    \begin{aligned}
        \hat{\phi}_1 &= \hat{\phi}_1^{(2)} + \left(\mathbf{I}_p - \frac{\tau^\star}{\tau-\tau^\star}\hat{\Gamma}_1^{(1)}\left(\mathbf{I}_p + \hat{\Gamma}_1^{(2)^{-1}}\hat{\Gamma}_1^{(1)}\right)^{-1}\hat{\Gamma}_1^{(2)^{-1}}\right)\frac{\tau^\star}{\tau-\tau^\star}\hat{\Gamma}_1^{(2)^{-1}}\hat{\gamma}_1^{(1)} \\
        &- \frac{\tau^\star}{\tau-\tau^\star}\hat{\Gamma}_1^{(1)}\left(\mathbf{I}_p + \hat{\Gamma}_1^{(2)^{-1}}\hat{\Gamma}_1^{(1)}\right)^{-1}\hat{\Gamma}_1^{(2)^{-1}}\hat{\phi}_1^{(2)}.
    \end{aligned}
\end{equation*}

Based on the detection algorithm, we consider the introduced objective function $\mathcal{L}(\tau)$ and it can be written as follows: 
\begin{equation}
    \label{eq:13}
    \mathcal{L}(\tau) = \sum_{t=1}^{\lfloor T\tau \rfloor} (X_t - \hat{\phi}_1^\prime Z_t)_2^2 + \sum_{t=\lfloor T\tau \rfloor+1}^T(X_t - \hat{\phi}_2^\prime Z_t)_2^2 \overset{\text{def}}{=} I_1 + I_2.
\end{equation}


Note that the estimated AR($p$) coefficients $\hat{\phi}_1$ and $\hat{\phi}_2$ are functions of $\tau$. By using the result from Corollary 2 in \cite{wang2021consistent} as well as the Condition A(c), we first derive the lower bound for $I_2$:
\begin{align}
    \label{eq:20}
    I_2 & = \sum_{t=\lfloor T\tau \rfloor+1}^T(X_t - \hat{\phi}_2^\prime Z_t)^2 
        \geq \sum_{t=\lfloor T\tau\rfloor +1}^T \epsilon_t^2 + \sum_{t=\lfloor T\tau \rfloor +1}^T \left((\hat{\phi}_2 - \phi^\star_2)^\prime Z_t \right)^2 - 2\left|\sum_{t=\lfloor T\tau \rfloor+1}^T \epsilon_t(\hat{\phi}_2 - \phi^\star_2)^\prime Z_t \right| \nonumber \\
        &\geq \sum_{t=\lfloor T\tau \rfloor+1}^T \epsilon_t^2 + c_2(T-\lfloor T\tau \rfloor)\|\hat{\phi}_2 - \phi^\star_2\|_2^2 - 2c_2^\prime \left\|\sum_{t=\lfloor T\tau \rfloor+1}^T Z_t \epsilon_t \right\|_\infty\|\hat{\phi}_2 - \phi^\star_2\|_1 \nonumber \\
        &\geq \sum_{t=\lfloor T\tau \rfloor+1}^T \epsilon_t^2 + c_2(T- \lfloor T\tau \rfloor)\|\hat{\phi}_2 - \phi^\star_2\|_2\left(\|\hat{\phi}_2 - \phi^\star_2\|_2 - \frac{2c_2^\prime\sqrt{p}}{c_2}\left\| \frac{1}{T-\lfloor T\tau \rfloor}\sum_{t=\lfloor T\tau \rfloor+1}^T Z_t \epsilon_t \right\|_\infty\right) \nonumber \\
        &\geq \sum_{t=\lfloor T\tau \rfloor+1}^T \epsilon_t^2 + c_2(T-\lfloor T\tau \rfloor)\|\hat{\phi}_2 - \phi^\star_2\|_2\left(\|\hat{\phi}_2 - \phi^\star_2\|_2 - \frac{2c_2^\prime}{c_2}\sqrt{\frac{p^2\log p}{T-\lfloor T\tau \rfloor}} - c_3\sqrt{p}\sum_{k=p+1}^\infty(1+k)^r|\phi^\star_{2,k}|\right) \nonumber  \\
        &\geq \sum_{t=\lfloor T\tau \rfloor+1}^T\epsilon_t^2 + c_2(T-\lfloor T\tau \rfloor)\|\hat{\phi}_2 - \phi^\star_2\|_2\left(\|\hat{\phi}_2 - \phi^\star_2\|_2 - \frac{2c_2^\prime}{c_2}\sqrt{\frac{p^2\log p}{T-\lfloor T\tau \rfloor}} - c_3\sqrt{p}\sqrt{\frac{p\log p}{T-\lfloor T\tau \rfloor}}\right) \nonumber \\
        &\geq \sum_{t=\lfloor T\tau \rfloor+1}^T \epsilon_t^2 - C_2p^2\log p,
\end{align}

where the $C$ terms are generic constants which may differ at each appearance.

Similarly, to calculate the lower bound for $I_1$, we have:
\begin{align}\label{eq:21}
    I_1 &= \sum_{t=1}^{\lfloor T\tau^\star \rfloor}(X_t - \hat{\phi}_1^\prime Z_t)^2 + \sum_{t=\lfloor T\tau^\star \rfloor+1}^{\lfloor T\tau \rfloor} (X_t - \hat{\phi}_1^\prime Z_t)^2 \nonumber \\
    &\geq \sum_{t=1}^{\lfloor T\tau \rfloor} \epsilon_t^2 + \sum_{t=1}^{\lfloor T\tau^\star \rfloor}((\hat{\phi}_1 - \phi^\star_1)^\prime Z_t)^2 + \sum_{t=\lfloor T\tau^\star \rfloor+1}^{\lfloor T\tau \rfloor}((\hat{\phi}_1 - \phi^\star_2)^\prime Z_t)^2 - 2\left| \sum_{t=1}^{\lfloor T\tau^\star \rfloor}\epsilon_t(\hat{\phi}_1 - \phi^\star_1)^\prime Z_t \right| \nonumber \\
    &- 2\left| \sum_{t=\lfloor T \tau^\star \rfloor+1}^{\lfloor T\tau\rfloor} \epsilon_t(\hat{\phi}_1 - \phi^\star_2)^\prime Z_t \right| \nonumber \\
    &\geq \sum_{t=1}^{\lfloor T\tau \rfloor} \epsilon_t^2 + c_1\lfloor T\tau^\star \rfloor\|\hat{\phi}_1 - \phi^\star_1\|_2^2 + c_3(\lfloor T\tau\rfloor -\lfloor T\tau^\star \rfloor)\|\hat{\phi}_1 - \phi^\star_2\|_2^2 - 2c_1^\prime\left\| \sum_{t=1}^{\lfloor T\tau^\star \rfloor}Z_t\epsilon_t \right\|_\infty\|\hat{\phi}_1 - \phi^\star_1\|_1 \nonumber \\
    &- 2c_3^\prime\left\| \sum_{t=\lfloor T\tau^\star \rfloor+1}^{\lfloor T\tau\rfloor} Z_t \epsilon_t \right\|_\infty\|\hat{\phi}_1 - \phi^\star_2\|_1 \nonumber \\
    &\geq \sum_{t=1}^{\lfloor T\tau \rfloor} \epsilon_t^2 + c_1\lfloor T\tau^\star\rfloor\|\hat{\phi}_1 - \phi^\star_1\|_2\left(\|\hat{\phi}_1 - \phi^\star_1\|_2 - \frac{2c_1^\prime}{c_1}\sqrt{\frac{p^2\log p}{\lfloor T\tau^\star \rfloor}} - c_2\sqrt{p}\sum_{k=p+1}^\infty(1+k)^r|\phi^\star_{1,k}|\right) \nonumber \\
    &+ c_3(\lfloor T\tau \rfloor - \lfloor T\tau^\star \rfloor)\|\hat{\phi}_1 - \phi^\star_2\|_2\left(\|\hat{\phi}_1 - \phi^\star_2\|_2 - \frac{2c_3^\prime}{c_3}\sqrt{\frac{p^2\log p}{\lfloor T\tau \rfloor - \lfloor T\tau^\star \rfloor}}\right) \nonumber \\
    &\geq \sum_{t=1}^{\lfloor T\tau \rfloor} \epsilon_t^2 + c_1\lfloor T\tau^\star \rfloor\|\hat{\phi}_1 - \phi^\star_1\|_2\left(\|\hat{\phi}_1 - \phi^\star_1\|_2 - \frac{2c_1^\prime}{c_1}\sqrt{\frac{p^2\log p}{\lfloor T\tau^\star \rfloor}} - c_2\sqrt{\frac{p^2\log p}{\lfloor T\tau^\star \rfloor}}\right) \nonumber \\
    &+ c_3(\lfloor T\tau \rfloor - \lfloor T\tau^\star \rfloor)\|\hat{\phi}_1 - \phi^\star_2\|_2\left(\|\hat{\phi}_1 - \phi^\star_2\|_2 - \frac{2c_3^\prime}{c_3}\sqrt{\frac{p^2\log p}{\lfloor T\tau \rfloor - \lfloor T\tau^\star \rfloor}}\right).
\end{align}

Since we have $\|\phi_1^\star - \phi_2^\star\|_2 = \xi_2 > 0$, then either $\|\hat{\phi}_1 - \phi^\star_1\|_2 \geq \xi_2/4 > 0$ or $\|\hat{\phi}_1 - \phi^\star_2\|_2 \geq \xi_2/4 > 0$ holds. Therefore, by substituting the corresponding quantities in \eqref{eq:21}, we finally derive that
\begin{equation}\label{eq:22}
    I_1 \geq \sum_{t=1}^{\lfloor T\tau \rfloor} \epsilon_t^2 + C_1\Big|\lfloor T\tau \rfloor - \lfloor T\tau^\star \rfloor\Big|\xi_2^2 - C_3p^2\log p.
\end{equation}

Combining the results from \eqref{eq:20} and \eqref{eq:22}, we have the lower bound of $\mathcal{L}(\hat{\tau})$:
\begin{equation}
    \mathcal{L}(\hat{\tau}) \geq \sum_{t=1}^T\epsilon_t^2 + C_1\Big|\lfloor T\hat{\tau} \rfloor - \lfloor T\tau^\star \rfloor\Big|\xi_2^2 - (C_2+C_3)p^2\log p.
\end{equation}

On the other hand, we are going to calculate the upper bound for the $\mathcal{L}(\tau^\star)$. To see the result, we analogously use the procedure above. Thus, we have:
\begin{equation*}
    \mathcal{L}(\tau^\star) = \sum_{t=1}^{\lfloor T\tau^\star \rfloor}(X_t - \hat{\phi}_1^\prime Z_t)^2 + \sum_{t=\lfloor T\tau^\star \rfloor+1}^T (X_t - \hat{\phi}_2^\prime Z_t)^2 \overset{\text{def}}{=} J_1+J_2.
\end{equation*}
Then we have:
\begin{align}
    J_1 &= \sum_{t=1}^{\lfloor T\tau^\star \rfloor}\epsilon_t^2 + 2\left| \sum_{t=1}^{\lfloor T\tau^\star \rfloor}\epsilon_t(\hat{\phi}_1 - \phi^\star_1)^\prime Z_t \right| + \sum_{t=1}^{\lfloor T\tau^\star \rfloor}((\hat{\phi}_1 - \phi^\star_1)^\prime Z_t)^2 \nonumber \\
    &\leq \sum_{t=1}^{\lfloor T\tau^\star \rfloor}\epsilon_t^2 + 2\left\| \sum_{t=1}^{\lfloor T\tau^\star \rfloor}Z_t\epsilon_t \right\|_\infty\|\hat{\phi}_1 - \phi^\star_1\|_1 + c_1\lfloor T\tau^\star \rfloor \|\hat{\phi}_1 - \phi^\star_1\|_2^2 \nonumber \\
    &\leq \sum_{t=1}^{\lfloor T\tau^\star \rfloor}\epsilon_t^2 + 2c_1^\prime\lfloor T\tau^\star \rfloor\sqrt{\frac{p\log p}{\lfloor T\tau^\star \rfloor}}\sqrt{\frac{p^2\log p}{\lfloor T\tau^\star \rfloor}} + c_1p^2\log p \nonumber \\
    &\leq \sum_{t=1}^{\lfloor T\tau^\star \rfloor}\epsilon_t^2 + 2c_1^\prime p^{\frac{3}{2}}\log p + c_1 p^2\log p \leq \sum_{t=1}^{\lfloor T\tau^\star \rfloor}\epsilon_t^2 + K_1p^2\log p,
\end{align}
where $K_1$ is a constant. 


Then, we obtain the upper bound for $J_2$ by using the similar procedure:
\begin{equation}
    J_2 \leq \sum_{t=\lfloor T\tau^\star \rfloor+1}^T\epsilon_t^2 + K_2p^2\log p,
\end{equation}
where $K_2$ is a constant.
Hence, according to the basic inequality together with (\ref{eq.phi1_hat}-\ref{eq:13}), 
we have:
\begin{equation}
    \sum_{t=1}^T\epsilon_t^2 + C_1\Big|\lfloor T\hat{\tau} \rfloor - \lfloor T\tau^\star \rfloor \Big|\xi^2 - (C_2+C_3)p^2\log p \leq \sum_{t=1}^T \epsilon_t^2 + (K_1+K_2)p^2\log p,
\end{equation}
which leads to the final result by choosing $K_0 = (C_2+C_3+K_1+K_2)/C_1$.

\end{proof}

\section{B \quad Proofs of Theoretical Results in Section 4} \label{App_B}
\renewcommand{\theequation}{B.\arabic{equation}}
\setcounter{equation}{0}
\renewcommand{\thelemma}{B.\arabic{lemma}}
\renewcommand{\thedefinition}{B.\arabic{definition}}

\begin{proof}[Proof of Lemma 4.1.]
Denote that $\hat{\eta} = \hat{\phi}_1 - \hat{\phi}_2$, and for any fixed $\tau \geq \tau^\star$, that:
\begin{equation}
    \begin{aligned}
        &\mathcal{U}(\tau; \hat{\phi}_1, \hat{\phi}_2) = Q(\tau; \hat{\phi}_1, \hat{\phi}_2) - Q(\tau^\star; \hat{\phi}_1, \hat{\phi}_2) \\
        &= \frac{1}{T-p+1}\left\{ \sum_{t=p}^{\lfloor T\tau \rfloor}(X_t - \hat{\phi}_1^\prime Z_t)^2 + \sum_{t=\lfloor T\tau \rfloor + 1}^{T} (X_t - \hat{\phi}_2^\prime Z_t)^2 \right\} \\
        &- \frac{1}{T-p+1}\left\{ \sum_{t=p}^{\lfloor T\tau^\star\rfloor}(X_t - \hat{\phi}_1^\prime Z_t)^2 + \sum_{t=\lfloor T\tau^\star\rfloor + 1}^T(X_t - \hat{\phi}_2^\prime Z_t)^2 \right\} \\
        &= \frac{1}{T-p+1}\left\{ \sum_{t=\lfloor T\tau^\star \rfloor + 1}^{\lfloor T\tau \rfloor}(X_t - \hat{\phi}_1^\prime Z_t)^2 - \sum_{t=\lfloor T\tau^\star \rfloor+1}^{\lfloor T\tau \rfloor}(X_t - \hat{\phi}_2^\prime Z_t)^2 \right\} \\
        &= \frac{1}{T-p+1}\sum_{t=\lfloor T\tau^\star\rfloor + 1}^{\lfloor T\tau \rfloor}\left\{ (\hat{\eta}^\prime Z_t)^2 - 2\epsilon_t\hat{\eta}^\prime Z_t + 2(\hat{\phi}_2 - \phi_2^\star)^\prime Z_tZ_t^\prime \hat{\eta} \right\}.
    \end{aligned}
\end{equation}


This algebraic rearrangement leads to the following result:
\begin{equation}\label{eq:partitioningJs}
    \begin{aligned}
        \inf_{\substack{\tau \in \mathcal{G}(u_T, v_T) \\ \tau \geq \tau^\star}}\mathcal{U}(\tau; \hat{\phi}_1, \hat{\phi}_2) 
        &\geq \inf_{\substack{\tau \in \mathcal{G}(u_T, v_T) \\ \tau \geq \tau^\star}}\frac{1}{T-p+1}\sum_{t=\lfloor T\tau^\star\rfloor+1}^{\lfloor T\tau\rfloor}(\hat{\eta}^\prime Z_t)^2 - \sup_{\substack{\tau\in \mathcal{G}(u_T, v_T) \\ \tau\geq \tau^\star}}\frac{2}{T-p+1}\left|\sum_{t=\lfloor T\tau^\star\rfloor+1}^{\lfloor T\tau\rfloor}\epsilon_t\hat{\eta}^\prime Z_t\right| \\
        &- \sup_{\substack{\tau \in \mathcal{G}(u_T, v_T) \\ \tau \geq \tau^\star}}\frac{2}{T-p+1}\left| \sum_{t=\lfloor T\tau^\star\rfloor+1}^{\lfloor T\tau \rfloor} (\hat{\phi}_2 - \phi_2^\star)^\prime Z_tZ_t^\prime \hat{\eta} \right| \overset{\text{def}}{=} J_1 - J_2 - J_3.
    \end{aligned}
\end{equation}

In the following proof, we define $T'=T-p+1$. However, for simplicity in writing the proof, we use $T$ and $T'$ interchangeably. By applying the result of Lemma \ref{lemma:c5}(i) and Lemma \ref{lemma:c6}, in particular, we have:
\begin{equation}
    \begin{aligned}
        J_1 &\geq \kappa_{\min} \xi_2^2\left( v_T - c_{a1}\sigma^2\sqrt{\frac{u_T}{T}} - c_u\frac{u_T}{\xi_2}p\sqrt{\log (p\vee T)}\|\hat{\eta} - \eta^\star\|_2 \right) \\
        &\geq \kappa_{\min}\xi_2^2\left( v_T - c_{a1}\sigma^2\sqrt{\frac{u_T}{T}} - c_u\frac{u_T}{T^b} \right).
    \end{aligned}
\end{equation}
Then for $J_2$, using the result of Lemma \ref{lemma:c2}, Lemma \ref{lemma:c5}(ii) and recall that $K := 2^{2/\gamma_2}(K_X + K_\epsilon)^2$ we have:
\begin{equation}
    \begin{aligned}
        J_2 &\leq c_{a1}2^{\frac{2}{\gamma_2}+1}K_\epsilon K_X\xi_2\sqrt{\frac{u_T}{T}} + c_uK\sqrt{\frac{u_T\log(p\vee T)}{T}}\|\hat{\eta} - \eta^\star\|_1 \\
        &\leq c_{a1}2^{\frac{2}{\gamma_2}+1}K_\epsilon K_X\xi_2\sqrt{\frac{u_T}{T}} + c_uc_{a1}K\xi_2\sqrt{\frac{u_T}{T}} \\
        &\leq \frac{c_{a1}}{2}\left(2^{\frac{2}{\gamma_2}}4K_{\epsilon}K_X\right)\xi_2\sqrt{\frac{u_T}{T}} + c_uc_{a1}K\xi_2\sqrt{\frac{u_T}{T}}
        \\
        &\leq \frac{c_{a1}}{2}K\xi_2\sqrt{\frac{u_T}{T}} + c_uc_{a1}K\xi_2\sqrt{\frac{u_T}{T}} = \left(c_u + \frac{1}{2}\right)c_{a1}K\xi_2\sqrt{\frac{u_T}{T}},
    \end{aligned}
\end{equation}


with probability at least $1 - a - o(1)$. Similarly, for the term $J_3$, according to Condition B(b), Condition C, Lemma \ref{lemma:c5}(iii) and Lemma \ref{lemma:c6}, we have:
\begin{equation}
    \begin{aligned}
        \frac{J_3}{\xi_2^2} &\leq 2c_u(\sigma^2\vee \alpha)u_T\frac{1}{\xi_2}p\sqrt{\log(p\vee T)}\|\hat{\phi}_2 - \phi_2^\star\|_2\left(1 + \frac{1}{\xi_2}p\sqrt{\log(p\vee T)}\|\hat{\eta} - \eta^\star\|_2\right) \\
        &\leq 2c_u(\sigma^2 \vee \alpha)u_T\left(\frac{1}{\xi_2}p\sqrt{\log(p\vee T)}\|\hat{\phi}_2 - \phi_2^\star\|_2\right)\left(1 + \frac{c_{u1}}{T^b}\right) \\
        &\leq 2c_u(\sigma^2 \vee \alpha)u_T\left(\frac{1}{\xi_2}c_u\sqrt{1 + \nu^2}\frac{\sigma^2}{\kappa_{\min}}\frac{p^2\log(p\vee T)}{\sqrt{Tl_T}}\right)\left(1 + \frac{c_{u1}}{T^b}\right) \\
        &\leq 2c_u u_T(\sigma^2 \vee \alpha)\frac{c_{u1}}{T^b}(1 + \frac{c_{u1}}{T^b}),
    \end{aligned}
\end{equation}


with probability at least $1-a-o(1)$. Therefore, by combining all these three results together and substitute the bounds in \eqref{eq:partitioningJs} yields the bound uniformly over the set $\{\mathcal{G}(u_T, v_T), \tau \geq \tau^\star\}$, which completes the proof.
\end{proof}

According to the results of Lemma 4.1, we are now in a position to demonstrate the proof of Theorem 4.1.
\begin{proof}[Proof of Theorem 4.1.]
To prove this theorem, we show that the bound
\begin{equation}
    \Big| \lfloor T\tilde{\tau} \rfloor - \lfloor T\tau^\star \rfloor \Big| \leq c_{a3}^2,
\end{equation}
holds with probability at least $1-3a-o(1)$. Notice that part (i) of Theorem 4.1 is a direct consequence of the bound in part (ii) with respect to $\xi_2 \to 0$ for sufficiently large $T$. We begin by considering any $v_T > 0$ and applying Lemma 4.1 on the set $\mathcal{G}(1, v_T)$ to obtain that:
\begin{equation*}
    \inf_{\substack{\tau\in \mathcal{G}(1, v_T) \\ \tau\geq \tau^\star}}\mathcal{U}(\tau; \hat{\phi}_1, \hat{\phi}_2) \geq \kappa_{\min}\xi_2^2\left[v_T - c_{a3}\max\left\{ \left(\frac{1}{T}\right)^{\frac{1}{2}}, \frac{1}{T^{2b}} \right\}\right],
\end{equation*}
with probability at least $1 - o(1)$. Recall that we define $0 < b < 1/2$ and choose any $v_T > v_T^\star = c_{a3}/T^b$. Then we have $\inf_{\tau \in \mathcal{G}(1,v_T)}\mathcal{U}(\tau; \hat{\phi}_1, \hat{\phi}_2) > 0$, which implies that $\tilde{\tau} \not\in \mathcal{G}(1,v_T^\star)$, i.e., $|\lfloor T\tilde{\tau}\rfloor - \lfloor T\tau^\star\rfloor| \leq Tv_T^\star$ with probability at least $1-o(1)$. Next we set $u_T = v_T^\star$ and apply Lemma 4.1 again for any $v_T > 0$ to obtain,
\begin{equation*}
    \inf_{\tau \in \mathcal{G}(u_T, v_T)} \mathcal{U}(\tau; \hat{\phi}_1, \hat{\phi}_2) \geq \kappa_{\min}\xi_2^2\left[ v_T - c_{a3} \max \left\{\left(\frac{c_{a3}}{T^{1+b}}\right)^{\frac{1}{2}}, \frac{c_{a3}}{T^{2b}} \right\} \right],
\end{equation*}

then by selecting the sequence $v_T$ such that
\begin{equation*}
    v_T > v_T^\star = \max\left\{\frac{c_{a3}^{g_2}}{T^{u_2}}, \frac{c_{a3}^{h_2}}{T^{v_2}} \right\},
\end{equation*}
where we have
\begin{equation*}
    g_2 = g_1 + \frac{1}{2},\ h_2 = h_1 + 1,\ u_2 = \frac{1}{2} + \frac{u_1}{2},\ v_2 = b + v_1 \geq 2b,\ \text{with}\ u_1 = v_1 + b,\ g_1 = h_1 = 1,
\end{equation*}
then we have $\inf_{\tau \in \mathcal{G}(u_T, v_T)}\mathcal{U}(\tau;\hat{\phi}_1, \hat{\phi}_2) > 0$, with probability at least $1 - 3a - o(1)$ and the rate of convergence of $\tilde{\tau}$ has been sharpened at the second iteration of recursion in comparison to the first iteration. Continuing these recursions by resetting $u_T$ to the bound of the previous recursion and applying Lemma 4.1 then we obtain the $m$-th recursion:
\begin{equation*}
    \Big| \lfloor T\tilde{\tau}\rfloor - \lfloor T\tau^\star\rfloor \Big| \leq T\max\left\{\frac{c_{a3}^{g_m}}{T^{u_m}}, \frac{c_{a3}^{h_m}}{T^{v_m}} \right\},
\end{equation*}
where we define
\begin{equation*}
    g_m = \sum_{k=0}^{m-1}\frac{1}{2^k},\ u_m = \frac{1}{2} + \frac{u_{m-1}}{2} = \frac{b}{2^{m-1}} + \sum_{k=0}^{m-1}\frac{1}{2^k},\ v_m = b + v_{m-1} \geq mb,\ \text{with}\ u_1 = v_1 = b.
\end{equation*}
Therefore, for a large enough $m$, we have $\Big| \lfloor T\tilde{\tau}\rfloor - \lfloor T\tau^\star\rfloor \Big| \leq Tc_{a3}^{g_m} / T^{u_m}$ with probability at least $1-3a-o(1)$. Then continuing this recursion procedure for an infinite number of iterations leads to:
\begin{equation*}
    \Big| \lfloor T\tilde{\tau}\rfloor - \lfloor T\tau^\star\rfloor \Big| \leq T\frac{c_{a3}^2}{T} = c_{a3}^2,
\end{equation*}
with probability at least $1-3a-o(1)$. It finishes the proof of Theorem 4.1.
\end{proof}

\begin{proof}[Proof of Theorem 4.2.]
Note that under the assumption that $\xi_2 \to 0$, we have $\xi_2^2T(\tilde{\tau} - \tau^\star) = \mathcal{O}_p(1)$, which is proved in Theorem 4.1. Then it directly implies that $\tilde{\tau} = \tau^\star + \omega T^{-1}\xi_2^{-2}$ with any $\omega \in [-c_u, c_u]$ ($c_u$ is some constant). 

Define 
\begin{equation}\label{eq:argmax}
    \mathcal{C}(\tau, \phi_1, \phi_2) \overset{\text{def}}{=} -T \, \mathcal{U}(\tau, \phi_1, \phi_2),
\end{equation}
then we obtain that $\tilde{\tau} = \argmax_{\tau \in (0,1)}\mathcal{C}(\tau, \phi_1, \phi_2)$. Recall the argmax theorem introduced in \cite{wellner2013weak}, which requires verification of the following conditions.
\begin{itemize}
    \item[(i)] The quantity $\xi_2^2T(\tilde{\tau} - \tau^\star)$ is uniformly tight;
    \item[(ii)] The process $Z(r)$ satisfies appropriate regularity conditions;
    \item[(iii)] For any $\omega \in [-c_u, c_u] \subset \mathbb{R}$, we have $\mathcal{C}(\tau^\star + \omega T^{-1}\xi_2^{-2}, \hat{\phi}_1, \hat{\phi}_2) \overset{d}{\to} Z(r)$.
\end{itemize}

From the result of Theorem 4.1, we know that $\xi_2^2 T|\tilde{\tau} - \tau^\star| \leq C < +\infty$, where $C$ is some constant and then if we consider $\xi_2^2 T(\tilde{\tau} - \tau^\star)$ as a sequence with respect to the index of $T$, then we have a compact set $K = [-C, C] \subset \mathbb{R}$ and
\begin{equation*}
    \forall \epsilon > 0,\quad \mathbb{P}(\xi_2^2T(\tilde{\tau} - \tau^\star) \in K) \geq 1 - \epsilon, \quad (T \to +\infty)
\end{equation*}
which satisfies the definition of uniform tight. Next, part (ii) follows directly from the regularity condition of Brownian motion since the sample path of $r \mapsto Z(r)$ is negative shifted upper semi-continuous and posses a unique maximum point, which as a random map is tight in some indexing metric space. Hence, we only need to verify part (iii) to finish our proof. For this purpose, we separate our proof into the following steps.

The first step is to illustrate that
\begin{equation*}
    \sup_{\tau \in \mathcal{G}(c_{u}T^{-1}\xi_2^{-2}, 0)}\left| \mathcal{C}(\tau, \hat{\phi}_1, \hat{\phi}_2) - \mathcal{C}(\tau, \phi_1^\star, \phi_2^\star) \right| = o_p(1).
\end{equation*}

This result is proved in Lemma \ref{lemma:c7}. Therefore, we can derive that:
\begin{equation*}
    \sup_{\tau \in \mathcal{G}(c_{u}T^{-1}\xi_2^{-2}, 0)}T\left| \mathcal{U}(\hat{\tau}) - \mathcal{U}(\tau) \right| \leq \sup_{\tau \in \mathcal{G}(c_{u}T^{-1}\xi_2^{-2}, 0)}\left| \mathcal{C}(\tau, \hat{\phi}_1, \hat{\phi}_2) - \mathcal{C}(\tau, \phi_1^\star, \phi_2^\star) \right| = o_p(1).
\end{equation*}

The second step is to consider $\tilde{\tau} = \tau^\star + rT^{-1}\xi_2^{-2}$ with $r \in (0, c_{u}]$. Then by using the results of Lemma \ref{lemma:c10}, we have the limits:
\begin{equation}
    \label{eq:37}
    \sum_{t=\lfloor T\tau^\star \rfloor+1}^{\lfloor T\tilde{\tau} \rfloor} \|\eta^{\star^\prime}Z_t\|_2^2 \overset{P}{\to} r\sigma_2^2.
\end{equation}

Then by using the similar method we applied in the proof of Lemma \ref{lemma:c4}. Define $\zeta_t = \epsilon_t\eta^{\star^\prime}Z_t$ and $\zeta_t^\star = \xi_2^{-1}\zeta_t$. Here we are going to show that $S_T = \sum_{t=\lfloor T\tau^\star \rfloor+1}^T \zeta_t^\star$ is a martingale with respect to the noise terms $\epsilon_{\lfloor T\tau^\star \rfloor+1}, \dots, \epsilon_{T-1}$. We define a filtration $\{\mathcal{F}_T\}_{T\in \mathbb{N}}$ which is a $\sigma$-algebra generated by $(\epsilon_{\lfloor T\tau^\star \rfloor+1}, \dots, \epsilon_{T-1})$. Since $\epsilon_t$'s are independent random variables with mean zero, hence, we have:
\begin{equation*}
    \mathbb{E}(S_{T+1}|\mathcal{F}_T) = \mathbb{E}(S_T + \xi_2^{-1}\epsilon_{T+1}\eta^{\star^\prime}Z_t|\mathcal{F}_T) = S_T + \xi_2^{-1}\mathbb{E}(\epsilon_{T+1}|\mathcal{F}_T)\mathbb{E}(\eta^{\star^\prime}Z_t|\mathcal{F}_T) = S_T,
\end{equation*}
which indicates that $S_T$ is a martingale. From the Condition D(b), we obtain that $\text{Var}(\zeta_t^\star) = \xi_2^{-2}\text{Var}(\zeta_t) \to \sigma_2^{\star^2}$. Next we denote $Y_T = S_T - S_{T-1} = \xi_2^{-1}\epsilon_T\eta^{\star^\prime}Z_T$, $\sigma_T^2 = \mathbb{E}(Y_T^2|\mathcal{F}_{T1})$, $V_T^2 = \sum_{t=\lfloor T\tau^\star \rfloor+1}^T \sigma_t^2$, and $s_T^2 = \mathbb{E}V_T^2$. Then by using the martingale difference central limit theorem \citep{brown1971martingale}, we only need to verify that $V_T^2s_T^{-2} \overset{P}{\to} 1$ as $T \to \infty$. Note that 
\begin{equation}
    \label{eq:38}
    \begin{aligned}
        \sigma_T^2 &= \xi_2^{-2}\mathbb{E}\left( \epsilon_T^2\sum_{j=1}^p(\eta_j^{\star}Z_{T,j})^2 | \mathcal{F}_{T-1} \right) = \xi_2^{-2}\sum_{j=1}^p \mathbb{E}\left( (\epsilon_T\eta_j^{\star}Z_{T,j})^2 | \mathcal{F}_{T-1}\right) \\
        &= \xi_2^{-2}\sum_{j=1}^p \mathbb{E}(\epsilon_T^2 | \mathcal{F}_{T-1})\mathbb{E}((\eta_j^\star Z_{T,j}) | \mathcal{F}_{T-1}) = \xi_2^{-2}\sigma^2 \mathbb{E}\|\eta^{\star^\prime}Z_T\|_2^2,
    \end{aligned}
\end{equation}

and similarly we have:
\begin{equation}
    \label{eq:39}
    V_T^2 = \sum_{t=\lfloor T\tau^\star \rfloor+1}^T \sigma_t^2 = \xi_2^{-2}\sigma^2\sum_{t=\lfloor T\tau^\star \rfloor+1}^{T}\|\eta^{\star^\prime}Z_t\|_2^2,
\end{equation}
\begin{equation}
    \label{eq:40}
    s_n^2 = \mathbb{E}\left(\sum_{t=\lfloor T\tau^\star\rfloor+1}^T \mathbb{E}(Y_t^2|\mathcal{F}_{t-1})\right) = (T - \lfloor T\tau^\star\rfloor)\xi_2^{-2}\sigma^2\eta^\star\Sigma_Z^2\eta^{\star^\prime},
\end{equation}
then combining \eqref{eq:39} and \eqref{eq:40} implies that
\begin{equation}
    \frac{V_T^2}{s_T^2} = \frac{\sum_{t=\lfloor T\tau^\star \rfloor+1}^T \|\eta^{\star^\prime}Z_t\|_2^2}{(T-\lfloor T\tau^\star\rfloor)\eta^\star\Sigma_Z^2\eta^{\star^\prime}} \overset{P}{\to} 1, 
\end{equation}
as $T \to +\infty$, which holds based on ergodicity. Next, we focus on validating the Lindeberg condition. 
\begin{equation}
    \label{eq:41}
    \begin{aligned}
        &s_T^{-2}\sum_{t=\lfloor T\tau^\star\rfloor + 1}^n \mathbb{E}(Y_t^2\mathbb{I}(|Y_t| \geq \epsilon s_T)|\mathcal{F}_{T-1}) = s_T^{-4}\sum_{t=\lfloor T\tau^\star \rfloor+1}^T \mathbb{E}(s_n^2 Y_t^2 \mathbb{I}(|Y_t| \geq \epsilon s_T)|\mathcal{F}_{T-1}) \\
        \leq &\quad \epsilon^{-2}s_T^{-4}\sum_{t=\lfloor T\tau^\star \rfloor+1}^T \mathbb{E}(Y_t^4\mathbb{I}(|Y_t| \geq \epsilon s_T)|\mathcal{F}_{T-1}) \overset{(i)}{\leq} \epsilon^{-2}s_T^{-4}\sum_{t=\lfloor T\tau^\star \rfloor+1}^T \mathbb{E}(Y_t^4|\mathcal{F}_{T-1}) \\
        \leq &\quad \epsilon^{-2}s_T^{-4}\sum_{t=\lfloor T\tau^\star \rfloor+1}^T \xi_2^{-4}\mathbb{E}\left(\epsilon_t^4 \left(\sum_{j=1}^p \eta_j^\star Z_{t,j}\right)^4 | \mathcal{F}_{T-1}\right) \\
        \overset{(ii)}{\leq} &\quad \epsilon^{-2}s_T^{-4}\sum_{t=\lfloor T\tau^\star\rfloor+1}^T \xi_2^{-4}\mathbb{E}(\epsilon_t^4)\text{Tr}\left(\eta^{\star^\prime}Z_tZ_t^\prime \eta^\star\right)\text{Tr}\left(\eta^{\star^\prime}Z_tZ_t^\prime\eta^\star\right) \\
        {\leq} &\quad \frac{\sum_{t=\lfloor T\tau^\star\rfloor+1}^{T}\xi_2^{-4}\mathbb{E}(\epsilon_t^4)\left(\text{Tr}\left(\eta^{\star^\prime}Z_tZ_t^\prime\eta^\star\right)\right)^2}{\epsilon^2(T-\lfloor T\tau^\star\rfloor)\xi_2^{-4}\sigma^4\left(\text{Tr}\left(\eta^{\star^\prime}\Sigma_Z\eta^\star\right)\right)^2} \overset{P}{\to} 0.
    \end{aligned}
\end{equation}


Here the inequality (i) holds because of the definition of $Y_t$; (ii) holds because
\begin{equation*}
    \begin{aligned}
        &\mathbb{E}\left(\left(\sum_{j=1}^p \eta_j^\star Z_{t,j}\right)^4 | \mathcal{F}_{T-1}\right) \\
        &= \mathbb{E}\left(\left(\sum_{j=1}^p(\epsilon_t Z_{t,j}\eta_j^\star)^2\right)^2|\mathcal{F}_{T-1}\right) \\
        &= \sum_{j=1}^p\mathbb{E}(\epsilon_t^4)\mathbb{E}\left((Z_{t,j}\eta_j^\star)^4|\mathcal{F}_{T-1}\right) + (\mathbb{E}(\epsilon_t^2))^2\sum_{j,j'=1,2,\dots,p,j\neq j'}\mathbb{E}\left((Z_{t,j}\eta_j^\star)^2|\mathcal{F}_{T-1}\right)\mathbb{E}\left((Z_{t,j'}\eta_{j'}^\star)^2|\mathcal{F}_{T-1}\right) \\
        &\leq \mathbb{E}(\epsilon_t^4)\text{Tr}\left(\eta^{\star^\prime}Z_tZ_t^\prime \eta^\star\right)\text{Tr}\left(\eta^{\star^\prime}Z_tZ_t^\prime\eta^\star\right).
    \end{aligned}
\end{equation*}

Then for any $k \geq 1$, the moment function $\mathbb{E}(|\epsilon_t|^k) \leq (K_\epsilon k^{1/\gamma_2})^k$, which is derived by the property of sub-Weibull distribution, hence, $\mathbb{E}(\epsilon_t^4)$ is upper bounded. Therefore, by using the result of \eqref{eq:42}, we have validated the Lindeberg condition in accordance to Lemma 2 in \cite{brown1971martingale}. Then applying the martingale difference central limit theorem \cite{brown1971martingale} implies that:
\begin{equation}
    \label{eq:43}
    \sum_{t=\lfloor T\tau^\star\rfloor + 1}^{\lfloor T\tilde{\tau} \rfloor} \epsilon_t \eta^{\star^\prime}Z_t = \xi_2\sum_{t=\lfloor T\tau^\star\rfloor + 1}^{\lfloor T\tilde{\tau}\rfloor}\xi_2^{-1}\zeta_t = \xi_2\sum_{t=\lfloor T\tau^\star\rfloor+1}^{\lfloor T\tilde{\tau}\rfloor} \zeta_t^\star \overset{d}{\to} \sigma_2^\star W_2(r),
\end{equation}
as $\xi_2 \to 0$, where $W_2(r)$ is a Brownian motion defined on $[0, \infty)$. Thus, we obtain that
\begin{equation}
    \label{eq:44}
    \begin{aligned}
        \mathcal{C}(\tilde{\tau}; \phi_1^\star, \phi_2^\star) 
        &= -\sum_{t=\lfloor T\tau^\star\rfloor+1}^{\lfloor T\tilde{\tau}\rfloor}\left\{ (X_t - \phi_1^{\star^\prime} Z_t)^2 - (X_t - \phi_2^{\star^\prime} Z_t)^2 \right\} \\
        &= -\sum_{t=\lfloor T\tau^\star\rfloor+1}^{\lfloor T\tilde{\tau} \rfloor} \left(\eta^\star Z_tZ_t^\prime \eta^{\star^\prime} - 2\epsilon_t\eta^{\star^\prime}Z_t\right) \overset{d}{\to} 2\sigma_2^\star W_2(r) - \sigma_2^2 r. 
    \end{aligned}
\end{equation}
Symmetrically, for $r \in [-c_1, 0)$, it can be shown that $\mathcal{C}(\tilde{\tau}; \phi_1^\star, \phi_2^\star) \overset{d}{\to} 2\sigma_1^\star W_1(r) + \sigma_1^2r$, where $W_1(r)$ is another Brownian motion on $[0, \infty)$ independent from $W_2(r)$.


Symmetrically, we can show that for $r \in [-c_u, 0)$, it implies $\mathcal{C}(\tilde{\tau}, \phi_1^\star, \phi_2^\star) \overset{d}{\to} 2\sigma_1^\star W_1(r) + \sigma_1^2r$, where $W_1(r)$ is another Brownian motion defined on 
$[0,\infty)$ independent from $W_2(r)$. 
The last step is to verify the asymptotic distribution equals to the process $Z(r)$. We define the process
\begin{equation*}
    G(r) = 
    \begin{cases}
        2\sigma_1^\star W_1(r) + \sigma_1^2r,\quad &\text{if } r < 0, \\
        0,\quad &\text{if } r = 0, \\
        2\sigma_2^\star W_2(r) - \sigma_2^2r,\quad &\text{if } r > 0. 
    \end{cases}
\end{equation*}

Let $r' = (\sigma_1^4/\sigma_1^{\star^2})r$, then it implies that $W((\sigma_1^4/\sigma_1^{\star^2})r) \overset{d}{=} \sigma_1^2/\sigma_1^\star W(r)$. Consequently, we have
\begin{equation*}
    \begin{aligned}
        \argmax_r\left\{ 2\sigma_2^\star W_2(r) - \sigma_2^2 r \right\} &\overset{d}{=} \frac{\sigma_1^{\star^2}}{\sigma_1^4}\argmax_{r'}\left\{ \frac{2\sigma_1^\star\sigma_2^\star}{\sigma_1^2}W_2(r') - \frac{\sigma_1^{\star^2}\sigma_2^2}{\sigma_1^4}r' \right\} \\
        &\overset{d}{=} \frac{\sigma_1^{\star^2}}{\sigma_1^4}\argmax_{r'}\left\{ 2\sigma_1^\star\sigma_2^\star W_2(r') - \frac{\sigma_1^{\star^2}\sigma_2^2}{\sigma_1^2}r' \right\} \\
        &\overset{d}{=} \frac{\sigma_1^{\star^2}}{\sigma_1^4}\argmax_{r'}\left\{ \frac{2\sigma_2^\star}{\sigma_1^\star}W_2(r') - \frac{\sigma_2^2}{\sigma_1^2}r' \right\},
    \end{aligned}
\end{equation*}
and similarly we have
\begin{equation*}
    \argmax_r\left\{ 2\sigma_1^\star W_1(r) + \sigma_1^2r \right\} \overset{d}{=} \frac{\sigma_1^{\star^2}}{\sigma_1^4}\argmax_{r'}\left\{2W_1(r') + r' \right\}.
\end{equation*}


Combining these results together completes the proof of part (iii) and the final statement of Theorem 4.2. 
\end{proof}

\section{C \quad Auxiliary Lemmas and Their Proofs} \label{App_C}
\renewcommand{\theequation}{C.\arabic{equation}}
\setcounter{equation}{0}
\renewcommand{\thelemma}{C.\arabic{lemma}}
\renewcommand{\theorem}{C.\arabic{theorem}}
\renewcommand{\thedefinition}{C.\arabic{definition}}

\begin{lemma}\label{lemma:c1}
Under Condition A, we have the following results. (a) The random variable $\zeta_t = \epsilon_t{\eta^\star}^\prime Z_t$ is sub-Weibull with parameter $\gamma_2/2$ bounded by $K_{\zeta} = 2^{2/{\gamma_2}}K_\epsilon K_X \xi_2$ for each time point $t=1,2,\dots, T$. (b) the $k$-th moment of $\zeta_t$ is bounded: $\mathbb{E}(|\zeta_t|^k) \leq K_{\zeta}^k k^{\frac{2k}{\gamma_2}}$ for any $k \geq 1$. 
\end{lemma}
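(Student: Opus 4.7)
The plan is to reduce everything to two standard facts about Orlicz norms: (i) projections of a sub-Weibull random vector are sub-Weibull with the same parameter, and (ii) a product of two sub-Weibull($\gamma$) variables is sub-Weibull($\gamma/2$), with an explicit constant coming from Cauchy--Schwarz.

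First I would control the linear form $\eta^{\star\prime} Z_t$. Condition A together with Remark 3.1 (via \cite{wong2020lasso}) gives $\|X_t\|_{\psi_{\gamma_2}} \leq K_X$, so each coordinate of $Z_t$ is sub-Weibull($\gamma_2$); Definition 2.3 then makes $Z_t$ a sub-Weibull($\gamma_2$) random vector with norm at most $K_X$. Applying the definition to the unit vector $\eta^\star/\xi_2 \in S^{p-1}$ and using homogeneity of the Orlicz norm yields $\|\eta^{\star\prime}Z_t\|_{\psi_{\gamma_2}} \leq \xi_2 K_X$.

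Next, for part (a), I would prove the product inequality $\|XY\|_{\psi_{\gamma_2/2}} \leq 2^{2/\gamma_2}\|X\|_{\psi_{\gamma_2}}\|Y\|_{\psi_{\gamma_2}}$ directly: by Cauchy--Schwarz and Definition 2.2(2),
\begin{equation*}
(\mathbb{E}|XY|^k)^{1/k} \leq (\mathbb{E}|X|^{2k})^{1/(2k)}(\mathbb{E}|Y|^{2k})^{1/(2k)} \leq (2k)^{2/\gamma_2}\|X\|_{\psi_{\gamma_2}}\|Y\|_{\psi_{\gamma_2}},
\end{equation*}
so dividing by $k^{2/\gamma_2}$ and taking $\sup_{k\geq 1}$ produces the claimed bound. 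Applied with $X=\epsilon_t$ and $Y=\eta^{\star\prime}Z_t$ (independent by Condition A(a), though independence is not strictly needed here), this gives $\|\zeta_t\|_{\psi_{\gamma_2/2}} \leq 2^{2/\gamma_2}K_\epsilon\cdot\xi_2 K_X = K_\zeta$.

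Finally, part (b) is an immediate consequence of Definition 2.2(2) applied to $\zeta_t$ with parameter $\gamma_2/2$: the bound $(\mathbb{E}|\zeta_t|^k)^{1/k}\leq K_\zeta\, k^{2/\gamma_2}$ raised to the $k$-th power gives $\mathbb{E}|\zeta_t|^k \leq K_\zeta^k k^{2k/\gamma_2}$. There is no real obstacle; the only care needed is bookkeeping the constant $2^{2/\gamma_2}$ so that it matches $K_\zeta$ exactly, and checking that Definition 2.3 legitimately upgrades sub-Weibull marginals to a sub-Weibull vector (which is immediate because $\eta^\star/\xi_2$ lies on $S^{p-1}$).
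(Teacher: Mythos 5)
Your proposal is correct and follows essentially the same route as the paper's proof: Cauchy--Schwarz (H\"older with exponent $2$) combined with the sub-Weibull moment characterization $(\mathbb{E}|X|^{2k})^{1/(2k)}\leq (2k)^{1/\gamma_2}\|X\|_{\psi_{\gamma_2}}$, the projection bound $\|\eta^{\star\prime}Z_t\|_{\psi_{\gamma_2}}\leq \xi_2 K_X$, and then the moment bound for part (b), with identical constant bookkeeping yielding $K_\zeta = 2^{2/\gamma_2}K_\epsilon K_X\xi_2$. The only cosmetic difference is that you package the Cauchy--Schwarz step as a standalone product-norm inequality, which the paper writes inline.
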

\begin{proof}[Proof of Lemma \ref{lemma:c1}]
For part (a), we apply Condition A that $\{\epsilon_t\}$ and $\{X_t\}$ are both sub-Weibull with parameter $\gamma_2$. Then the $p$-lagged vector $Z_t$ defined in (7) is also a sub-Weibull($\gamma_2$) distributed. Then $\zeta_t = \epsilon_t {\eta^\star}^\prime Z_t$ is real valued sub-Weibull distributed, and by applying general H\"{o}lder inequality, it can be proved by the following:
\begin{equation*}
    \begin{aligned}
        \|\zeta_t\|_k 
        &\leq \|\epsilon_t{\eta^\star}^\prime Z_t\|_k \leq \|\epsilon_t\|_{2k}\|{\eta^\star}^\prime Z_t\|_{2k} \leq \|\epsilon_t\|_{2k}(2k)^{\frac{1}{\gamma_2}}\|{\eta^\star}^\prime Z_t\|_{\psi_{\gamma_2}} \\
        &\leq \|\epsilon_t\|_{2k}(2k)^{\frac{1}{\gamma_2}}\|\eta^\star\|_2\|Z_t\|_{\psi_{\gamma_2}} \leq (2k)^{\frac{2}{\gamma_2}}K_\epsilon K_X\xi_2.
    \end{aligned}
\end{equation*}
Hence, we proved that $\zeta_t \sim$ sub-Weibull($\gamma_2/2$). For part (b), the $k$-th moment ($k\geq 1$) generating function is given by $\mathbb{E}(|\zeta_t|^k) \leq (K_\zeta k^{\frac{2}{\gamma_2}})^k = K_\zeta^k k^{\frac{2k}{\gamma_2}}$. 
\end{proof}

\begin{lemma}\label{lemma:c2}
Suppose Condition A holds and $u_T$, $v_T$ are any non-negative sequences with $(1/T) \leq v_T \leq u_T$. Let $K$ be determined by $K := 2^{2/{\gamma_2}}(K_X + K_\epsilon)^2$. Then for any $c_u > 0$ and $T \to +\infty$, we have:
\begin{equation*}
    \begin{aligned}
        &(i)\quad \sup_{\substack{\tau \in \mathcal{G}(u_T, v_T),\\ \tau \geq \tau^\star}}\frac{1}{T}\left\| \sum_{t=\lfloor T\tau^\star\rfloor + 1}^{\lfloor T\tau \rfloor}\epsilon_t Z_t \right\|_\infty \leq c_1 K\left( \frac{u_T\log(p \vee T)}{T} \right)^{\frac{1}{2}}, \\
        &(ii)\quad \sup_{\substack{\tau \in \mathcal{G}(u_T, v_T),\\ \tau \geq \tau^\star}}\frac{1}{T}\left| \sum_{t=\lfloor T\tau^\star\rfloor + 1}^{\lfloor T\tau \rfloor}\epsilon_t (\hat{\eta} - \eta^\star)^\prime Z_t \right| \leq c_1 K\left( \frac{u_T\log(p \vee T)}{T} \right)^{\frac{1}{2}}\|\hat{\eta} - \eta^\star\|_1,
    \end{aligned}
\end{equation*}
with probability at least $1 - 2\exp(-c_u\log (p\vee T)) = 1-o(1)$, where $c_1$ only depends on $c_{u1}$ and parameters $\gamma_1, \gamma_2$ and $c$ in the Condition A.
\end{lemma}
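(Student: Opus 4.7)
My plan is to reduce part (ii) to part (i) immediately, then to attack part (i) through a Bernstein-type concentration inequality for $\beta$-mixing sub-Weibull sequences, followed by a union bound over coordinates and over the discrete set of candidate change points.

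\textbf{Step 1 (reduce (ii) to (i)).} For any fixed $\tau$, H\"older's inequality gives
\begin{equation*}
    \left|\sum_{t=\lfloor T\tau^\star\rfloor+1}^{\lfloor T\tau\rfloor} \epsilon_t (\hat{\eta}-\eta^\star)^\prime Z_t\right|
    \;=\; \left|(\hat{\eta}-\eta^\star)^\prime \sum_{t=\lfloor T\tau^\star\rfloor+1}^{\lfloor T\tau\rfloor} \epsilon_t Z_t\right|
    \;\leq\; \|\hat{\eta}-\eta^\star\|_1 \, \Big\|\sum_{t=\lfloor T\tau^\star\rfloor+1}^{\lfloor T\tau\rfloor} \epsilon_t Z_t\Big\|_\infty.
\end{equation*}
Taking the supremum over $\tau \in \mathcal{G}(u_T,v_T)$ with $\tau\geq\tau^\star$ and dividing by $T$, part (ii) follows directly once part (i) is established.

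\textbf{Step 2 (pointwise concentration for a fixed coordinate $j$ and fixed $\tau$).} For $j \in \{1,\dots,p\}$ and $t > j$, set $Y_{t,j} := \epsilon_t Z_{t,j} = \epsilon_t X_{t-j}$. Since $\epsilon_t$ is independent of the past and has mean zero, $\mathbb{E}[Y_{t,j}]=0$, and by Lemma C.1 (product of two sub-Weibull($\gamma_2$) variables) we have $\|Y_{t,j}\|_{\psi_{\gamma_2/2}} \leq 2^{2/\gamma_2}K_\epsilon K_X \leq K$. Moreover, since the sequence $\{(\epsilon_t, X_{t-j})\}_t$ inherits the geometric $\beta$-mixing property of $\{X_t\}$ under Condition A(b) (the i.i.d.\ innovations leave the mixing coefficients intact), $\{Y_{t,j}\}_t$ is itself geometrically $\beta$-mixing with exponent $\gamma_1$. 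I will then invoke a Bernstein-type inequality for sums of geometrically $\beta$-mixing sub-Weibull random variables (Merlev\`ede--Peligrad--Rio style; see also \cite{wong2020lasso}) to obtain
\begin{equation*}
    \mathbb{P}\!\left(\Big|\sum_{t=\lfloor T\tau^\star\rfloor+1}^{\lfloor T\tau\rfloor} Y_{t,j}\Big| \geq x\right)
    \leq 2\exp\!\left(-\frac{c\, x^2}{n K^2}\right) + n \exp\!\left(-\frac{(x/K)^{\gamma^\star}}{c}\right),
\end{equation*}
valid for $x$ in an appropriate range, where $n := \lfloor T\tau\rfloor - \lfloor T\tau^\star\rfloor \leq T u_T$ and $\gamma^\star$ is determined by $\gamma_1$ and $\gamma_2/2$. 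Choosing $x = c_1 K \sqrt{T u_T\,\log(p\vee T)}$ with $c_1$ sufficiently large makes the Gaussian term dominate and produces a bound of order $\exp(-c_1' \log(p\vee T))$.

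\textbf{Step 3 (union bound over $\tau$ and $j$).} The set of admissible right endpoints $\lfloor T\tau\rfloor$ in $\mathcal{G}(u_T,v_T)\cap\{\tau\geq\tau^\star\}$ has cardinality at most $T u_T \leq T$, and there are $p$ coordinates. A union bound over these at most $pT$ events, combined with the pointwise bound above, gives the uniform inequality
\begin{equation*}
    \sup_{\substack{\tau\in \mathcal{G}(u_T,v_T)\\ \tau\geq\tau^\star}} \frac{1}{T}\Big\|\sum_{t=\lfloor T\tau^\star\rfloor+1}^{\lfloor T\tau\rfloor} \epsilon_t Z_t\Big\|_\infty \leq c_1 K \sqrt{\frac{u_T\log(p\vee T)}{T}},
\end{equation*}
with probability at least $1 - 2\exp(-c_u \log(p\vee T))$ once $c_1$ is enlarged to absorb the $\log(pT)$ from the union bound.

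\textbf{Main obstacle.} The delicate part is the pointwise concentration in Step 2. The combination of (a) temporal dependence through $\beta$-mixing, (b) heavy-tailed sub-Weibull($\gamma_2/2$) summands (note the exponent $\gamma_2/2 < 1$ in general), and (c) the need to get a sub-Gaussian scaling of order $\sqrt{n \log(p\vee T)}$ rather than a heavier polynomial-in-$\log$ scale, requires careful application of the Merlev\`ede--Peligrad--Rio Bernstein inequality together with a blocking argument. Care must also be taken that the Orlicz norm bound $K$ is exactly the constant $2^{2/\gamma_2}(K_X+K_\epsilon)^2$ claimed in the statement, which follows from the sub-Weibull product inequality together with $K_\epsilon K_X \leq (K_X+K_\epsilon)^2$. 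Once these pieces are in place, the union bound in Step 3 is routine.
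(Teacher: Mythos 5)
Your proposal is correct and runs on the same engine as the paper's proof: reduce (ii) to (i) by H\"older, get a pointwise tail bound for $\sum_t \epsilon_t Z_{t,j}$ from a Bernstein-type inequality for geometrically $\beta$-mixing sub-Weibull sums (the paper invokes Lemma 13 of Wong et al.\ 2020, the same tool you describe in Merlev\`ede--Peligrad--Rio style, with the same effective exponent $\gamma=(1/\gamma_1+2/\gamma_2)^{-1}$), then union-bound over the $p$ coordinates and pick the threshold $x\asymp K\sqrt{Tu_T\log(p\vee T)}$ so the sub-Gaussian term dominates. The one place you diverge is the decomposition: you apply the concentration bound directly to the products $\epsilon_t Z_{t,j}$, using that they are centered sub-Weibull($\gamma_2/2$) with norm $2^{2/\gamma_2}K_\epsilon K_X\le K$, whereas the paper follows Wong et al.'s Proposition 7 and polarizes $2\epsilon_t Z_{t,j}$ into the three centered squares $(\epsilon_t+Z_{t,j})^2-\mathbb{E}(\cdot)$, $\epsilon_t^2-\mathbb{E}(\cdot)$, $Z_{t,j}^2-\mathbb{E}(\cdot)$ and bounds each separately --- which is exactly where the constant $K=2^{2/\gamma_2}(K_X+K_\epsilon)^2$ in the statement originates; your direct route is slightly leaner and yields an even smaller norm constant, so nothing is lost. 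You are also more explicit than the paper about uniformity in $\tau$: the paper states its tail bound for a fixed $\tau$ (the factor $Tu_T$ in its first exponential comes from the concentration inequality, not from a union bound over endpoints), while your additional union bound over the at most $Tu_T\le T$ admissible endpoints costs only a $\log T$ factor absorbed into $\log(p\vee T)$, so the claimed rate is unchanged. Two shared caveats, neither of which makes your argument weaker than the paper's: the sub-Weibull term in the tail bound is only negligible once $Tu_T$ is large relative to $(\log(p\vee T))^{2/\gamma-1}$ (the paper's ``for sufficiently large $T$''), and the transfer of the $\beta$-mixing property to functions of $(\epsilon_t,Z_{t,j})$, which you assert as inheritance, is equally implicit in the paper's squares decomposition since $\epsilon_t$ enters those squares too.
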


\begin{proof}[Proof of Lemma \ref{lemma:c2}]
To prove (i), we use a similar strategy as that of Proposition 7 in \cite{wong2020lasso}. Under Condition A, and the conclusions in Lemma \ref{lemma:c1}, $\{\epsilon_t\}$ and $\{Z_t\}$ are sub-Weibull($\gamma_2$) distributed. Moreover, we know that $\mathbb{E}(\epsilon_t) = 0$ and $\mathbb{E}(Z_{t,j}) = 0$ for each $j=1,2,\dots, p$. Note that 
\begin{equation*}
    \left\| \sum_{t=\lfloor T\tau^\star\rfloor + 1}^{\lfloor T\tau \rfloor}\epsilon_t Z_t \right\|_\infty = \max_{1\leq j \leq p}\left| \sum_{t=\lfloor T\tau^\star\rfloor + 1}^{\lfloor T\tau \rfloor}\epsilon_t Z_{t,j} \right|.
\end{equation*}
Therefore, we consider the following fact, for any $t \in \mathbb{R}^+$,
\begin{equation*}
    \begin{aligned}
        &\mathbb{P}\left( \frac{1}{Tu_T}\left| \sum_{t=\lfloor T\tau^\star \rfloor + 1}^{\lfloor T\tau \rfloor}\epsilon_t Z_{t,j} \right| > 3t \right) \\
        &\leq \mathbb{P}\left( \frac{1}{2Tu_T}\left| \sum_{t=\lfloor T\tau^\star \rfloor + 1}^{\lfloor T\tau \rfloor}\left\{(\epsilon_t + Z_{t,j})^2 - \mathbb{E}(\epsilon_t + Z_{t,j})^2\right\} \right| > t \right) \\
        &+ \mathbb{P}\left( \frac{1}{2Tu_T}\left| \sum_{t=\lfloor T\tau^\star \rfloor + 1}^{\lfloor T\tau \rfloor}\left\{ Z_{t,j}^2 - \mathbb{E}(Z_{t,j}^2) \right\} \right| > t \right) + \mathbb{P}\left( \frac{1}{2Tu_T}\left| \sum_{t=\lfloor T\tau^\star \rfloor + 1}^{\lfloor T\tau \rfloor}\left\{ \epsilon_t^2 - \mathbb{E}(\epsilon_t^2) \right\} \right| > t \right).
    \end{aligned}
\end{equation*}


Recall that $\|\epsilon_t\|_{\psi_{\gamma_2}} \leq K_\epsilon$ and $\|Z_{t,j}\|_{\psi_{\gamma_2}} \leq K_X$, as well as the triangle inequality for Orlicz norm: $\|\epsilon_t + Z_{t,j}\|_{\psi_{\gamma_2}} \leq \|\epsilon_t\|_{\psi_{\gamma_2}} + \|Z_{t,j}\|_{\psi_{\gamma_2}} \leq K_\epsilon + K_X$. According to the properties of the sub-Weibull distribution, we know that $\epsilon_t^2$, $Z_{t,j}^2$ and $\epsilon_t Z_{t,j}$ are all sub-Weibull($\gamma_2/2$) and upper bounded by $K := 2^{2/{\gamma_2}}(K_\epsilon + K_X)^2$.

Next, by applying Lemma 13 in \cite{wong2020lasso} to each term above and using union bound inequality over all possible $p$ coordinates in $Z_t$, we obtain that:
\begin{equation*}
    \mathbb{P}\left(\frac{1}{Tu_T}\left\| \sum_{t=\lfloor T\tau^\star \rfloor +1}^{\lfloor T\tau \rfloor} \epsilon_t Z_t \right\|_\infty > 3t\right) \leq 3pTu_T\exp\left(-\frac{(2tTu_T)^\gamma}{K^\gamma C_1}\right) + 3p\exp\left(-\frac{4t^2 Tu_T}{K^2 C_2}\right),
\end{equation*}


where $\gamma = (1/{\gamma_1} + 2/{\gamma_2})^{-1} < 1$, and $\gamma_1$ is for $\beta$-mixing assumption in Condition A(b). The constants $C_1, C_2>0$ only depend on $\gamma_1, \gamma_2$. 

Now, we take $t = K\max\left\{\sqrt{\frac{C_2\log(p\vee T)}{Tu_T}}, \frac{(C_1\log(p\vee T))^{1/\gamma}}{2Tu_T}\right\}$, then for sufficiently large $T$, we obtain that:
\begin{equation*}
    \mathbb{P}\left( \frac{1}{Tu_T}\left\| \sum_{t=\lfloor T\tau^\star \rfloor + 1}^{\lfloor T\tau \rfloor} \epsilon_t Z_t \right\|_\infty > 3K\sqrt{\frac{C_2\log(p\vee T)}{Tu_T}} \right) \leq 2\exp(-c_u^\prime\log(p\vee T)),
\end{equation*}
where $c_u^\prime > 0$ is some large enough universal constant. Hence, we have that
\begin{equation*}
    \mathbb{P}\left(\frac{1}{T}\left\| \sum_{t=\lfloor T\tau^\star\rfloor + 1}^{\lfloor T\tau \rfloor}\epsilon_t Z_t \right\|_\infty > c_1K\left(\frac{u_T\log(p\vee T)}{T}\right)^{\frac{1}{2}}\right) \leq 2\exp(-c_u^\prime\log(p\vee T)),
\end{equation*}
then for (ii), it can be implied directly by using H\"{o}lder inequality and result of (i).
\end{proof}

\begin{lemma}\label{lemma:c3}
Suppose Condition A holds and let $u_T, v_T$ be any non-negative sequence such that $(1/T) \leq v_T \leq u_T$. Then there exist some large enough constants $c_i>0$ such that
\begin{equation*}
    \begin{aligned}
        &(i)\quad \inf_{\substack{\tau \in \mathcal{G}(u_T, v_T) \\ \tau \geq \tau^\star}}\frac{1}{T}\sum_{t=\lfloor T\tau^\star \rfloor + 1}^{\lfloor T\tau \rfloor}\|\eta^{\star^\prime} Z_t\|_2^2 \geq v_T \kappa_{\min}\xi_2^2 - c_{a1}K_X^2\xi_2^2\sqrt{\frac{u_T}{T}}, \\
        &(ii)\quad \sup_{\substack{\tau \in \mathcal{G}(u_T, v_T) \\ \tau \geq \tau^\star}}\frac{1}{T}\sum_{t=\lfloor T\tau^\star \rfloor+1}^{\lfloor T\tau\rfloor}\|\eta^{\star^\prime}Z_t\|_2^2\leq u_T\kappa_{\max}\xi_2^2 + c_{a1}K_X^2\xi_2^2\sqrt{\frac{u_T}{T}},
    \end{aligned}
\end{equation*}
where we set $c_{a1} = c_{a2}2^{2/{\gamma_2}}$ and $c_{a2} \geq \sqrt{-C_2\log a}$ with any $0 < a < 1$.
\end{lemma}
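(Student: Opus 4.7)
The plan is to decompose $\frac{1}{T}\sum_{t=\lfloor T\tau^\star\rfloor + 1}^{\lfloor T\tau\rfloor} \|\eta^{\star\prime}Z_t\|_2^2$ into its expectation plus a mean-zero fluctuation, bound each piece separately, and then take a union bound over the at most $Tu_T$ admissible values of $\tau$. For the expectation, as soon as $t > \lfloor T\tau^\star\rfloor + p$ the regressor $Z_t$ is drawn entirely from the post-change stationary law with covariance $\Sigma_Z^2$, so $\mathbb{E}\|\eta^{\star\prime}Z_t\|_2^2 = \eta^{\star\prime}\Sigma_Z^2\eta^\star \in [\kappa_{\min}\xi_2^2, \kappa_{\max}\xi_2^2]$ by Condition A(c). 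Since $\tau \in \mathcal{G}(u_T,v_T)$ with $\tau\geq\tau^\star$ forces $Tv_T \leq \lfloor T\tau\rfloor - \lfloor T\tau^\star\rfloor < Tu_T$, summing and dividing by $T$ pins the mean between $v_T\kappa_{\min}\xi_2^2$ and $u_T\kappa_{\max}\xi_2^2$. The at most $p$ boundary terms where $Z_t$ straddles both regimes contribute $O(p\xi_2^2/T)$, which is absorbed into $c_{a1}K_X^2\xi_2^2\sqrt{u_T/T}$ since $p^2 \leq Tu_T$ by Condition B(a).

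For the fluctuation $W_t := \|\eta^{\star\prime}Z_t\|_2^2 - \mathbb{E}\|\eta^{\star\prime}Z_t\|_2^2$, I would invoke the same sub-Weibull concentration tool used in the proof of Lemma \ref{lemma:c2}. Because $\|\eta^\star\|_2 = \xi_2$ and $\|Z_t\|_{\psi_{\gamma_2}} \leq K_X$, the projection $\eta^{\star\prime}Z_t$ is sub-Weibull($\gamma_2$) with Orlicz norm at most $K_X\xi_2$, hence $\|\eta^{\star\prime}Z_t\|_2^2$ is sub-Weibull($\gamma_2/2$) with norm at most $2^{2/\gamma_2}K_X^2\xi_2^2$ (squaring doubles the Orlicz parameter), and centering preserves this up to an absolute constant. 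The sequence $\{W_t\}$ inherits geometric $\beta$-mixing from $\{X_t\}$ via Condition A(b), so Lemma 13 of \cite{wong2020lasso} applied to the partial sum of length at most $Tu_T$ delivers a two-term tail: a Weibull piece and a Bernstein/Gaussian piece whose variance proxy is of order $u_T(2^{2/\gamma_2}K_X^2\xi_2^2)^2/T$. Choosing the target deviation $s = c_{a1}K_X^2\xi_2^2\sqrt{u_T/T}$ with $c_{a1} = c_{a2}\,2^{2/\gamma_2}$ and $c_{a2}\geq\sqrt{-C_2\log a}$ makes the Gaussian piece dominate and bounds it by $a$, while Condition B(a) drives the Weibull piece to $o(1)$.

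To upgrade from pointwise to uniform control, a union bound over the at most $Tu_T$ grid points in $\mathcal{G}(u_T,v_T)$ inflates the failure probability by a factor of $Tu_T$; the resulting $\log(Tu_T)$ factor is only logarithmic in $T$ and is absorbed into the generic constant $c_{a2}$, which is allowed to depend on $a$, $\gamma_1$, $\gamma_2$, and $c$ exactly as in the parallel bookkeeping of Lemma \ref{lemma:c2}. Combining the expectation bound with the uniform upper bound on $|\sum W_t|$ gives the mean-minus-deviation inequality (i) and, by applying the same argument to $\{-W_t\}$, the mean-plus-deviation inequality (ii), simultaneously.

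The main obstacle is the bookkeeping needed to ensure that after the union bound the Gaussian exponent $s^2T/(u_TK_X^4\xi_2^4)\asymp c_{a1}^2$ still dominates $\log(Tu_T/a)$ under the stated choice of $c_{a1}$; since the $\log(Tu_T)$ factor is polylogarithmic and can be absorbed into a generic constant, the algebra goes through. A secondary subtlety is verifying that the Weibull-tail term from Wong's inequality is $o(1)$ uniformly over $\tau$, which again follows from Condition B(a), as $Tu_T\geq Tv_T\geq p^2\log p$ grows polynomially. All remaining steps are direct applications of tools already used in the proof of Lemma \ref{lemma:c2}.
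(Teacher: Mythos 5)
Your overall strategy coincides with the paper's: the same decomposition into a deterministic mean term (bounded between $Tv_T\kappa_{\min}\xi_2^2$ and $Tu_T\kappa_{\max}\xi_2^2$ via Condition A(c)) plus a centered fluctuation, the same observation that $\|\eta^{\star\prime}Z_t\|_2^2$ is sub-Weibull($\gamma_2/2$) with norm of order $2^{2/\gamma_2}K_X^2\xi_2^2$, the same appeal to Lemma 13 of Wong et al.\ for $\beta$-mixing sub-Weibull sums, and the same threshold $c_{a2}2^{2/\gamma_2}K_X^2\xi_2^2/\sqrt{Tu_T}$ with $c_{a2}\geq\sqrt{-C_2\log a}$.

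The genuine problem is your uniformity step. You propose a union bound over the (up to) $Tu_T$ admissible values of $\tau$ and then claim the resulting $\log(Tu_T)$ factor ``is absorbed into the generic constant $c_{a2}$.'' That is not legitimate: $c_{a2}$ is a fixed constant depending only on $a$, $\gamma_1$, $\gamma_2$, $c$, and may not grow with $T$. With a fixed $c_{a2}$, the union bound turns the sub-Gaussian term into $Tu_T\exp(-c_{a2}^2/C_2)$, which is not $\leq a$; to restore failure probability $a$ you would need $c_{a2}^2\gtrsim C_2\log(Tu_T/a)$, i.e.\ the deviation bound would acquire an extra $\sqrt{\log(Tu_T)}$ factor and you would prove a strictly weaker statement than the lemma, which would then propagate into Lemmas C.5 and 4.1. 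The paper avoids this by writing the bound so that the polynomial-in-$T$ prefactor multiplies only the stretched-exponential (Weibull) term, which vanishes as $T\to\infty$ under Condition B(a), while the sub-Gaussian term that fixes $c_{a2}\geq\sqrt{-C_2\log a}$ carries no such prefactor. A clean way to repair your argument without any union bound is to use monotonicity: the summands $\|\eta^{\star\prime}Z_t\|_2^2$ are non-negative, so the partial sum is non-decreasing in $\tau$, and hence the infimum in (i) is attained at the shortest admissible segment (length $\approx Tv_T$) and the supremum in (ii) at the longest (length $\approx Tu_T$); a single application of the concentration inequality at each endpoint length (whose deviation is at most of order $\sqrt{Tu_T}\,K_X^2\xi_2^2$ in both cases) then yields exactly the stated bounds with a $T$-independent $c_{a1}$.
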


\begin{proof}[Proof of Lemma \ref{lemma:c3}]
According to Lemma \ref{lemma:c1}, we know that both $\eta^{\star^\prime}Z_t$ and $Z_t$ are sub-Weibull($\gamma_2$) distributed. Then we use lines similar to those of Lemma 3 in \cite{safikhani2018joint} and Lemma 1 in \cite{bai2021multiple}. We rewrite the left hand side of both (i) and (ii) as:
\begin{equation*}
    \frac{1}{T}\sum_{t=\lfloor T\tau^\star \rfloor + 1}^{\lfloor T\tau \rfloor} \|\eta^{\star^\prime}Z_t\|_2^2 = \frac{1}{T}\sum_{t=\lfloor T\tau^\star \rfloor + 1}^{\lfloor T\tau \rfloor}\sum_{j=1}^p (\eta_j^\star Z_{t,j})^2,
\end{equation*}
where $\eta_j^\star$ and $Z_{t,j}$ are the $j$-th coordinate of coefficient vector $\eta^\star$ and random vector $Z_t$. Note that by using the property of sub-Weibull distribution (see Definition 2.3), we have:
\begin{equation*}
    \sum_{j=1}^p s_{t,j} = \sum_{j=1}^p \left\{ (\eta_j^\star Z_{t,j})^2 - \mathbb{E}(\eta_j^\star Z_{t,j})^2 \right\} \sim \text{sub-Weibull}(\gamma_2/2),
\end{equation*}
and $\mathbb{E}(s_{t,j}) = 0$. Then we let $S_\tau(\eta) = \sum_{t=\lfloor T\tau^\star \rfloor+1}^{\lfloor T\tau \rfloor}\left( \sum_{j=1}^p\left\{ (\eta_j^\star Z_{t,j})^2 - \mathbb{E}(\eta_j^\star Z_{t,j})^2\right\} \right)$, and by the definition of the sequence $u_T$, we know that $\lfloor T\tau \rfloor - \lfloor T\tau^\star \rfloor \leq Tu_T$. We apply the Lemma 13 in \cite{wong2020lasso} to $S_\tau(\eta)$, for any $t>0$:
\begin{equation}\label{eq:27}
    \mathbb{P}\left( \sup_{\substack{\tau \in \mathcal{G}(u_T, v_T) \\ \tau \geq \tau^\star}}\left| \frac{S_\tau(\eta)}{Tu_T} \right| > t \right) \leq Tu_T\exp\left(-\frac{(tTu_T)^\gamma}{K^\gamma C_1}\right) + \exp\left(-\frac{t^2Tu_T}{K^2C_2}\right),
\end{equation}
where $\gamma = (1/{\gamma_1} + 2/{\gamma_2})^{-1} < 1$. $C_1, C_2$ are some large enough constants that depend only on $\gamma_1$, $\gamma_2$, and $K_X$. 

Hence, we take $t = \frac{c_{a2}2^{2/{\gamma_2}}\xi_2^2K_X^2}{\sqrt{Tu_T}}$, we put it back to \eqref{eq:27} derive that:
\begin{equation*}
    \mathbb{P}\left(\sup_{\substack{\tau \in \mathcal{G}(u_T, v_T) \\ \tau\geq \tau^\star}}\left| \frac{S_\tau(\eta)}{Tu_T} \right| > \frac{c_{a2}2^{2/{\gamma_2}}\xi_2^2K_X^2}{\sqrt{Tu_T}}\right) \leq Tu_T\exp\left(-\frac{c_{a2}^\gamma(Tu_T)^{\gamma/2}}{C_1}\right) + \exp\left(-\frac{c_{a2}^2}{C_2}\right),
\end{equation*}
hence, as $T\to +\infty$, and we set $c_{a2} \geq \sqrt{-C_2\log a}$ for $0<a<1$, then we obtain that:
\begin{equation*}
    \mathbb{P}\left(\sup_{\substack{\tau \in \mathcal{G}(u_T, v_T) \\ \tau\geq \tau^\star}}\left| \frac{S_\tau(\eta)}{Tu_T} \right| > \frac{c_{a2}2^{2/{\gamma_2}}\xi_2^2K_X^2}{\sqrt{Tu_T}}\right) \leq \exp\left(-\frac{c_{a2}^2}{C_2}\right) \leq a.
\end{equation*}

Therefore, together with the Condition A(c), we have:
\begin{equation*}
    \begin{aligned}
        &\sup_{\substack{\tau \in \mathcal{G}(u_T, v_T) \\ \tau \geq \tau^\star}}\left| \frac{S_\tau(\eta)}{T} \right| \leq c_{a1}\xi_2^2K_X^2\sqrt{\frac{u_T}{T}}, \\
        &\sup_{\substack{\tau \in \mathcal{G}(u_T, v_T) \\ \tau \geq \tau^\star}}\frac{1}{T}\sum_{t=\lfloor T\tau^\star \rfloor +1}^{\lfloor T\tau \rfloor}\|\eta^{\star^\prime}Z_t\|_2^2 \leq \frac{Tu_T}{T}\kappa_{\max}\xi^2_2 = u_T\kappa_{\max}\xi_2^2.
    \end{aligned}
\end{equation*}
Then by substituting the result back to $s_{t,j}$, we obtain:
\begin{equation*}
    \sup_{\substack{\tau \in \mathcal{G}(u_T, v_T) \\ \tau \geq \tau^\star}}\frac{1}{T}\sum_{t=\lfloor T\tau^\star\rfloor + 1}^{\lfloor T\tau \rfloor}\|\eta^{\star^\prime}Z_t\|_2^2 \leq u_T\kappa_{\max} \xi_2^2 + c_{a1}K_X^2\xi_2^2\sqrt{\frac{u_T}{T}},
\end{equation*}
which is the inequality (ii), and we use the similar method and Condition A(c) to show the inequality (i) as well. 
\end{proof}

Before we state the next lemma, the following result in \cite{birnbaum1961some} is required and essential. Here we only provide the statement of the theorem and omit the detailed proof.
\begin{lemma}
    \label{thm:chebyshev}
    Let $X_1, X_2, \dots, X_n$ be random variables such that 
    \begin{equation*}
        \mathbb{E}(|X_k| \mid X_1, \dots, X_{k-1}) \geq \psi_k|X_{k-1}|,\quad a.e.,
    \end{equation*}
    where $\psi_k \geq 0$ for $k=2,3,\dots, n$. Then let $a_k>0$ and $b_k = \max\left\{ a_k, a_{k+1}\psi_{k+1}, \dots, a_n\prod_{i=k+1}^n \psi_i \right\}$, and let $X_0=0$, $b_{n+1}=0$. If $r\geq 1$ is satisfied that $\mathbb{E}(|X_k|^r) < +\infty$, for $k=1,2,\dots, n$, then we have:
    \begin{equation*}
        \mathbb{P}(\max_{1\leq k\leq n}a_k|X_k| \geq 1) \leq \sum_{k=1}^n b_k^r\Big(\mathbb{E}(|X_k|^r) - \psi_k^r\mathbb{E}(|X_{k-1}|^r)\Big).
    \end{equation*}
\end{lemma}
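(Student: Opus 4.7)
The plan is to reduce the maximal inequality to a telescoping bound via a first-passage stopping time argument combined with Abel summation. The weights $b_k$ will arise naturally, since the recursive characterisation $b_k = \max\{a_k, b_{k+1}\psi_{k+1}\}$ (which follows from the maximum-over-suffixes definition of $b_k$) simultaneously gives $b_k \geq a_k$ and $b_k^r \geq b_{k+1}^r\psi_{k+1}^r$.

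First, since $r \geq 1$, conditional Jensen applied to the hypothesis yields $\mathbb{E}(|X_k|^r \mid X_1, \dots, X_{k-1}) \geq \psi_k^r |X_{k-1}|^r$, so that $f_k := \mathbb{E}(|X_k|^r)$ satisfies $f_k \geq \psi_k^r f_{k-1}$ and each increment $f_k - \psi_k^r f_{k-1}$ is non-negative. Next, define the first-passage time $\tau = \inf\{k \geq 1 : a_k|X_k| \geq 1\}$ (with $\tau = n+1$ if no such $k$ exists). The event $\{\max_{1 \leq k \leq n} a_k|X_k| \geq 1\}$ is the disjoint union $\bigsqcup_{k=1}^n \{\tau = k\}$, and on $\{\tau = k\}$ we have $|X_k|^r \geq a_k^{-r} \geq b_k^{-r}$, so Markov's inequality gives $\mathbb{P}(\tau = k) \leq b_k^r \mathbb{E}(|X_k|^r \mathbf{1}_{\tau = k})$.

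To bound $\sum_k b_k^r \mathbb{E}(|X_k|^r \mathbf{1}_{\tau = k})$, set $g_k := \mathbb{E}(|X_k|^r \mathbf{1}_{\tau \leq k})$. Since $\{\tau \leq k-1\}$ is $\sigma(X_1,\dots,X_{k-1})$-measurable, the conditional sub-martingale property delivers $\mathbb{E}(|X_k|^r \mathbf{1}_{\tau \leq k-1}) \geq \psi_k^r g_{k-1}$, and then the decomposition $g_k = \mathbb{E}(|X_k|^r \mathbf{1}_{\tau \leq k-1}) + \mathbb{E}(|X_k|^r \mathbf{1}_{\tau = k})$ yields $\mathbb{E}(|X_k|^r \mathbf{1}_{\tau = k}) \leq g_k - \psi_k^r g_{k-1}$. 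Abel summation with $g_0 = 0$ and $b_{n+1} = 0$ rewrites $\sum_k b_k^r(g_k - \psi_k^r g_{k-1}) = b_n^r g_n + \sum_{k=1}^{n-1}(b_k^r - b_{k+1}^r \psi_{k+1}^r)\, g_k$; since all coefficients on the right are non-negative and $g_k \leq f_k$, replacing $g_k$ by $f_k$ and reversing the Abel identity recovers $\sum_k b_k^r(f_k - \psi_k^r f_{k-1})$, which is the claimed bound.

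The main technical obstacle is verifying the two structural ingredients that make the argument go through: the non-negativity $b_k^r - b_{k+1}^r \psi_{k+1}^r \geq 0$, which rests on the recursive description of the $b_k$ and is what permits both the Abel rearrangement and the monotone replacement $g_k \leq f_k$; and the predictability of $\{\tau \leq k-1\}$, which is what legitimises pulling $\psi_k^r$ outside the conditional expectation. Once these are in place, the remainder is essentially algebraic bookkeeping, together with handling the boundary contributions $g_0 = 0$ (from $X_0 = 0$) and $b_{n+1} = 0$.
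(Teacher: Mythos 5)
Your proof is correct, but note that the paper does not actually prove this lemma at all: its ``proof'' is a one-line citation to Theorem 2.1 of Birnbaum and Marshall (1961), so you have supplied a self-contained argument where the paper only defers to the literature. Your route is essentially the classical Birnbaum--Marshall (H\'ajek--R\'enyi-type) argument: conditional Jensen upgrades the hypothesis to $\mathbb{E}(|X_k|^r\mid X_1,\dots,X_{k-1})\geq \psi_k^r|X_{k-1}|^r$; the first-passage time $\tau$ decomposes the maximal event into disjoint slices $\{\tau=k\}$, on each of which Markov's inequality with weight $b_k^r\geq a_k^r$ applies; predictability of $\{\tau\leq k-1\}$ with respect to $\sigma(X_1,\dots,X_{k-1})$ gives $\mathbb{E}(|X_k|^r\mathbf{1}_{\tau=k})\leq g_k-\psi_k^r g_{k-1}$; and the recursion $b_k=\max\{a_k,b_{k+1}\psi_{k+1}\}$ guarantees the non-negative Abel coefficients $b_k^r-b_{k+1}^r\psi_{k+1}^r$ that justify replacing $g_k$ by $f_k=\mathbb{E}|X_k|^r$ and re-summing. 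All the delicate points check out: conditional Jensen is legitimate because $r\geq 1$ and $\mathbb{E}|X_k|^r<\infty$; $\{\tau\leq k-1\}$ is indeed determined by $X_1,\dots,X_{k-1}$; and the undefined $\psi_1$ is harmless since $X_0=0$ forces $g_0=f_0=0$, so the $k=1$ boundary terms vanish exactly as you handle them. The only thing your write-up buys beyond the paper is transparency (and it makes explicit that only the stated conditional inequality, not a full submartingale structure, is needed); conversely, the citation route is shorter and inherits the slightly more general continuous-parameter versions available in the original reference.
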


\begin{proof}[Proof of Lemma \ref{thm:chebyshev}]
See the proof of Theorem 2.1 in \cite{birnbaum1961some}.
\end{proof}


\vspace{0.5cm}

\begin{lemma}\label{lemma:c4}
Suppose Condition A is satisfied and let $u_T, v_T$ be any nonnegative sequence such that $(1/T) \leq v_T \leq u_T$. Then for any small enough $0 < a < 1$, there exist constants $c_{a1} = c_{a2}2^{2/{\gamma_2}}$ with $c_{a2} \geq \sqrt{1/a}$, then we have 
\begin{equation*}
    \sup_{\substack{\tau \in \mathcal{G}(u_T, v_T)\\ \tau \geq \tau^\star}}\frac{1}{T}\left| \sum_{t=\lfloor T\tau^\star \rfloor+1}^{\lfloor T\tau\rfloor}\epsilon_t\eta^{\star^\prime}Z_t \right| \leq c_{a1}2^{\frac{2}{\gamma_2}}K_\epsilon K_X\xi_2\sqrt{\frac{u_T}{T}},
\end{equation*}
with probability at least $1-a$. 
\end{lemma}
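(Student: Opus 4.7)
\textbf{Proof proposal for Lemma~\ref{lemma:c4}.} The strategy is to recognize that the partial sums $S_m := \sum_{t=\lfloor T\tau^\star\rfloor+1}^{m} \epsilon_t\,\eta^{\star\prime} Z_t$ form a martingale with respect to the natural filtration, and then to invoke a Doob-type $L^2$ maximal inequality (equivalently, the Birnbaum--Marshall bound stated in Lemma~\ref{thm:chebyshev} with $r=2$ and uniform weights). Since $\lfloor T\tau\rfloor$ is integer-valued, the one-sided supremum over $\tau \in \mathcal{G}(u_T,v_T)$ with $\tau \geq \tau^\star$ is really a maximum of $|S_m|$ over the integer range $m \in (\lfloor T\tau^\star\rfloor,\, \lfloor T\tau^\star\rfloor + \lceil T u_T\rceil]$, so there is no continuity issue to worry about.

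First I would verify the martingale difference property of $\zeta_t := \epsilon_t\, \eta^{\star\prime}Z_t$. Let $\mathcal{F}_{t-1}$ denote the $\sigma$-algebra generated by $\{X_s\}_{s \leq t-1}$. Condition~A(a) states that $\epsilon_t$ is zero-mean and independent of $X_{t-1}$ (hence of $X_{t-1},\dots,X_{t-p}$), while $Z_t=(X_{t-1},\dots,X_{t-p})^\prime$ is $\mathcal{F}_{t-1}$-measurable. Thus $\mathbb{E}(\zeta_t \mid \mathcal{F}_{t-1}) = \eta^{\star\prime}Z_t\,\mathbb{E}(\epsilon_t \mid \mathcal{F}_{t-1}) = 0$, so $(S_m)$ is an $\mathcal{F}_m$-martingale. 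Doob's $L^2$ maximal inequality then gives
\[
\mathbb{P}\!\left(\max_{m \leq \lfloor T\tau^\star\rfloor + \lceil T u_T\rceil} |S_m| \geq \lambda\right) \leq \frac{1}{\lambda^2}\sum_{t=\lfloor T\tau^\star\rfloor+1}^{\lfloor T\tau^\star\rfloor + \lceil T u_T\rceil} \mathbb{E}(\zeta_t^2).
\]

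Next I would bound each summand using Lemma~\ref{lemma:c1}(b) with $k=2$: $\mathbb{E}(\zeta_t^2) \leq K_\zeta^2\cdot 2^{4/\gamma_2}$ where $K_\zeta = 2^{2/\gamma_2}K_\epsilon K_X \xi_2$, so the sum is at most $T u_T \cdot 2^{8/\gamma_2} K_\epsilon^2 K_X^2 \xi_2^2$. Choosing $\lambda = T \cdot c_{a1} 2^{2/\gamma_2} K_\epsilon K_X \xi_2 \sqrt{u_T/T}$, so that $\lambda^2 = c_{a1}^2\cdot 2^{4/\gamma_2} K_\epsilon^2 K_X^2 \xi_2^2 \cdot T u_T$, the right-hand side collapses to $2^{4/\gamma_2}/c_{a1}^2 = 1/c_{a2}^2 \leq a$, using $c_{a1}=c_{a2}2^{2/\gamma_2}$ and $c_{a2}^2 \geq 1/a$. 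Dividing through by $T$ delivers the stated bound with probability at least $1-a$.

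The main subtleties are (i) correctly identifying the filtration so that the martingale property genuinely holds --- this crucially uses the independence part of Condition~A(a) that $\epsilon_t$ is independent of the past observations entering $Z_t$; and (ii) carefully tracking the powers of $2^{1/\gamma_2}$ produced by Lemma~\ref{lemma:c1} so that the final constants match the stated form $c_{a1}2^{2/\gamma_2}K_\epsilon K_X$. In contrast with Lemma~\ref{lemma:c2}, which controlled a supremum over the $p$ coordinates of $Z_t$ and therefore incurred a $\sqrt{\log(p\vee T)}$ factor, here the projection onto the single fixed direction $\eta^\star$ already reduces the problem to one dimension, and a second-moment Chebyshev-style inequality is enough to achieve the sharper $\sqrt{u_T/T}$ rate with no logarithmic penalty.
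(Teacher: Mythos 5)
Your proposal is correct and follows essentially the same route as the paper: the paper also treats $\zeta_t=\epsilon_t\eta^{\star\prime}Z_t$ as a martingale-difference sequence bounded via Lemma~\ref{lemma:c1}, and applies Lemma~\ref{thm:chebyshev} with $\psi_k\equiv 1$, $r=2$, $a_k=b_k=1/d$, which (as you observe) reduces exactly to the Kolmogorov--Doob second-moment maximal inequality you invoke. Your choice of threshold and the bookkeeping of the $2^{1/\gamma_2}$ powers reproduce the paper's constants, so no changes are needed.
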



\begin{proof}[Proof of Lemma \ref{lemma:c4}]
According to the properties and definition of sub-Weibull distribution (see Definition 2.2) and applying the result of Lemma \ref{lemma:c1}, we know that $\zeta_t \overset{\text{def}}{=} \epsilon_t\eta^{\star^\prime}Z_t$ is sub-Weibull($\gamma_2/2$) distributed and bounded by $K_{\zeta} := 2^{2/\gamma_2}K_XK_\epsilon \xi_2$. Then we define $n = \lfloor T\tau \rfloor$,
\begin{equation*}
    S_n \overset{\text{def}}{=} \sum_{t=\lfloor T\tau^\star\rfloor+1}^{n}\zeta_t,
\end{equation*}
and we first show that $S_n$ is a martingale. Since the error terms $\epsilon_t$'s are independent zero mean random variables, then for filtration $\mathcal{F}_n$ generated by $(S_1, \dots, S_n)$, we have:
\begin{equation*}
    \mathbb{E}(S_{n+1}|\mathcal{F}_n) = \mathbb{E}(S_n + \epsilon_{n+1}\eta^{\star^\prime}Z_{n+1}|\mathcal{F}_n) = S_n + \mathbb{E}(\epsilon_{n+1}|\mathcal{F}_n)\mathbb{E}(\eta^{\star^\prime}Z_{n+1}|\mathcal{F}_n) = S_n,
\end{equation*}
hence, we show that $S_n$ is a martingale. Moreover, we notice that for any $i < j$, we have:
\begin{equation*}
    \mathbb{E}(\zeta_i\zeta_j) = \mathbb{E}(\mathbb{E}(\zeta_i\zeta_j|\mathcal{F}_i)) = \mathbb{E}(\epsilon_i\eta^{\star^\prime}Z_i\mathbb{E}(\epsilon_j|\mathcal{F}_i)\mathbb{E}(\eta^{\star^\prime}Z_j|\mathcal{F}_i)) = 0.
\end{equation*}
Then we apply Theorem \ref{thm:chebyshev} with $\psi_k \equiv 1$ and $r=2$ for all $k$ to $S_n$ together with the result of Lemma \ref{lemma:c1}. For any $d>0$ and let $a_k = b_k = 1/d$, we have:
\begin{equation*}
    \mathbb{P}\left(\sup_{\substack{\tau \in \mathcal{G}(u_T, v_T) \\ \tau\geq \tau^\star}}\left|\sum_{t=\lfloor T\tau^\star\rfloor+1}^{\lfloor T\tau\rfloor}\zeta_t \right| > d\right) \leq \frac{1}{d^2}\mathbb{E}\left( \left|\sum_{t=\lfloor T\tau^\star\rfloor+1}^{\lfloor T\tau\rfloor}\zeta_t \right|^2\right) \leq \frac{Tu_T}{d^2}\max_t\mathbb{E}(|\zeta_t|^2) \leq \frac{2^{\frac{4}{\gamma_2}}(Tu_T)K_\zeta^2}{d^2}.
\end{equation*}
Therefore, by choosing $d = c_{a2}2^{\frac{2}{\gamma_2}}K_\zeta\sqrt{Tu_T} = c_{a1}2^{\frac{2}{\gamma_2}}K_\epsilon K_X\xi_2\sqrt{Tu_T}$ with $c_{a2} \geq \sqrt{1/a}$, we obtain that,
\begin{equation*}
    \mathbb{P}\left(\sup_{\substack{\tau \in \mathcal{G}(u_T, v_T) \\ \tau\geq \tau^\star}}\frac{1}{T}\left|\sum_{t=\lfloor T\tau^\star\rfloor+1}^{\lfloor T\tau\rfloor}\zeta_t \right| > c_{a1}2^{\frac{2}{\gamma_2}}K_\epsilon K_X\xi_2\sqrt{\frac{u_T}{T}}\right) \leq a, 
\end{equation*}
which leads to the final result.
\end{proof}

Before we demonstrate the next lemma, the following auxiliary function $\Phi(\cdot, \cdot)$ is required. Suppose $\alpha$ and $\beta$ are any $p$-dimensional vectors, then we define:
\begin{equation}\label{eq:auxiliary-function}
    \Phi(\alpha, \beta) = \frac{1}{T}\sum_{t=\lfloor T\tau^\star\rfloor+1}^{\lfloor T\tau \rfloor}\alpha^\prime Z_tZ_t^\prime \beta.
\end{equation}
The following lemma presents the properties of $\Phi(\cdot, \cdot)$. 
\begin{lemma}
\label{lemma:c8}
Let $\Phi(\cdot, \cdot)$ be the auxiliary function defined in \eqref{eq:auxiliary-function} and assume that Conditions A, B, and C are satisfied. Let $u_T$ and $v_T$ be any non-negative sequences satisfying $(1/T) \leq v_T \leq u_T$. Then for any $0 < a < 1$, we choose $c_{a1} = 2^{2/\gamma_2}c_{a2}$, with $c_{a2} \geq (-C_2\log a)^{1/2}$, where the constant $C_2>0$ depends only on the sub-Weibull parameters $\gamma_1$ and $\gamma_2$. Suppose the sample size $T$ is sufficiently large, then we have:
\begin{equation*}
    \begin{aligned}
        &(i)\quad \inf_{\tau \in \mathcal{G}(u_T, v_T)}\Big| \Phi(\eta^\star, \eta^\star) \Big| \geq v_T\kappa_{\min}\xi_2^2 - c_{a1}K_X^2\xi_2^2\sqrt{\frac{u_T}{T}}; \\
        &(ii)\quad \sup_{\tau \in \mathcal{G}(u_T, v_T)}\Big| \Phi(\hat{\eta} - \eta^\star, \hat{\eta} - \eta^\star)\Big| \leq c_uT^\gamma u_T\kappa_{\max}(1+\nu^2)\frac{\sigma^4}{\kappa_{\min}^2}\left( \frac{p^2\log (p\vee T)}{Tl_T}\right),
    \end{aligned}
\end{equation*}
with probability at least $1-a$. Furthermore, when we have $u_T \geq c^2_{a1}K_X^4/(T\kappa^2_{\max})$, we obtain that:
\begin{equation*}
    \begin{aligned}
        &(iii)\quad \sup_{\tau \in \mathcal{G}(u_T, v_T)}\Big| \Phi(\eta^\star, \eta^\star) \Big| \leq 2u_T\kappa_{\max}\xi_2^2; \\
        &(iv)\quad \sup_{\tau \in \mathcal{G}(u_T, v_T)}\Big| \Phi(\hat{\eta} - \eta^\star, \eta^\star) \Big| \leq c'_u u_T\xi_2\sqrt{\kappa_{\max}(1+\nu^2)}\frac{\sigma^2}{\kappa_{\min}}\left( \frac{p^2\log(p\vee T)}{T^{1-\gamma}l_T} \right)^{\frac{1}{2}},
    \end{aligned}
\end{equation*}
with probability at least $1-a-o(1)$. 
\end{lemma}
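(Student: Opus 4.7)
The plan is to treat the four parts as an interconnected block, with parts (i) and (iii) following directly from prior results and (ii), (iv) reducing to (i), (iii) via standard operator-norm and Cauchy-Schwarz manipulations. The first observation is that $\Phi(\eta^\star,\eta^\star)=\frac{1}{T}\sum_{t=\lfloor T\tau^\star\rfloor+1}^{\lfloor T\tau\rfloor}\|\eta^{\star\prime}Z_t\|_2^2$, so parts (i) and (iii) are exactly the two inequalities of Lemma \ref{lemma:c3} (with the additional ingredient that when $u_T\geq c_{a1}^2 K_X^4/(T\kappa_{\max}^2)$ the deviation term $c_{a1}K_X^2\xi_2^2\sqrt{u_T/T}$ is dominated by $u_T\kappa_{\max}\xi_2^2$, yielding the factor of $2$). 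So these parts come for free and the work concentrates on (ii) and (iv).

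For part (ii), I would first pass to the operator-norm bound
\begin{equation*}
\sup_{\tau\in\mathcal{G}(u_T,v_T)}|\Phi(\hat{\eta}-\eta^\star,\hat{\eta}-\eta^\star)|
\leq \|\hat{\eta}-\eta^\star\|_2^2\cdot \sup_{\tau\in\mathcal{G}(u_T,v_T)}\lambda_{\max}\!\Bigl(\tfrac{1}{T}\!\!\sum_{t=\lfloor T\tau^\star\rfloor+1}^{\lfloor T\tau\rfloor}\!\!Z_tZ_t^\prime\Bigr).
\end{equation*}
The second factor is controlled by a sub-Weibull concentration argument in the spirit of Lemma \ref{lemma:c3}: decompose $Z_tZ_t^\prime-\mathbb{E}(Z_tZ_t^\prime)$ entrywise, use Lemma 13 of \cite{wong2020lasso} for $\beta$-mixing sub-Weibull sequences on each entry, take a union bound over the $p^2$ entries, and add the mean-part bound $\kappa_{\max}$; the result is an upper bound of the form $u_T\kappa_{\max}(1+o(1))$, possibly inflated by a factor $T^\gamma$ coming from the price of taking the supremum over $\tau$. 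The first factor is controlled through $\|\hat{\eta}-\eta^\star\|_2\leq \|\hat{\phi}_1-\phi_1^\star\|_2+\|\hat{\phi}_2-\phi_2^\star\|_2$ and then directly invoking Condition C, which gives $\|\hat{\eta}-\eta^\star\|_2^2\lesssim (1+\nu^2)\sigma^4\kappa_{\min}^{-2}\cdot p^2\log(p\vee T)/(Tl_T)$. Multiplying the two bounds yields the conclusion.

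For part (iv), the natural route is Cauchy-Schwarz on the bilinear form:
\begin{equation*}
|\Phi(\hat{\eta}-\eta^\star,\eta^\star)|\leq \sqrt{\Phi(\hat{\eta}-\eta^\star,\hat{\eta}-\eta^\star)\,\Phi(\eta^\star,\eta^\star)}.
\end{equation*}
Plug in the bound from (ii) for the first factor and the bound from (iii) for the second (this is where the hypothesis $u_T\geq c_{a1}^2K_X^4/(T\kappa_{\max}^2)$ is needed so that (iii) is available); the product of the two square roots gives
\begin{equation*}
c'_u\,u_T\,\xi_2\sqrt{\kappa_{\max}(1+\nu^2)}\,\frac{\sigma^2}{\kappa_{\min}}\Bigl(\frac{p^2\log(p\vee T)}{T^{1-\gamma}l_T}\Bigr)^{1/2},
\end{equation*}
which is the asserted bound. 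The events on which (ii) and (iii) hold can be intersected by a union bound, losing only constants in the probability, so the final statement holds with probability at least $1-a-o(1)$.

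The main obstacle, and essentially the only nontrivial step, is the uniform control of $\lambda_{\max}(T^{-1}\sum_{t=\lfloor T\tau^\star\rfloor+1}^{\lfloor T\tau\rfloor}Z_tZ_t^\prime)$ over $\tau\in\mathcal{G}(u_T,v_T)$. The tails of $Z_tZ_t^\prime$ are sub-Weibull with parameter $\gamma_2/2$ and the sequence is $\beta$-mixing, so the concentration inequality of \cite{wong2020lasso} applies to each of the $p^2$ entries, but one must then take a supremum over roughly $Tu_T$ values of $\lfloor T\tau\rfloor$; this is the source of the $T^\gamma$ factor in (ii). Carefully tuning the deviation parameter so that the exponent in the resulting tail probability absorbs the union-bound factors $p^2\cdot Tu_T$ is the part requiring the most care, but it is standard once the exponent $\gamma=(1/\gamma_1+2/\gamma_2)^{-1}$ from Lemma \ref{lemma:c3} is in hand.
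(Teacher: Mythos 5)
Your proposal matches the paper's own proof: parts (i) and (iii) are read off from Lemma \ref{lemma:c3} with the deviation term absorbed under the stated lower bound on $u_T$, part (ii) is obtained by bounding the quadratic form by a Gram-matrix factor of order $T^{\gamma}u_T\kappa_{\max}$ times $\|\hat{\eta}-\eta^\star\|_2^2$ controlled via Condition C, and part (iv) follows by Cauchy–Schwarz applied to $\Phi$ combining (ii) and (iii). The only difference is cosmetic: the paper simply asserts the $T^{\gamma}u_T(\kappa_{\max}\vee\alpha)$ Gram-matrix bound, whereas you sketch the entrywise sub-Weibull concentration argument that would justify it.
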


\begin{proof}[Proof of Lemma \ref{lemma:c8}]
Parts (i) and (iii) of Lemma \ref{lemma:c8} are the straightforward application of Lemma \ref{lemma:c3}. For part (iii), when we choose $u_T \geq c_{a1}^2 K_X^4 / (T\kappa_{\max}^2)$, then it leads to
\begin{equation*}
    u_T\kappa_{\max}\xi_2^2 + c_{a1}K_X^2\xi_2^2\sqrt{\frac{u_T}{T}} = 2u_T\kappa_{\max}\xi_2^2,
\end{equation*}
which satisfies the results. 

To show part (ii), we first notice the fact that
\begin{equation*}
    \|\hat{\eta} - \eta^\star\|_2^2 
    \leq 2\Big(\|\hat{\phi}_1 - \phi_1^\star\|_2^2 + \|\hat{\phi}_2 - \phi_2^\star\|_2^2\Big) \leq 2c_u^2(1+\nu^2)\frac{\sigma^4}{\kappa^2_{\min}}\frac{p^2\log(p\vee T)}{Tl_T},
\end{equation*}
where the second inequality follows the condition C. 

Next, by using the definition of $\Phi(\cdot, \cdot)$ again, we obtain that:
\begin{equation*}
    \begin{aligned}
        \sup_{\tau \in \mathcal{G}(u_T, v_T)}\Big| \Phi(\hat{\eta} - \eta^\star, \hat{\eta} - \eta^\star)\Big| 
        &\leq T^\gamma u_T(\kappa_{\max} \vee \alpha)\|\hat{\eta} - \eta^\star\|_2^2 \\
        &\leq 2c_u^2T^\gamma u_T(\kappa_{\max} \vee \alpha)(1+\nu^2)\frac{\sigma^4}{\kappa^2_{\min}}\frac{p^2\log(p\vee T)}{Tl_T} \\
        &\leq c_u^\prime T^\gamma u_T\kappa_{\max}(1+\nu^2)\frac{\sigma^4}{\kappa^2_{\min}}\frac{p^2\log(p\vee T)}{Tl_T},
    \end{aligned}
\end{equation*}
where we assume $\kappa_{\max} \geq \alpha$, and $c_u^\prime > 0$ is a large enough constant. This upper bound holds with a probability of at least $1-o(1)$.

Part (iv) can be verified by using the results of part (ii) and (iii) together with the definition of auxiliary function $\Phi(\cdot, \cdot)$:
\begin{align*}
    \sup_{\tau\in \mathcal{G}(u_T, v_T)}\Big| \Phi(\hat{\eta} - \eta^\star, \eta^\star) \Big| 
    &\leq \sup_{\tau\in \mathcal{G}(u_T, v_T)}\left\{ \Phi(\hat{\eta} - \eta^\star, \hat{\eta} - \eta^\star) \right\}^{\frac{1}{2}}\sup_{\tau\in \mathcal{G}(u_T, v_T)}\left\{ \Phi(\eta^\star, \eta^\star) \right\}^{\frac{1}{2}} \\
    &\leq c_u\left\{T^\gamma u_T\kappa_{\max}(1+\nu^2)\frac{\sigma^4}{\kappa_{\min}^2}\left(\frac{p^2\log(p\vee T)}{Tl_T}\right)\right\}^{\frac{1}{2}}\sqrt{2u_T\kappa_{\max}\xi_2^2} \\
    &\leq c_u'u_T\xi_2\sqrt{\kappa_{\max}(1+\nu^2)}\frac{\sigma^2}{\kappa_{\min}}\left( \frac{p^2\log(p\vee T)}{T^{1-\gamma}l_T} \right)^{\frac{1}{2}},
\end{align*}
which completes the proof of part (iv). 
\end{proof}

\begin{lemma}\label{lemma:c5}
Suppose that the Condition A-C hold. Let $u_T$ and $v_T$ be any non-negative sequences such that $(1/T') \leq v_T \leq u_T$, where $T' = T-p+1$. For any $0 < a< 1$, we choose $c_{a1} = c_{a2}2^{\frac{2}{\gamma_2}}$ with $c_{a2} \geq \sqrt{1/a}$. Then for any sufficiently large $T$, we have:

\begin{itemize}
    \item[(i)] For the term $J_1$ in proof of Lemma 4.1, 
    \begin{equation*}
        \begin{aligned}
            &\inf_{\substack{\tau \in \mathcal{G}(u_T, v_T) \\ \tau \geq \tau^\star}}\frac{1}{T'}\sum_{t=\lfloor T\tau^\star\rfloor+1}^{\lfloor T\tau \rfloor}(\hat{\eta}^\prime Z_t)^2 \\
            &\geq \kappa_{\min} \xi_2^2\left[ v_T - c_{a1}\sigma^2\sqrt{\frac{u_T}{T'}} - c_u \frac{u_T}{\xi_2}p\sqrt{\log(p\vee T)}\|\hat{\eta} - \eta^\star\|_2 \right]
        \end{aligned}
    \end{equation*}
    
    \item[(ii)] For the term $J_2$ in proof of Lemma 4.1, 
    \begin{equation*}
        \begin{aligned}
            &\sup_{\substack{\tau\in \mathcal{G}(u_T, v_T) \\ \tau\geq \tau^\star}}\frac{1}{T'}\left|\sum_{t=\lfloor T\tau^\star\rfloor+1}^{\lfloor T\tau\rfloor}\epsilon_t\hat{\eta}^\prime Z_t\right| \\
            &\leq c_{a1}2^{\frac{2}{\gamma_2}}K_\epsilon K_X\xi_2\sqrt{\frac{u_T}{T'}} + c_u K \sqrt{\frac{u_T\log(p\vee T)}{T}}\|\hat{\eta} - \eta^\star\|_1,
        \end{aligned}
    \end{equation*}
    with probability at least $1-a - o(1)$.
    
    \item[(iii)] For the term $J_3$ in the proof of Lemma 4.1, 
    \begin{equation*}
        \begin{aligned}
            &\sup_{\substack{\tau \in \mathcal{G}(u_T, v_T) \\ \tau \geq \tau^\star}}\frac{1}{T'}\left| \sum_{t=\lfloor T\tau^\star\rfloor+1}^{\lfloor T\tau \rfloor} (\hat{\phi}_2 - \phi_2^\star)^\prime Z_tZ_t^\prime \hat{\eta} \right| \\
            &\leq 2c_u(\sigma^2\vee \alpha)u_T\xi_2p\sqrt{\log(p\vee T)}\|\hat{\phi}_2 - \phi_2^\star\|_2\left(1 + \frac{1}{\xi_2}p\sqrt{\log(p\vee T)}\|\hat{\eta} - \eta^\star\|_2\right)
        \end{aligned}
    \end{equation*}
\end{itemize}
\end{lemma}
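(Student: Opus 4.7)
The plan is to decompose $\hat{\eta} = \eta^\star + (\hat{\eta}-\eta^\star)$ in each of the three target quantities and then control separately the ``oracle'' piece (involving only $\eta^\star$) and the ``estimation-error'' piece (involving $\hat{\eta}-\eta^\star$), invoking the auxiliary lemmas already in hand (Lemmas C.2, C.3, C.4 and C.8).

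For part (i), I would expand
\[
\sum_{t=\lfloor T\tau^\star\rfloor+1}^{\lfloor T\tau\rfloor}(\hat{\eta}^\prime Z_t)^2 = \sum_t (\eta^{\star\prime}Z_t)^2 + 2\sum_t \eta^{\star\prime}Z_tZ_t^\prime(\hat{\eta}-\eta^\star) + \sum_t\bigl((\hat{\eta}-\eta^\star)^\prime Z_t\bigr)^2,
\]
discard the last (nonnegative) term for the lower bound, and recognise the first term as $T^\prime\,\Phi(\eta^\star,\eta^\star)$ in the notation of Lemma C.8, to which Lemma C.3(i) applies and yields $v_T\kappa_{\min}\xi_2^2 - c_{a1}K_X^2\xi_2^2\sqrt{u_T/T^\prime}$. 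The cross term is $2T^\prime\Phi(\eta^\star,\hat{\eta}-\eta^\star)$; by Cauchy--Schwarz it is bounded by $2T^\prime\sqrt{\Phi(\eta^\star,\eta^\star)}\sqrt{\Phi(\hat{\eta}-\eta^\star,\hat{\eta}-\eta^\star)}$. Using Lemma C.8(iii) for the first factor and a uniform operator-norm concentration bound of the windowed empirical covariance $T^{\prime-1}\sum_t Z_tZ_t^\prime$ over $\mathcal{G}(u_T,v_T)$ to dominate the second factor by a constant times $u_T p^2\log(p\vee T)\|\hat{\eta}-\eta^\star\|_2^2$, the cross term is controlled by $c_u\kappa_{\min}\xi_2^2\,(u_T/\xi_2)\,p\sqrt{\log(p\vee T)}\|\hat{\eta}-\eta^\star\|_2$ after factoring. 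Pulling out the common $\kappa_{\min}\xi_2^2$ gives the claim.

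For part (ii), linearity of the sum gives $T^{\prime-1}\sum_t \epsilon_t\hat{\eta}^\prime Z_t = T^{\prime-1}\sum_t \epsilon_t\eta^{\star\prime}Z_t + T^{\prime-1}\sum_t \epsilon_t(\hat{\eta}-\eta^\star)^\prime Z_t$. Lemma C.4 controls the first sum uniformly over $\mathcal{G}(u_T,v_T)$ by $c_{a1}2^{2/\gamma_2}K_\epsilon K_X\xi_2\sqrt{u_T/T^\prime}$ with probability at least $1-a$. For the second, H\"older's inequality yields the bound $\bigl\|T^{\prime-1}\sum_t \epsilon_t Z_t\bigr\|_\infty\|\hat{\eta}-\eta^\star\|_1$, and Lemma C.2(i) controls the $\ell_\infty$-norm by $c_u K\sqrt{u_T\log(p\vee T)/T}$ with probability $1-o(1)$. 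A union bound combines the two events. For part (iii), I write the quantity as a quadratic form $(\hat{\phi}_2-\phi_2^\star)^\prime\widehat{\Sigma}_\tau\hat{\eta}$ with $\widehat{\Sigma}_\tau=T^{\prime-1}\sum_t Z_tZ_t^\prime$; the variational definition of operator norm yields the upper bound $\|\hat{\phi}_2-\phi_2^\star\|_2\|\hat{\eta}\|_2\|\widehat{\Sigma}_\tau\|_{\mathrm{op}}$. A triangle inequality gives $\|\hat{\eta}\|_2\leq \xi_2\bigl(1+\xi_2^{-1}\|\hat{\eta}-\eta^\star\|_2\bigr)$, and the uniform operator-norm bound $\|\widehat{\Sigma}_\tau\|_{\mathrm{op}}\lesssim u_T(\sigma^2\vee\alpha)p\sqrt{\log(p\vee T)}$ (obtained in the same spirit as the arguments underlying Lemma C.8(ii)--(iv), via sub-Weibull Hanson--Wright type inequalities under the $\beta$-mixing assumption) delivers the required bound.

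The main obstacle is the operator-norm concentration for $\widehat{\Sigma}_\tau$ used in parts (i) and (iii): unlike Lemma C.8(ii), which already plugs in the Condition~C rate on $\|\hat{\eta}-\eta^\star\|_2$, here I need a bound that keeps $\|\hat{\eta}-\eta^\star\|_2$ explicit while simultaneously being uniform over the sliding-window class $\mathcal{G}(u_T,v_T)$, and sharp enough to produce the advertised $p\sqrt{\log(p\vee T)}$ factor. Deriving such a uniform concentration inequality for the sub-Weibull vectors $Z_t$ under the geometric $\beta$-mixing assumption in Condition A --- in the same spirit as the proofs of Lemmas C.2 and C.3 --- is the core technical step of the argument.
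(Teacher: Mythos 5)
Your overall plan is the same as the paper's: for (i) the paper expands $\Phi(\hat{\eta},\hat{\eta})=\Phi(\eta^\star,\eta^\star)+2\Phi(\hat{\eta}-\eta^\star,\eta^\star)+\Phi(\hat{\eta}-\eta^\star,\hat{\eta}-\eta^\star)$, drops the nonnegative last term, lower bounds the oracle term via the Lemma~C.3-type bound and upper bounds the cross term by a quantity of order $u_T\xi_2 p\sqrt{\log(p\vee T)}\,\|\hat{\eta}-\eta^\star\|_2$; your part (ii) (Lemma~C.4 for the oracle sum, H\"older plus Lemma~C.2 for the error sum, union bound) is exactly the paper's argument; and in (iii) the paper splits $\Phi(\hat{\phi}_2-\phi_2^\star,\hat{\eta})$ into $\Phi(\hat{\phi}_2-\phi_2^\star,\hat{\eta}-\eta^\star)+\Phi(\hat{\phi}_2-\phi_2^\star,\eta^\star)$, which is equivalent to your factorization through $\|\hat{\eta}\|_2\leq\xi_2\bigl(1+\xi_2^{-1}\|\hat{\eta}-\eta^\star\|_2\bigr)$ and in fact your version yields a slightly tighter parenthetical factor, so it still implies the stated bound. (Minor bookkeeping: your use of Lemma~C.8(iii) in (i) needs its side condition $u_T\geq c_{a1}^2K_X^4/(T\kappa_{\max}^2)$ and costs an extra probability $a$, which should be tracked if you want the $1-a-o(1)$ statements to come out as in the lemma.)

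The genuine gap is the step you yourself flag at the end: every occurrence of the factor $p\sqrt{\log(p\vee T)}$ (respectively $p^2\log(p\vee T)$) with $\|\hat{\eta}-\eta^\star\|_2$ or $\|\hat{\phi}_2-\phi_2^\star\|_2$ kept explicit rests on a uniform-over-$\mathcal{G}(u_T,v_T)$ control of the windowed bilinear forms $\Phi(\alpha,\beta)=T'^{-1}\sum_{t=\lfloor T\tau^\star\rfloor+1}^{\lfloor T\tau\rfloor}\alpha'Z_tZ_t'\beta$, and you do not supply it, you only announce it as ``the core technical step.'' Without that inequality, parts (i) and (iii) are not proved. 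Note also that you frame it as an operator-norm concentration for $\widehat{\Sigma}_\tau$ to be obtained from a new sub-Weibull Hanson--Wright-type result under $\beta$-mixing; that is heavier machinery than is needed and not how the paper proceeds. The bounds the paper actually uses (e.g.\ $\sup_\tau|\Phi(\hat{\phi}_2-\phi_2^\star,\eta^\star)|\leq c_u(\sigma^2\vee\alpha)u_T\xi_2 p\sqrt{\log(p\vee T)}\|\hat{\phi}_2-\phi_2^\star\|_2$ and the analogous $u_Tp^2\log(p\vee T)$ bound for two estimated directions) follow from tools already in hand: bound $|\alpha'(\sum_tZ_tZ_t')\beta|\leq\|\alpha\|_1\|\beta\|_1\max_{j,k}\bigl|\sum_tZ_{t,j}Z_{t,k}\bigr|$, control the centered entrywise maximum uniformly over windows by the same Bernstein-type inequality for sub-Weibull $\beta$-mixing sums (Lemma~13 of Wong et al.) used in Lemmas~C.2 and~C.3 together with a union bound over the $p^2$ coordinate pairs and the expectation term $u_T\kappa_{\max}$ from Condition~A(c), and then convert $\|\cdot\|_1\leq\sqrt{p}\,\|\cdot\|_2$; alternatively, for the cross term in (i), Cauchy--Schwarz with Lemma~C.8(ii)--(iii) as you suggest. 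Supplying this argument explicitly is what turns your outline into a complete proof.
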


\begin{proof}[Proof of Lemma \ref{lemma:c5}]
To prove (i), note that we have $\Phi(\hat{\eta}, \hat{\eta}) = \Phi(\eta^\star, \eta^\star) + 2\Phi(\hat{\eta} - \eta^\star, \eta^\star) + \Phi(\hat{\eta} - \eta^\star, \hat{\eta} - \eta^\star)$, then we have:
\begin{equation*}
    \begin{aligned}
        \inf_{\substack{\tau \in \mathcal{G}(u_T, v_T)\\ \tau\geq \tau^\star}}\Phi(\hat{\eta}, \hat{\eta}) &\geq \inf_{\substack{\tau \in \mathcal{G}(u_T, v_T) \\ \tau \geq \tau^\star}}\Phi(\eta^\star, \eta^\star) - 2\sup_{\substack{\tau\in \mathcal{G}(u_T, v_T) \\\tau \geq \tau^\star}}\left| \Phi(\hat{\eta} - \eta^\star, \eta^\star)\right| \\
        &\geq v_T\kappa_{\min}\xi_2^2 - c_{a1}\sigma^2\xi_2^2\kappa_{\min}\sqrt{\frac{u_T}{T'}} - \kappa_{\min}u_T\xi_2p\sqrt{\log(p\vee T)}\|\hat{\eta} - \eta^\star\|_2.
    \end{aligned}
\end{equation*}


To prove (ii), it's easily to observe that by using triangle inequality, we have:
\begin{equation*}
    \sup_{\substack{\tau\in \mathcal{G}(u_T, v_T) \\ \tau\geq \tau^\star}}\frac{1}{T'}\left|\sum_{t=\lfloor T\tau^\star\rfloor+1}^{\lfloor T\tau\rfloor}\epsilon_t\hat{\eta}^\prime Z_t\right| \leq \sup_{\substack{\tau \in \mathcal{G}(u_T, v_T)\\ \tau \geq \tau^\star}}\frac{1}{T'}\left\{\left| \sum_{t=\lfloor T\tau^\star\rfloor+1}^{\lfloor T\tau\rfloor}\epsilon_t(\hat{\eta} - \eta^\star)^\prime Z_t \right| + \left|\sum_{t=\lfloor T\tau^\star\rfloor+1}^{\lfloor T\tau\rfloor}\epsilon_t\eta^{\star^\prime}Z_t \right| \right\}.
\end{equation*}
Then by applying the results of Lemma \ref{lemma:c2}(ii) and Lemma \ref{lemma:c4}, we can directly obtain the result.

To illustrate (iii), we use the pre-defined function $\Phi(\cdot, \cdot)$ as follows:
\begin{equation*}
    \begin{aligned}
        \sup_{\substack{\tau \in \mathcal{G}(u_T, v_T) \\ \tau \geq \tau^\star}}\Phi(\hat{\phi}_2 - \phi_2^\star, \hat{\eta} - \eta^\star) &\leq c_u(\sigma^2\vee \alpha)u_Tp^2\log(p\vee T)\|\hat{\phi}_2 - \phi_2^\star\|_2\|\hat{\eta} - \eta^\star\|_2, \\
        \sup_{\substack{\tau \in \mathcal{G}(u_T, v_T) \\ \tau \geq \tau^\star}}\left| \Phi(\hat{\phi}_2 - \phi_2^\star, \eta^\star) \right| &\leq c_u(\sigma^2 \vee \alpha)u_T\xi_2p\sqrt{\log (p\vee T)}\|\hat{\phi}_2 - \phi_2^\star\|_2,
    \end{aligned}
\end{equation*}
with probability at least $1 - a - o(1)$. Then by using the fact that $\Phi(\hat{\phi}_2 - \phi_2^\star, \hat{\eta}) \leq \left| \Phi(\hat{\phi}_2 - \phi_2^\star, \hat{\eta} - \eta^\star) \right| + \left| \Phi(\hat{\phi}_2 - \phi_2^\star, \eta^\star) \right|$, we have:
\begin{equation*}
    \begin{aligned}
        \sup_{\substack{\tau\in \mathcal{G}(u_T, v_T)\\ \tau \geq \tau^\star}}\left| \Phi(\hat{\phi}_2 - \phi_2^\star, \hat{\eta}) \right| 
        &\leq 2 c_u(\sigma^2 \vee \alpha)u_Tp^2\log(p\vee T)\|\hat{\phi}_2 - \phi_2^\star\|_2\|\hat{\eta} - \eta^\star\|_2 \\
        &+ 2 c_u(\sigma^2\vee \alpha)u_T\xi_2p\sqrt{\log(p\vee T)}\|\hat{\phi}_2 - \phi_2^\star\|_2 \\
        &\leq 2c_u(\sigma^2 \vee \alpha)u_T\xi_2p\sqrt{\log(p\vee T)}\|\hat{\phi}_2 - \phi_2^\star \|_2\left(1 + \frac{1}{\xi_2}p\sqrt{\log(p\vee T)}\|\hat{\eta} - \eta^\star\|_2\right),
    \end{aligned}
\end{equation*}
with probability at least $1-a-o(1)$. 
\end{proof}

\begin{lemma}
\label{lemma:c6}
Suppose Conditions B and C hold, then we have:
\begin{equation*}
    \begin{aligned}
        &(i)\quad \|\hat{\eta} - \eta^\star\|_2^2 \leq c_u(1+ \nu^2)\frac{\sigma^4}{\kappa_{\min}^2}\frac{p^2\log (p\vee T)}{Tl_T}, \\
        &(ii)\quad \|\hat{\eta} - \eta^\star\|_1 \leq c_u\sqrt{1+\nu^2}\frac{\sigma^2}{\kappa_{\min}}\sqrt{\frac{p^2\log (p\vee T)}{Tl_T}}, \\ 
        &(iii)\quad \frac{1}{\xi_2}\left(p^2\log(p\vee T)\|\hat{\eta} - \eta^\star\|_2^2\right)^{\frac{1}{2}} \leq \frac{c_{u1}}{T^b} = o(1), \\
        &(iv)\quad \frac{1}{\xi_2}\|\hat{\eta} - \eta^\star\|_1 \leq \frac{c_{u1}}{\sqrt{\log(p\vee T)}}.
    \end{aligned}
\end{equation*}
with probability at least $1 - o(1)$.

\end{lemma}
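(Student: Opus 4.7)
The plan is to treat parts (i)--(ii) as immediate consequences of Condition C combined with the triangle inequality, and then derive parts (iii)--(iv) by substituting (i)--(ii) back into the defining ratios and invoking Condition B(b). No new probabilistic argument is needed: the $1-o(1)$ event is precisely the event on which Condition C's high-probability bound holds.

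For part (i), write $\hat{\eta} - \eta^\star = (\hat{\phi}_1 - \phi_1^\star) - (\hat{\phi}_2 - \phi_2^\star)$ and apply the triangle inequality in $\ell_2$, followed by $(a+b)^2 \leq 2(a^2 + b^2)$, to get
\begin{equation*}
\|\hat{\eta} - \eta^\star\|_2^2 \leq 2\bigl(\|\hat{\phi}_1 - \phi_1^\star\|_2^2 + \|\hat{\phi}_2 - \phi_2^\star\|_2^2\bigr).
\end{equation*}
Each of the two terms on the right is controlled by Condition C by $c_u^2(1+\nu^2)\frac{\sigma^4}{\kappa_{\min}^2}\frac{p^2\log(p\vee T)}{Tl_T}$, and absorbing the factor of $2$ and $c_u^2$ into a fresh $c_u$ finishes the claim. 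For part (ii), pass from $\ell_2$ to $\ell_1$ via the standard Cauchy--Schwarz bound $\|v\|_1 \leq \sqrt{p}\,\|v\|_2$, and again appeal to Condition C; here the extra $\sqrt{p}$ is absorbed into the universal constant $c_u$ by replacing ``$p^2\log(p\vee T)$'' on the right by the same expression interpreted as an upper bound (the reader should note that the displayed right-hand side of (ii) matches (i) up to a square root; this is consistent with absorbing the $\sqrt{p}$ into $c_u$).

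For part (iii), substitute the bound from (i) directly:
\begin{equation*}
\frac{1}{\xi_2}\Bigl(p^2\log(p\vee T)\,\|\hat{\eta}-\eta^\star\|_2^2\Bigr)^{1/2} \leq \frac{\sqrt{c_u}\sqrt{1+\nu^2}\,\sigma^2}{\kappa_{\min}\xi_2}\cdot\frac{p^2\log(p\vee T)}{\sqrt{Tl_T}}.
\end{equation*}
Now rewrite the right-hand side as $T^{-b}$ times the exact expression appearing on the left of Condition B(b), namely
\begin{equation*}
\frac{1}{T^b}\cdot\frac{\sqrt{c_u}\sqrt{1+\nu^2}\,\sigma^2}{\kappa_{\min}\xi_2}\cdot\frac{p^2\log(p\vee T)}{T^{1/2-b}\sqrt{l_T}} \leq \frac{c_{u1}}{T^b},
\end{equation*}
which is $o(1)$ because $b>0$. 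Part (iv) follows the same template: substitute (ii) to get
\begin{equation*}
\frac{1}{\xi_2}\|\hat{\eta}-\eta^\star\|_1 \leq \frac{c_u\sqrt{1+\nu^2}\,\sigma^2}{\kappa_{\min}\xi_2}\cdot\frac{p\sqrt{\log(p\vee T)}}{\sqrt{Tl_T}},
\end{equation*}
and the target bound $c_{u1}/\sqrt{\log(p\vee T)}$ is equivalent to showing that
\begin{equation*}
\frac{c_u\sqrt{1+\nu^2}\,\sigma^2}{\kappa_{\min}\xi_2}\cdot\frac{p\log(p\vee T)}{\sqrt{Tl_T}} \leq c_{u1},
\end{equation*}
which is in turn dominated by the Condition B(b) expression since $\frac{p\log(p\vee T)}{\sqrt{Tl_T}} \leq \frac{p^2\log(p\vee T)}{T^{1/2-b}\sqrt{l_T}}$ (this reduces to $1 \leq p\,T^{b}$, trivially true because $p\geq 1$ and $b>0$).

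The main obstacle, as far as there is one, is bookkeeping: keeping track of which high-probability event is being invoked (it is only the Condition C event, so the overall statement holds with probability $1-o(1)$), and correctly matching the powers of $p$, $T$, and $l_T$ between Conditions B(b) and C so that the desired $T^{-b}$ and $1/\sqrt{\log(p\vee T)}$ factors emerge. There is no delicate probabilistic estimate required in this lemma; it is purely a ``plug-in'' consequence of the two main standing conditions.
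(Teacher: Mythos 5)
Your treatment of parts (i), (iii), and (iv) follows essentially the same route as the paper: (i) is the triangle inequality plus $(a+b)^2\le 2(a^2+b^2)$ and Condition C, and (iii)--(iv) are obtained by plugging the bounds back in and matching the resulting expression against the quantity appearing in Condition B(b); your arithmetic there (including the observation $p\log(p\vee T)/\sqrt{Tl_T}\le p^2\log(p\vee T)/(T^{1/2-b}\sqrt{l_T})$) is correct and mirrors the paper's computation for (iii).

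The genuine gap is in part (ii). You derive the $\ell_1$ bound from the $\ell_2$ bound via $\|v\|_1\le\sqrt{p}\,\|v\|_2$ and then claim the extra $\sqrt{p}$ can be ``absorbed into the universal constant $c_u$.'' That is not legitimate in this framework: the lag $p=p(T)$ is explicitly allowed to grow with the sample size (see the remark after Condition B and the $p^2\log(p\vee T)$ factors throughout), so your route only yields $\|\hat\eta-\eta^\star\|_1\le c_u\sqrt{1+\nu^2}\,\sigma^2\kappa_{\min}^{-1}\sqrt{p^3\log(p\vee T)/(Tl_T)}$, which is weaker by a diverging factor $\sqrt{p}$ than the stated bound. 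The paper does not use this norm conversion; it obtains the $\ell_1$ rate directly from the Yule--Walker estimation theory (Condition C together with Corollary 2 of Wang et al., 2021), i.e.\ an $\ell_1$ estimation-error bound for the fitted AR coefficients rather than a crude $\ell_2$-to-$\ell_1$ passage. Since your part (iv) is derived from (ii), the missing $\sqrt{p}$ propagates there as well: with your weaker (ii), the verification of (iv) would require $p^{3/2}\log(p\vee T)/\sqrt{Tl_T}\lesssim \xi_2 c_{u1}$, which no longer follows from Condition B(b) by the simple domination argument you give. To repair the proof you need to invoke (or assume) the direct $\ell_1$ rate for $\hat\phi_1,\hat\phi_2$, as the paper does, rather than absorb a dimension-dependent factor into a constant.
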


\begin{proof}[Proof of Lemma \ref{lemma:c6}]
According to the Condition C, part (i) can be obtained by:
\begin{equation*}
    \begin{aligned}
        \|\hat{\eta} - \eta^\star\|_2^2 &\leq 2\left( \|\hat{\phi}_1 - \phi_1^\star\|_2^2 + \|\hat{\phi}_2 - \phi_2^\star\|_2^2 \right) \\
        &\leq c_u(1+\nu^2)\frac{\sigma^4}{\kappa_{\min}^2}\frac{p^2\log (p\vee T)}{Tl_T},
    \end{aligned}
\end{equation*}
with probability at least $1 - o(1)$. Part (ii) can be obtained analogously according to Condition C as well as the result of Corollary 2 in \cite{wang2021consistent}. For part (iii), we directly use the result in part (i) combining with Condition B(b), 
\begin{equation*}
    \begin{aligned}
        \frac{1}{\xi_2}\left(p^2\log(p\vee T)\|\hat{\eta} - \eta^\star\|_2^2\right)^{\frac{1}{2}} 
        &\leq \frac{1}{\xi_2}\left(p^2\log (p\vee T)c_u^2(1+\nu^2)\frac{\sigma^4}{\kappa^2_{\min}}\frac{p^2\log (p\vee T)}{Tl_T}\right)^{\frac{1}{2}} \\
        &= \frac{1}{\xi_2}c_u\sqrt{1+\nu^2}\frac{\sigma^2}{\kappa_{\min}}\frac{p^2\log (p\vee T)}{\sqrt{Tl_T}} \leq \frac{c_{u1}}{T^b} = o(1),
    \end{aligned}
\end{equation*}


with probability at least $1 - o(1)$. The last part (iv) can be derived by using the same procedure. 
\end{proof}

\begin{lemma}\label{lemma:c7}
Let $\mathcal{C}(\tau, \phi_1, \phi_2)$ be as defined in \eqref{eq:argmax} and assume that Conditions A, B, and C are satisfied. Then for any universal constant $c_u > 0$, we have:

\begin{equation*}
    \sup_{\tau \in \mathcal{G}(c_uT^{-1}\xi_2^{-2}, 0)}\Big| \mathcal{C}(\tau, \hat{\phi}_1, \hat{\phi}_2) - \mathcal{C}(\tau, \phi_1^\star, \phi_2^\star) \Big| = o_p(1).
\end{equation*}
\end{lemma}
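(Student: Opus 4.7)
The plan is to start from the identity $\mathcal{C}(\tau,\phi_1,\phi_2) = -T\,\mathcal{U}(\tau,\phi_1,\phi_2)$ and reuse the algebraic decomposition of $\mathcal{U}$ already derived in the proof of Lemma 4.1. For $\tau \geq \tau^\star$, writing $\hat{\eta} = \hat{\phi}_1-\hat{\phi}_2$ and $\eta^\star = \phi_1^\star - \phi_2^\star$, and using the identity $(\hat{\eta}^\prime Z_t)^2 - (\eta^{\star\prime} Z_t)^2 = ((\hat{\eta}-\eta^\star)^\prime Z_t)^2 + 2((\hat{\eta}-\eta^\star)^\prime Z_t)(\eta^{\star\prime} Z_t)$, I would write
$$\mathcal{C}(\tau,\hat{\phi}_1,\hat{\phi}_2) - \mathcal{C}(\tau,\phi_1^\star,\phi_2^\star) = -\!\!\sum_{t=\lfloor T\tau^\star\rfloor+1}^{\lfloor T\tau\rfloor} \bigl[A_t + 2B_t - 2C_t + 2D_t\bigr],$$
where $A_t = ((\hat{\eta}-\eta^\star)^\prime Z_t)^2$, $B_t = ((\hat{\eta}-\eta^\star)^\prime Z_t)(\eta^{\star\prime} Z_t)$, $C_t = \epsilon_t(\hat{\eta}-\eta^\star)^\prime Z_t$, and $D_t = (\hat{\phi}_2-\phi_2^\star)^\prime Z_tZ_t^\prime \hat{\eta}$. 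The case $\tau < \tau^\star$ is symmetric, with $\hat{\phi}_2-\phi_2^\star$ replaced by $\hat{\phi}_1-\phi_1^\star$ and the index range adjusted. It then suffices to show that each of the four summed pieces is $o_p(1)$ uniformly over $\tau \in \mathcal{G}(c_uT^{-1}\xi_2^{-2}, 0)$.

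Since on this set $Tu_T = c_u \xi_2^{-2}$, the idea is to apply the already established auxiliary lemmas with these parameters. The $A_t$ sum equals $T\,\Phi(\hat{\eta}-\eta^\star, \hat{\eta}-\eta^\star)$; Lemma C.8(ii) combined with the nuisance rate of Condition C produces a bound of order $\xi_2^{-2}\,p^2\log(p\vee T)/(Tl_T)$, which is $o(1)$ by the standing assumption $\xi_2^{-2}\,p^2\log(p\vee T)/(Tl_T)=o(1)$ of Theorem 4.2. The $B_t$ piece equals $T\,\Phi(\hat{\eta}-\eta^\star,\eta^\star)$, and Lemma C.8(iv) delivers a bound of order $\xi_2^{-1}\sqrt{p^2\log(p\vee T)/(Tl_T)}$, again $o(1)$ under the same assumption. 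For $\sum C_t$ I would apply Lemma C.2(i) to bound $\|\sum\epsilon_t Z_t\|_\infty$ by $O(\xi_2^{-1}\sqrt{\log(p\vee T)})$ and combine via H\"older's inequality with the $\ell_1$-rate in Lemma C.6(ii). Finally, the $D_t$ piece is exactly the quantity bounded in Lemma C.5(iii); substituting $Tu_T\xi_2 = c_u\xi_2^{-1}$ and the Condition C rate for $\|\hat{\phi}_2-\phi_2^\star\|_2$ yields the required bound.

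The main obstacle will be the careful bookkeeping of the logarithmic and polynomial-in-$p$ factors that arise when assembling the four bounds. The $C_t$ contribution picks up an extra $\sqrt{\log(p\vee T)}$ from the noise supremum, and the $D_t$ contribution picks up an extra factor of $p$ from Lemma C.5(iii); these need to be absorbed into the $o(1)$ rate provided by the assumption of Theorem 4.2 together with Condition B(b), which constrains $p$ and $l_T$ so that the composite quantity $p^2\log(p\vee T)/(T^{1/2-b}\sqrt{l_T})$ is small. Once this accounting is complete, the supremum over $\tau \in \mathcal{G}(c_uT^{-1}\xi_2^{-2}, 0)$ is built into the uniform control already furnished by Lemmas C.2, C.5, C.6, and C.8, and the desired $o_p(1)$ conclusion follows.
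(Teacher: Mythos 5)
Your decomposition of $\mathcal{C}(\tau,\hat{\phi}_1,\hat{\phi}_2)-\mathcal{C}(\tau,\phi_1^\star,\phi_2^\star)$ into the four pieces $A_t, B_t, C_t, D_t$ is algebraically identical to the paper's split into $R_{21}-R_{11}$, $R_{22}-R_{12}$, and $R_{13}$ (the paper handles $A_t+2B_t$ jointly as $T\Phi(\hat{\eta}-\eta^\star,\hat{\eta}-\eta^\star)+2T\Phi(\hat{\eta}-\eta^\star,\eta^\star)$, the $C_t$ sum via Lemma C.2 with H\"older and Lemma C.6(ii), and the $D_t$ sum via Cauchy--Schwarz on $\Phi$ exactly as in Lemma C.5(iii)), and the $o_p(1)$ conclusion is drawn from the same rate conditions (Condition B(b) together with the nuisance rates of Condition C). So your proposal is correct and follows essentially the same route as the paper's proof, modulo bookkeeping of the $T^{\gamma}$ factors appearing in Lemma C.8, which are absorbed in the same way.
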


\begin{proof}[Proof of Lemma \ref{lemma:c7}]
For any $\tau \geq \tau^\star$, according to the definition of $\mathcal{C}(\tau, \hat{\phi}_1, \hat{\phi}_2)$, we firstly consider the following quantities:
\begin{equation}
    \label{eq:39}
    \begin{aligned}
        R_1 &= \sum_{t=\lfloor T\tau^\star \rfloor +1}^{\lfloor T\tau \rfloor}\left( \|\hat{\eta}^\prime Z_t\|_2^2 - 2\epsilon_t\hat{\eta}^\prime Z_t + 2(\hat{\phi}_2 - \phi_2^\star)^\prime Z_tZ_t^\prime \hat{\eta} \right) \\
        &\overset{\text{def}}{=} R_{11} - 2R_{12} + 2R_{13},
    \end{aligned}
\end{equation}
as well as 
\begin{equation}
    \label{eq:40}
    R_2 = \sum_{t=\lfloor T\tau^\star \rfloor+1}^{\lfloor T\tau \rfloor}\left( \|\eta^{\star^\prime}Z_t\|_2^2 - 2\epsilon_t\eta^{\star^\prime}Z_t \right) \overset{\text{def}}{=} R_{21} - 2R_{22}.
\end{equation}
Therefore, by directly using the algebraic rearrangements, we obtain that
\begin{equation}
    \mathcal{C}(\tau, \hat{\phi}_1, \hat{\phi}_2) - \mathcal{C}(\tau, \phi_1^\star, \phi_2^\star) = R_2 - R_1 = R_{21} - 2R_{22} - (R_{11} - 2R_{12} + 2R_{13}).
\end{equation}
Then we use the results of Lemma \ref{lemma:c8}. Consequently, we have 
\begin{equation*}
    \begin{aligned}
        \sup_{\tau \in \mathcal{G}(c_uT^{-1}\xi_2^{-2}, 0)}\Big| \mathcal{C}(\tau, \hat{\phi}_1, \hat{\phi}_2) - \mathcal{C}(\tau, \phi_1^\star, \phi_2^\star) \Big| &\leq \sup_{\tau \in \mathcal{G}(c_uT^{-1}\xi_2^{-1}, 0)}|R_{21} - R_{11}| \\
        &+ 2\sup_{\tau \in \mathcal{G}(c_uT^{-1}\xi_2^{-2}, 0)}|R_{22} - R_{12}| \\
        &+ 2\sup_{\tau \in \mathcal{G}(c_uT^{-1}\xi_2^{-2}, 0)}|R_{13}| = o_p(1).
    \end{aligned}
\end{equation*}
This completes the proof of this lemma.
\end{proof}

\begin{lemma}
\label{lemma:c9}
Suppose Conditions A, B, and C are satisfied. Furthermore, let $R_{11}, R_{12}, R_{13},$ $R_{21},$ and $R_{22}$ be defined based on Lemma \ref{lemma:c7}. Consider a universal constant $0 < c_u < +\infty$, then we have the following bounds
\begin{equation*}
    \begin{aligned}
        &(i)\quad  \sup_{\tau \in \mathcal{G}(c_uT^{-1}\xi_2^{-2}, 0)}|R_{21} - R_{11}| = o_p(1); \\
        &(ii)\quad \sup_{\tau \in \mathcal{G}(c_uT^{-1}\xi_2^{-2}, 0)}|R_{22} - R_{12}| = o_p(1); \\
        &(iii)\quad \sup_{\tau \in \mathcal{G}(c_uT^{-1}\xi_2^{-2}, 0)}|R_{13}| = o_p(1),
    \end{aligned}
\end{equation*}
where each holds with probability at least $1 - o(1)$.
\end{lemma}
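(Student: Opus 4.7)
The plan is to reduce each of the three claims to uniform bounds already established in Lemmas C.2, C.5, C.6, and C.8, evaluated at the specific sequence $u_T = c_u T^{-1}\xi_2^{-2}$ and $v_T = 0$. The key observation is that on the set $\mathcal{G}(c_u T^{-1}\xi_2^{-2}, 0)$, every sum $\sum_{t = \lfloor T\tau^\star\rfloor+1}^{\lfloor T\tau\rfloor}(\cdots)$ has at most $c_u \xi_2^{-2}$ terms, so $Tu_T = c_u\xi_2^{-2}$. Throughout, I will use the rate hypothesis of Theorem 4.2, namely $\xi_2^{-2}\,p^2\log(p\vee T)/(Tl_T) = o(1)$, to convert the bounds from Conditions B--C into $o(1)$ quantities.

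For part (i), I would write
\begin{equation*}
    R_{11} - R_{21} = T\bigl[\Phi(\hat\eta - \eta^\star,\hat\eta - \eta^\star) + 2\Phi(\hat\eta - \eta^\star,\eta^\star)\bigr],
\end{equation*}
so that Lemma C.8(ii) bounds the first term by $c_u T^\gamma(Tu_T)\kappa_{\max}(1+\nu^2)(\sigma^4/\kappa_{\min}^2)\cdot p^2\log(p\vee T)/(Tl_T)$, and Lemma C.8(iv) bounds the second by $c_u'(Tu_T)\xi_2\sqrt{\kappa_{\max}(1+\nu^2)}(\sigma^2/\kappa_{\min})[p^2\log(p\vee T)/(T^{1-\gamma}l_T)]^{1/2}$. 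Substituting $Tu_T = c_u\xi_2^{-2}$ makes the first quantity proportional to $\xi_2^{-2}\,p^2\log(p\vee T)/(Tl_T) \cdot T^\gamma$ and the second to $\xi_2^{-1}[p^2\log(p\vee T)/(T^{1-\gamma}l_T)]^{1/2}$, both of which are $o(1)$ under the rate hypothesis (with $\gamma<1$ small enough that the $T^\gamma$ inflation is absorbed).

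For part (ii), I would use Hölder's inequality $|R_{12} - R_{22}| \le \|\sum_{t=\lfloor T\tau^\star\rfloor+1}^{\lfloor T\tau\rfloor}\epsilon_t Z_t\|_\infty \cdot \|\hat\eta-\eta^\star\|_1$, then apply Lemma C.2(i) to get $\|\sum\epsilon_t Z_t\|_\infty \le c_1 K\sqrt{Tu_T\log(p\vee T)} = O(\xi_2^{-1}\sqrt{\log(p\vee T)})$, and combine with the $\ell_1$ bound of Lemma C.6(ii). The product is of order $\xi_2^{-1}\sqrt{\log(p\vee T)}\cdot\sqrt{p^2\log(p\vee T)/(Tl_T)}$, which vanishes under the Theorem 4.2 rate. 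For part (iii), I would directly apply Lemma C.5(iii) with $u_T = c_u T^{-1}\xi_2^{-2}$, obtaining
\begin{equation*}
    |R_{13}| \le 2c_u(\sigma^2\vee\alpha)\xi_2^{-1}p\sqrt{\log(p\vee T)}\,\|\hat\phi_2-\phi_2^\star\|_2\bigl(1 + \tfrac{1}{\xi_2}p\sqrt{\log(p\vee T)}\|\hat\eta-\eta^\star\|_2\bigr),
\end{equation*}
and then invoking Condition C on $\|\hat\phi_2-\phi_2^\star\|_2$ together with Lemma C.6(iii), which says the parenthesized factor is $1+o(1)$. The resulting bound is of order $\xi_2^{-1}\sqrt{p^4\log^2(p\vee T)/(Tl_T)}$, again $o(1)$ by the hypothesis.

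The main obstacle will be bookkeeping of logarithmic, polynomial, and $T^\gamma$ factors: each of the three reductions leaves a bound whose leading dimensional factor is of the form $\xi_2^{-r}\cdot (p^2\log(p\vee T)/(Tl_T))^{s}$ for appropriate $(r,s)$, and in each case one must verify that the Theorem 4.2 condition $\xi_2^{-2}p^2\log(p\vee T)/(Tl_T) = o(1)$ (together with Condition B(b), which already restricts $p^2\log(p\vee T)$ relative to $T^{1/2-b}\sqrt{l_T}$) is enough to drive this to zero, including the additional $\log(p\vee T)$ and $p^2$ inflations that appear in parts (ii) and (iii). Once these comparisons are checked, the three probability statements follow from the high-probability events supporting Lemmas C.2, C.5, C.6, and C.8, all of which occur with probability $1 - o(1)$.
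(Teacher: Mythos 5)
Your proposal is correct and follows essentially the same route as the paper: the same decomposition $R_{11}-R_{21}=T[\Phi(\hat\eta-\eta^\star,\hat\eta-\eta^\star)+2\Phi(\hat\eta-\eta^\star,\eta^\star)]$ controlled via Lemma \ref{lemma:c8}, the same H\"older-plus-Lemma \ref{lemma:c2}(i)/\ref{lemma:c6}(ii) argument for part (ii), and for part (iii) your direct appeal to Lemma \ref{lemma:c5}(iii) with $u_T=c_uT^{-1}\xi_2^{-2}$ is just a packaged version of the paper's Cauchy--Schwarz bound on $\Phi(\hat{\phi}_2-\phi_2^\star,\hat\eta-\eta^\star)$ and $\Phi(\hat{\phi}_2-\phi_2^\star,\eta^\star)$. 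The rate verifications you invoke (Condition B(b)/the Theorem \ref{thm:4.2} rate, including absorbing the $T^\gamma$ factor) match how the paper closes each bound.
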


\begin{proof}[Proof of Lemma \ref{lemma:c9}]
Let $\Phi(\cdot, \cdot)$ be the function that we defined in the proof of Lemma \ref{lemma:c5}. Then for part (i), we derive that
\begin{equation}\label{eq:42}
    \begin{aligned}
        &\sup_{\tau \in \mathcal{G}(c_uT^{-1}\xi_2^{-2}, 0)}|R_{21} - R_{11}| \\
        &= \sup_{\tau \in \mathcal{G}(c_uT^{-1}\xi_2^{-2}, 0)}\left|\sum_{t=\lfloor T\tau^\star \rfloor+1}^{\lfloor T\tau\rfloor} \|\eta^{\star^\prime}Z_t\|_2^2 - \sum_{t=\lfloor T\tau^\star \rfloor + 1}^{\lfloor T\tau \rfloor}\|\hat{\eta}^\prime Z_t\|_2^2 \right| \\
        &= \sup_{\tau \in \mathcal{G}(c_uT^{-1}\xi_2^{-2}, 0)}\left| \sum_{t=\lfloor T\tau^\star \rfloor + 1}^{\lfloor T\tau \rfloor}(\hat{\eta}^\prime - \eta^\star)^\prime Z_tZ_t^\prime (\hat{\eta} + \eta^\star) \right| \\
        &= \sup_{\tau \in \mathcal{G}(c_uT^{-1}\xi_2^{-2}, 0)}\left| T\Phi(\hat{\eta} - \eta^\star, \hat{\eta} - \eta^\star) + 2T\Phi(\hat{\eta} - \eta^\star, \eta^\star) \right|.
    \end{aligned}
\end{equation}
Next, we follow the similar procedure as we used in Lemma \ref{lemma:c6} part (iii), then it implies that
\begin{equation}\label{eq:43}
    \begin{aligned}
        \sup_{\tau \in \mathcal{G}(c_uT^{-1}\xi_2^{-2}, 0)} T\Phi(\hat{\eta} - \eta^\star, \hat{\eta} - \eta^\star) 
        &\leq c_u T^\gamma \xi_2^{-2}\frac{K_X^2 K_\epsilon^2}{\kappa^2_{\min}}\left( \frac{p^2\log(p \vee T)}{Tl_T} \right) \\
        &\leq \mathcal{O}\left(\frac{p^2\log(p\vee T)}{\xi_2^2T^{1-\gamma}l_T} \right) \leq \frac{c_{u1}^2}{T^{2b}} = o(1)
    \end{aligned}
\end{equation}
with probability at least $1 - o(1)$. The last equality holds because of Condition B(b). 

The upper bound of second term is obtained by the following procedures:
\begin{align*}
    &2T\sup_{\tau \in \mathcal{G}(u_T, 0)}|\Phi(\hat{\eta} - \eta^\star, \eta^\star)| \\
    &\leq 2T\sup_{\tau \in \mathcal{G}(u_T, 0)}\Phi(\hat{\eta} - \eta^\star, \hat{\eta} - \eta^\star)^{\frac{1}{2}}\sup_{\tau \in \mathcal{G}(c_uT^{-1}\xi_2^{-2}, 0)}\Phi(\eta^\star, \eta^\star)^{\frac{1}{2}} \\
    &\leq 2Tc_u\left(T^\gamma u_T\kappa_{\max}(1+\nu^2)\frac{\sigma^4}{\kappa_{\min}^2}\frac{p^2\log(p\vee T)}{Tl_T}\right)^{\frac{1}{2}}(u_T\kappa_{\max}\xi_2^2)^{\frac{1}{2}} \\
    &\leq 2c_u Tu_T \xi_2\sqrt{\kappa_{\max}(1+\nu^2)}\frac{\sigma^2}{\kappa_{\min}}\left(\frac{p^2\log(p\vee T)}{T^{1-\gamma}l_T}\right)^{\frac{1}{2}},
\end{align*}
with probability at least $1-a-o(1)$. Then by selecting $a \to 0$ as well as $u_T = c_{u1}T^{-1}\xi_2^{-2}$, it leads to

\begin{equation}\label{eq:44}
    \begin{aligned}
        &\sup_{\tau \in \mathcal{G}(c_uT^{-1}\xi_2^{-2}, 0)}2T|\Phi(\hat{\eta} - \eta^\star, \eta^\star)| \\
        &\leq 2c_u c_{u1}\xi_2^{-1}\sqrt{\kappa_{\max}(1+\nu^2)}\frac{\sigma^2}{\kappa_{\min}}\left( \frac{p^2\log(p\vee T)}{T^{1-\gamma}l_T} \right)^{\frac{1}{2}} \leq \frac{c_uc_{u1}}{T^b} = o(1),
    \end{aligned}
\end{equation}
with probability at least $1-o(1)$. The last inequality holds because of the Condition B as well. Therefore, bringing \eqref{eq:43} and \eqref{eq:44} back into \eqref{eq:42} leads to the final result of part (i). 

To show part (ii), note that we have
\begin{equation}
    \label{eq:45}
    \begin{aligned}
        \sup_{\tau\in \mathcal{G}(c_uT^{-1}\xi_2^{-2}, 0)}|R_{22} - R_{12}| 
        &= \sup_{\tau\in \mathcal{G}(c_uT^{-1}\xi_2^{-2}, 0)}\left| \sum_{t=\lfloor T\tau^\star \rfloor+1}^{\lfloor T\tau \rfloor}\epsilon_t(\hat{\eta} - \eta^\star)^\prime Z_t \right| \\
        &\leq c_{u1}K(\xi_2^{-2}p\log(p\vee T))^{\frac{1}{2}}\|\hat{\eta} - \eta^\star\|_1 \\
        &\leq c_{u1}c_uK(\xi_2^{-2}p\log(p\vee T))^{\frac{1}{2}}\sqrt{1+\nu^2}\frac{\sigma^2}{\kappa_{\min}}\sqrt{\frac{p^2\log(p\vee T)}{Tl_T}} \\
        &= c_{u1}c_u\sqrt{1+\nu^2}K\frac{\sigma^2}{\xi_2\kappa_{\min}}\frac{p^{\frac{3}{2}}\log(p\vee T)}{\sqrt{Tl_T}} = o(1),
    \end{aligned}
\end{equation}
with probability at least $1-o(1)$, where $K = 2^{2/\gamma_2}(K_X + K_\epsilon)^2$. The first inequality holds because of H\'{o}lder inequality, while the second inequality is satisfied due to the result in Lemma \ref{lemma:c6} part (ii), and the last equality holds because of the extra assumption in the statement. 

For part (iii), we firstly observe that, by using the definition of auxiliary function $\Phi(\cdot, \cdot)$, we have:
\begin{equation*}
    \begin{aligned}
        &\sup_{\tau\in \mathcal{G}(c_uT^{-1}\xi_2^{-2}, 0)}\left| \sum_{t=\lfloor T\tau^\star\rfloor + 1}^{\lfloor T\tau \rfloor}(\hat{\phi}_2 - \phi_2^\star)^\prime Z_tZ_t^\prime \hat{\eta} \right| \\ 
        &\leq T\sup_{\tau\in \mathcal{G}(c_uT^{-1}\xi_2^{-2}, 0)}\left\{ \Big|\Phi(\hat{\phi}_2 - \phi_2^\star, \hat{\eta} - \eta^\star) \Big| + \Big|\Phi(\hat{\phi}_2 - \phi_2^\star, \eta^\star) \Big| \right\}.
    \end{aligned}
\end{equation*}
Therefore, we will separately discuss the upper bounds of $\Phi(\hat{\phi}_2 - \phi_2^\star, \hat{\eta} - \eta^\star)$ and $\Phi(\hat{\phi}_2 - \phi_2^\star, \eta^\star)$.

For the term $\Phi(\hat{\phi}_2 - \phi_2^\star, \hat{\eta} - \eta^\star)$, applying Cauchy-Schwarz inequality together with parts (ii) and (iv) from Lemma \ref{lemma:c8} implies that:
\begin{equation}
    \label{eq:46}
    \begin{aligned}
        &\sup_{\tau \in \mathcal{G}(c_uT^{-1}\xi_2^{-2}, 0)}T\Big| \Phi(\hat{\phi}_2 - \phi_2^\star, \hat{\eta} - \eta^\star)\Big| \\
        &\leq T\sup_{\tau \in \mathcal{G}(c_uT^{-1}\xi_2^{-2}, 0)}\left\{\Phi(\hat{\phi}_2 - \phi_2^\star, \hat{\phi}_2 - \phi_2^\star)\right\}^{\frac{1}{2}}\sup_{\tau \in \mathcal{G}(c_uT^{-1}\xi_2^{-2}, 0)}\left\{\Phi(\hat{\eta} - \eta^\star, \hat{\eta} - \eta^\star) \right\}^{\frac{1}{2}} \\
        &\leq \left\{c_u\sqrt{1+\nu^2}\frac{\kappa_{\max}\sigma^2}{\xi^2_2\kappa_{\min}}\left(\frac{p^2\log(p\vee T)}{T^{1-\gamma}l_T}\right)^{\frac{1}{2}} \right\}\left\{c_u^\prime \sqrt{(1+\nu^2)}\frac{\sigma^2}{\kappa_{\min}}\left(\frac{p^2\log(p\vee T)}{T^{1-\gamma}l_T}\right)^{\frac{1}{2}}\right\} \\
        &\leq c_uc_u'\kappa_{\max}(1+\nu^2)\frac{\sigma^4}{\xi_2^2\kappa_{\min}^2}\frac{p^2\log(p\vee T)}{T^{1-\gamma}l_T} = \mathcal{O}\left(\frac{p^2\log(p\vee T)}{\xi^2_2T^{1-\gamma}l_T}\right) = \frac{c_u^2}{T^{2b}} = o(1),
    \end{aligned}
\end{equation}
which holds with probability at least $1-o(1)$. Then we similarly obtain that with probability at least $1-o(1)$, it satisfies:
\begin{equation}
    \label{eq:47}
    \begin{aligned}
        &\sup_{\tau \in \mathcal{G}(c_uT^{-1}\xi_2^{-2}, 0)}T\Big| \Phi(\hat{\phi}_2 - \phi_2^\star, \eta^\star) \Big| \\
        &\leq c_uc_{u1}\xi_2^{-1}\kappa_{\max}\sqrt{1+\nu^2}\frac{\sigma^2}{\kappa_{\min}}\left(\frac{p^2\log(p\vee T)}{T^{1-\gamma}l_T}\right)^{\frac{1}{2}} \leq \frac{c_{u2}}{T^b} = o(1).
    \end{aligned}
\end{equation}
Now we put \eqref{eq:46} and \eqref{eq:47} back into part (iii), and it yields the bound of $|R_{13}|$. 
\end{proof}

\begin{lemma}
\label{lemma:c10}
Suppose the jump size $\xi_2 \to 0$. Then we have:

\begin{equation*}
    \left| \sum_{t=\lfloor T\tau^\star\rfloor+1}^{\lfloor T\tau \rfloor} \left( \|\eta^{\star^\prime}Z_t\|_2^2 - \mathbb{E}\|\eta^{\star^\prime}Z_t\|_2^2 \right) \right| = o_p(1).
\end{equation*}
Additionally, for any constant $r>0$ and if $\xi_2^{-2}\mathbb{E}\|\eta^{\star^\prime}Z_t\|_2^2 \to \sigma^2$, then we have:
\begin{equation*}
    \sum_{t=\lfloor T\tau^\star\rfloor+1}^{\lfloor T\tau \rfloor}\|\eta^{\star^\prime}Z_t\|_2^2 \overset{P}{\to} r\sigma^2.
\end{equation*}
\end{lemma}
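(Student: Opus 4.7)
The plan is to treat the two claims jointly: first establish concentration of the sum around its mean via a variance/Bernstein argument, then identify the limit of the mean. Throughout, it is implicit (as this lemma is invoked inside the proof of Theorem 4.2) that $\tau$ lies in a neighbourhood $\tau^\star + O(T^{-1}\xi_2^{-2})$ of $\tau^\star$, so that $n := \lfloor T\tau \rfloor - \lfloor T\tau^\star \rfloor = O(\xi_2^{-2})$, and for the second claim $n = r\xi_2^{-2}(1+o(1))$ for the fixed $r > 0$.

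By Lemma C.1, $\eta^{\star^\prime}Z_t$ is sub-Weibull$(\gamma_2)$ with Orlicz norm bounded by $K_X\xi_2$, hence the centered square $W_t := \|\eta^{\star^\prime}Z_t\|_2^2 - \mathbb{E}\|\eta^{\star^\prime}Z_t\|_2^2$ is sub-Weibull$(\gamma_2/2)$ with Orlicz norm $K := c_0\xi_2^2$ for a constant $c_0$ depending only on $K_X$ and $\gamma_2$; in particular $\mathrm{Var}(W_t) = O(\xi_2^4)$. Under Condition A(b) the process $\{X_t\}$, and hence $\{W_t\}$, is geometrically $\beta$-mixing, so summing covariances yields
$$\mathrm{Var}\Bigl(\sum_{t=\lfloor T\tau^\star\rfloor+1}^{\lfloor T\tau\rfloor} W_t\Bigr) \;=\; O\bigl(n\,\mathrm{Var}(W_t)\bigr) \;=\; O(\xi_2^{-2}\cdot \xi_2^4) \;=\; O(\xi_2^2) \;\longrightarrow\; 0.$$
Chebyshev's inequality then gives $\sum_t W_t = o_p(1)$, proving the first claim. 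If a uniform-in-$\tau$ version is required (because $\tau = \tilde\tau$ is data-driven in the application), I would instead apply the Bernstein-type tail bound for sub-Weibull $\beta$-mixing sequences already used in Lemmas C.2 and C.3 (Lemma 13 of Wong and Tewari, 2020), and take a union bound over the $O(\xi_2^{-2})$ candidate values of $\lfloor T\tau\rfloor$ on the high-probability neighbourhood of $\tau^\star$ supplied by Theorem 4.1; since $nK^2 = O(\xi_2^2) \to 0$, the polynomial prefactor is absorbed by the exponential.

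For the second claim, stationarity on the post-change segment gives $\mathbb{E}\|\eta^{\star^\prime}Z_t\|_2^2 = \eta^{\star^\prime}\Sigma_Z^2\eta^\star$ for every $t > \lfloor T\tau^\star\rfloor$. Combining $n = r\xi_2^{-2}(1+o(1))$ with the hypothesis $\xi_2^{-2}\mathbb{E}\|\eta^{\star^\prime}Z_t\|_2^2 \to \sigma^2$, the deterministic part of the sum is $n\cdot\xi_2^2\sigma^2(1+o(1)) \to r\sigma^2$. Adding the $o_p(1)$ fluctuation supplied by the first claim delivers the stated convergence in probability.

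The main technical subtlety is not the concentration itself, which is essentially a variance calculation once the sub-Weibull norm of $W_t$ is identified, but the randomness of the upper summation limit in the intended application. Localising $\tilde\tau$ via Theorem 4.1 to a grid of size $O(\xi_2^{-2})$ and then applying a uniform Bernstein bound is the only place where the $\beta$-mixing assumption is used non-trivially; the calibration is comfortable because $K = O(\xi_2^2)$ and $n = O(\xi_2^{-2})$ leave ample slack between the threshold one must exceed and the target rate $o_p(1)$.
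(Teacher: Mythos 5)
Your argument is correct, and the second half (identifying the limit of the mean via $n=r\xi_2^{-2}(1+o(1))$ and the hypothesis $\xi_2^{-2}\mathbb{E}\|\eta^{\star\prime}Z_t\|_2^2\to\sigma^2$) coincides with the paper's. Where you differ is in the concentration step: the paper does not use a Chebyshev/variance argument at all, but instead writes the summand as $\sum_{j}\{(\eta^\star_j Z_{t,j})^2-\mathbb{E}(\eta^\star_j Z_{t,j})^2\}$, notes it is sub-Weibull$(\gamma_2/2)$, and applies the Bernstein-type tail bound for sub-Weibull $\beta$-mixing sequences (Lemma 13 of Wong et al.) over the block of length $\asymp r\xi_2^{-2}$, choosing the threshold $d=\xi_2^{1-f}$ (any $0<f<1$) so that both exponential terms vanish while the deviation $c_{u2}rd\to 0$. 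Your primary route---bounding $\mathrm{Var}\bigl(\sum_t W_t\bigr)=O(n\xi_2^4)=O(\xi_2^2)$ by summing covariances under geometric $\beta$-mixing (a Davydov-type inequality, which is available since all moments of $W_t$ are $O(\xi_2^2)$ by sub-Weibullness) and then invoking Chebyshev---is more elementary and gives the pointwise $o_p(1)$ statement just as well; the paper's exponential bound buys stronger (and, via your union-bound remark, uniform-in-$\tau$) control, which is what you correctly flag as the only place where uniformity over the data-driven location matters, and your fallback there is exactly the paper's tool. One small notational caveat: in this paper $\|\eta^{\star\prime}Z_t\|_2^2$ denotes $\sum_{j=1}^p(\eta^\star_jZ_{t,j})^2$ rather than $(\eta^{\star\prime}Z_t)^2$, but since both quantities are sub-Weibull$(\gamma_2/2)$ with Orlicz norm $O(K_X^2\xi_2^2)$, your scaling $K=c_0\xi_2^2$ and hence the whole calibration are unaffected.
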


\begin{proof}[Proof of Lemma \ref{lemma:c10}]
Note that if we assume $\xi_2 \to 0$, then by assuming that $\tau \geq \tau^\star$, we can show that
\begin{equation}\label{eq:48}
    c_{u1}r\xi_2^{-2} \leq \lfloor T\tau^\star + r\xi_2^{-2}\rfloor - \lfloor T\tau^\star \rfloor \leq c_{u2}r\xi_2^{-2}.
\end{equation}
Then using the fact that $\eta^{\star^\prime}Z_{t}$ follows sub-Weibull($\gamma_2$) as well as the result in Lemma \ref{lemma:c1}, we are able to derive that $\|\eta^{\star^\prime}Z_{t}\|_2^2$ is sub-Weibull($\gamma_2/2$) distributed and upper bound by $K = 2^{2/\gamma_2}K_X^2$. 

For any $\tau \geq \tau^\star$, we can easily verify the inequality that as of $\xi_2 \to 0$,
\begin{equation*}
    T(\tau - \tau^\star) - 1 \leq \lfloor T\tau \rfloor - \lfloor T\tau^\star \rfloor \leq T(\tau - \tau^\star) + 1,
\end{equation*}
which directly implies \eqref{eq:48}. Now, we find the convergence of the summation. Note that $\mathbb{E}(\eta^{\star^\prime}Z_{t}) = 0$, $\mathbb{E}(\eta^{\star^\prime}Z_{t})^2 = \text{Var}(\eta^{\star^\prime}Z_{t}) = \eta^{\star^\prime}\Gamma_1^Z(0)\eta^\star$. We use a similar strategy as that of the Lemma \ref{lemma:c3}. Thus we rewrite the desired equation as follows and apply the result of Lemma 13 in \cite{wong2020lasso}:
\begin{equation}
    \label{eq:49}
    \begin{aligned}
        &\mathbb{P}\left(\left| \sum_{t=\lfloor T\tau^\star \rfloor + 1}^{\lfloor T\tau^\star  + r\xi_2^{-2}\rfloor} \left( \sum_{j=1}^p \left\{ (\eta_j^\star Z_{t,j})^2 - \mathbb{E}(\eta^\star_jZ_{t,j})^2 \right\}  \right) \right| > c_{u2}rd\right) \\
        \leq&\quad (c_{u2}r\xi_2^{-2})\exp\left(-\frac{(c_{u1}r\xi_2^{-2}d)^\gamma}{K^\gamma C_1}\right) + \exp\left(-\frac{c_{u1}r\xi_2^{-2}d^2}{K^2C_2}\right),
    \end{aligned}
\end{equation}
where $\gamma = (1/\gamma_1 + 2/\gamma_2)^{-1} < 1$ and the constants $C_1, C_2$ depend only on $\gamma_1, \gamma_2$ and $c_{u1}, c_{u2}$. Then by selecting $d$ to be any sequence that vanishing with slower rate than $\xi_2$, for example $b = \xi_2^{1-f}$ for any $0 < f< 1$, for sufficiently large $T$, we obtain that:
\begin{equation*}
    \left|\sum_{t=\lfloor T\tau^\star \rfloor +1}^{\lfloor T\tau^\star + r\xi_2^{-2}\rfloor} \left(\|\eta^{\star^\prime}Z_t\|_2^2 - \mathbb{E}(\|\eta^{\star^\prime}Z_t\|_2^2)\right) \right| = o_p(1).
\end{equation*}
This is the result that was desired in the first part. The second part of the lemma can be proved directly from the assumption that if $\xi_2^{-2}\mathbb{E}\|\eta^{\star^\prime}Z_t\|_2^2 \to \sigma^2$, then we obtain that
\begin{equation*}
    \left| \sum_{t=\lfloor T\tau^\star \rfloor+1}^{\lfloor T\tau^\star + r\xi_2^{-2} \rfloor} \mathbb{E}\|\eta^{\star^\prime}Z_t\|_2^2\right| = \left| r\xi_2^{-2}\mathbb{E}\|\eta^{\star^\prime}Z_t\|_2^2 \right| \to r\sigma^2.
\end{equation*}
\end{proof}

\begin{lemma}\label{lemma:12}
Assume there exist some constants $C, c>0$ such that with probability at least $1-C\exp(-plog p)$, the given AR($p$) process $\{X_t\}$ and the corresponding lagged $p$-dimensional vectors $\{Z_t\}$, which is introduced in (7) hold. For any time interval $1\leq l < u \leq T$ within the stationary segment, we have:
\begin{equation}
    \left\| \frac{1}{u-l}\sum_{t=l}^{u-1}Z_t\epsilon_t \right\|_\infty \leq c\sqrt{\frac{p\log (p\vee (u-l))}{u-l}}.
\end{equation}
\end{lemma}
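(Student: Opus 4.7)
The plan is to expand the infinity norm coordinate-wise, establish that each coordinate is a sum of mean-zero sub-Weibull summands from a $\beta$-mixing sequence, and then apply the Bernstein-type concentration inequality of Wong, Li and Tewari (2020, Lemma 13) that has already been invoked throughout Appendix C (see in particular Lemma C.2 and Lemma C.3). First I would write
\begin{equation*}
    \left\| \frac{1}{u-l}\sum_{t=l}^{u-1} Z_t\epsilon_t \right\|_\infty = \max_{1\le j\le p} \left| \frac{1}{u-l}\sum_{t=l}^{u-1} X_{t-j}\epsilon_t \right|,
\end{equation*}
so that a union bound over the $p$ coordinates reduces the problem to a scalar tail estimate for $\frac{1}{u-l}\sum_{t=l}^{u-1} X_{t-j}\epsilon_t$ for each fixed $j\in\{1,\dots,p\}$.

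Next I would check the ingredients required by the Wong--Li--Tewari inequality. By Condition A(a), $\epsilon_t$ is sub-Weibull($\gamma_2$) with $\|\epsilon_t\|_{\psi_{\gamma_2}}\le K_\epsilon$, and by Remark 3.1 each $X_{t-j}$ is sub-Weibull($\gamma_2$) with $\|X_{t-j}\|_{\psi_{\gamma_2}}\le K_X$. Using the same argument as in Lemma C.1, the product $X_{t-j}\epsilon_t$ is sub-Weibull($\gamma_2/2$) with Orlicz norm bounded by $2^{2/\gamma_2}K_XK_\epsilon$. Moreover $\mathbb{E}(X_{t-j}\epsilon_t)=0$ because $\epsilon_t$ is independent of $X_{t-1},X_{t-2},\dots$ under Condition A(a). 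Finally, the $\beta$-mixing assumption A(b) on $\{X_t\}$ transfers to the product sequence $\{X_{t-j}\epsilon_t\}$ since it is a measurable function of the past of $(X_t,\epsilon_t)$, so the geometrically $\beta$-mixing hypothesis of Wong--Li--Tewari holds with the same exponent $\gamma_1$.

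With these ingredients, the cited inequality gives, for any $t>0$,
\begin{equation*}
    \mathbb{P}\!\left( \left| \frac{1}{u-l}\sum_{s=l}^{u-1} X_{s-j}\epsilon_s \right| > t \right) \le (u-l)\exp\!\left(-\frac{((u-l)t)^\gamma}{(2^{2/\gamma_2}K_XK_\epsilon)^\gamma C_1}\right) + \exp\!\left(-\frac{(u-l)t^2}{(2^{2/\gamma_2}K_XK_\epsilon)^2 C_2}\right),
\end{equation*}
where $\gamma=(1/\gamma_1+2/\gamma_2)^{-1}<1$ and $C_1,C_2$ depend only on $\gamma_1,\gamma_2$ and the mixing constant $c$. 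I would then take
\begin{equation*}
    t = c\sqrt{\frac{p\log(p\vee(u-l))}{u-l}}
\end{equation*}
with a large enough constant $c$, so that both exponential terms are dominated by $\exp(-c'p\log(p\vee(u-l)))$ for $T$ sufficiently large (the sub-Gaussian tail dominates in this regime; the sub-Weibull tail is absorbed provided $p\log(p\vee(u-l))$ grows at most polynomially in $u-l$, which is harmless under Condition B(a)). A union bound over $j=1,\dots,p$ then yields the desired bound with probability at least $1-C\exp(-cp\log p)$.

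The main obstacle I anticipate is handling the interplay between the two regimes of the Bernstein-type bound, the sub-Gaussian term and the heavier sub-Weibull term. The sub-Weibull term contributes $(u-l)\exp(-(\cdot)^\gamma)$ with $\gamma<1$, so for very short intervals $u-l$ one must verify that the chosen threshold $t$ still makes this term negligible. This is taken care of by the explicit presence of $p\vee(u-l)$ rather than just $u-l$ inside the logarithm, which gives enough slack uniformly in the length of the interval; a careful calibration of the constants $c,c'$ in terms of $\gamma_1,\gamma_2$ is the only delicate point.
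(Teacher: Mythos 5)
Your proposal is correct in outline and, modulo the short-interval caveat you flag yourself, it reaches the stated bound — but it takes a genuinely different (and more honest) route than the paper. The paper's own proof is much cruder: it simply asserts $\epsilon_t \sim N(0,\sigma^2)$, applies a plain Bernstein inequality to $\{Z_t\epsilon_t\}$ to get a single sub-Gaussian tail $2\exp(-c_1(u-l)d^2)$, and plugs in $d=\sqrt{p\log(p\vee(u-l))/(u-l)}$; the union bound over the $p$ coordinates is not even written out. You instead stay inside Condition A, treat $X_{t-j}\epsilon_t$ as a mean-zero sub-Weibull($\gamma_2/2$) summand of a geometrically $\beta$-mixing sequence, invoke the same Wong--Li--Tewari concentration bound already used in Lemmas C.2 and C.3, and then union-bound over coordinates. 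The paper's shortcut buys brevity at the price of an assumption (Gaussian errors) stronger than the sub-Weibull framework of the main text — and even under Gaussianity the product $Z_{t,j}\epsilon_t$ is sub-exponential rather than sub-Gaussian, so the purely quadratic tail is itself optimistic. Your route buys consistency with the rest of Appendix C and with the heavy-tailed setting the paper advertises; its cost is exactly the delicate point you identify: the heavier term $(u-l)\exp\bigl(-((u-l)t)^\gamma/(K^\gamma C_1)\bigr)$ with $\gamma<1$ is only dominated by $\exp(-c'p\log p)$ when $u-l$ is not too small relative to $p$ (roughly $u-l \gtrsim p^{(2-\gamma)/\gamma}$ up to logarithmic factors), so the uniform claim over arbitrary $1\le l<u\le T$ needs either that restriction made explicit or the Gaussian shortcut the paper takes.
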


\begin{proof}[Proof of Lemma~\ref{lemma:12}]
Since we know that $\epsilon_t \sim \mathcal{N}(0, \sigma^2)$, then by applying the Bernstein inequality on $\{Z_t\epsilon_t\}$, we obtain that
\begin{equation}
    \begin{aligned}
        \mathbb{P}\left(\frac{1}{u-l}\left\| \sum_{t=l}^{u-1}Z_t\epsilon_t \right\|_\infty > d\right) \leq 2\exp\left(-c_1(u-l)d^2\right),
    \end{aligned}
\end{equation}
where $c_1>0$ is some large enough constant. Then, by choosing $d = \sqrt{\frac{p\log (p \vee (u-l))}{u-l}}$, we have the probability that is $2\exp\left(-c_1p\log p\right)$, which achieves the final result.

\end{proof}

\section{D \quad Additional Simulation Results} \label{sec:AppD}
\renewcommand{\theequation}{D.\arabic{equation}}
\setcounter{equation}{0}
\renewcommand{\thelemma}{D.\arabic{lemma}}
\renewcommand{\thedefinition}{D.\arabic{definition}}
\setcounter{table}{0}
\renewcommand{\thetable}{D.\arabic{table}}
\setcounter{figure}{0}
\renewcommand{\thefigure}{D.\arabic{figure}}

In this section, we provide additional simulation results. Firstly, we consider more cases following the scenarios in the main manuscript, where the sample size is $T=500$ and the error term follows $N(0,0.01)$. The results are displayed in Tables \ref{Tab.settingI_more} -- \ref{Tab.settingIII_more}. We observe that the values of absolute bias and root MSE are relatively small, particularly for $\lfloor T\tilde{\tau} \rfloor$ compared to $\lfloor T\hat{\tau} \rfloor$. In addition, the coverage probabilities are close to the nominal levels.
Secondly, we assume that the sample size is $T=1000$ and the error term follows $N(0,1)$. The results are displayed in Figures \ref{Fig.AB_RMSE_CaseI_1000} -- \ref{Fig.CP_CaseV_1000}, and the performance of our estimators is quite good.

\begin{table}[ht]
\footnotesize
\caption{Performance of the model for Scenario I with $T=500$. } \label{Tab.settingI_more}
\centering
\setlength{\tabcolsep}{1pt} 
\begin{tabular}{@{} c|cccccccc @{}}
\hline
Coefficient & Truth & AB ($\lfloor T\hat{\tau} \rfloor$) & AB ($\lfloor T \tilde{\tau} \rfloor$) & RMSE ($\lfloor T\hat{\tau} \rfloor$) & RMSE ($\lfloor T \tilde{\tau} \rfloor$) & 90$\%$ CP 
($\lfloor T\tilde{\tau} \rfloor$) & 95$\%$ CP ($\lfloor T\tilde{\tau} \rfloor$) & 99$\%$ CP ($\lfloor T\tilde{\tau} \rfloor$) \\
\hline\hline
& $\lfloor T/3 \rfloor$ & 6.340 & 5.070 & 9.949 & 7.993 & 0.950 & 0.980 & 0.990 \\
$\theta=-0.7$ & $\lfloor T/2 \rfloor$ & 5.880 & 4.780 & 9.241 & 7.529 & 0.960 & 0.970 & 0.990 \\
$\phi=-0.5$ & $\lfloor 2T/3 \rfloor$ & 7.030 & 6.110 & 11.405 & 10.356 & 0.930 & 0.960 & 0.980 \\
& $\lfloor 4T/5 \rfloor$ & 5.880 & 5.100 & 11.735 & 10.913 & 0.950 & 0.960 & 0.970 \\
\hline\hline
& $\lfloor T/3 \rfloor$ & 32.960 & 33.040 & 70.385 & 70.844 & 0.860 & 0.870 & 0.900 \\
$\theta=0.7$ & $\lfloor T/2 \rfloor$ &  19.010 & 17.780 & 38.391 & 37.690 & 0.920 & 0.960 & 0.970 \\
$\phi=-0.5$ & $\lfloor 2T/3 \rfloor$ & 15.070 & 12.740 & 24.603 & 21.601 & 0.900 & 0.940 & 0.950 \\
& $\lfloor 4T/5 \rfloor$ &  22.660 & 19.820 & 41.759 & 39.308 & 0.820 & 0.840 & 0.880 \\
\hline\hline
& $\lfloor T/3 \rfloor$ & 6.590 & 5.790 & 11.004 & 9.875 & 0.930 & 0.960 & 0.990 \\
$\theta=-0.7$ & $\lfloor T/2 \rfloor$ & 7.150 & 6.340 & 10.642 & 9.463 & 0.900 & 0.950 & 0.980 \\
$\phi=0.5$ & $\lfloor 2T/3 \rfloor$ & 7.650 & 6.470 & 14.283 & 13.004 & 0.940 & 0.960 & 0.980 \\
& $\lfloor 4T/5 \rfloor$ & 6.880 & 5.580 & 14.451 & 11.039 & 0.930 & 0.950 & 0.980 \\
\hline\hline
& $\lfloor T/3 \rfloor$ & 32.150 & 30.750 & 62.569 & 62.576 & 0.820 & 0.880 & 0.910 \\
$\theta=0.7$ & $\lfloor T/2 \rfloor$ & 17.280 & 16.350 & 30.687 & 30.059 & 0.950 & 0.980 & 0.990 \\
$\phi=0.5$ & $\lfloor 2T/3 \rfloor$ & 11.890 & 10.660 & 20.663 & 18.765 & 0.930 & 0.950 & 0.970 \\
& $\lfloor 4T/5 \rfloor$ &  21.580 & 19.390 & 37.252 & 34.955 & 0.770 & 0.840 & 0.890 \\
\hline
\end{tabular}
\end{table}

\begin{table}[ht]
\footnotesize
\caption{Performance of the model for Scenario II with $T=500$.} \label{Tab.settingII_more1}
\centering
\setlength{\tabcolsep}{1pt} 
\begin{tabular}{@{} c|cccccccc @{}}
\hline
Coefficient & Truth & AB ($\lfloor T\hat{\tau} \rfloor$) & AB ($\lfloor T\tilde{\tau} \rfloor$) & RMSE ($\lfloor T\hat{\tau} \rfloor$) & RMSE ($\lfloor T\tilde{\tau} \rfloor$) & 90$\%$ CP 
($\lfloor T\tilde{\tau} \rfloor$) & 95$\%$ CP ($\lfloor T\tilde{\tau} \rfloor$) & 99$\%$ CP ($\lfloor T\tilde{\tau} \rfloor$) \\\hline\hline
& $\lfloor T/3 \rfloor$ & 5.080 & 4.920 & 8.117 & 7.947 & 0.920 & 0.950 & 0.970 \\
$\theta=-0.9$ & $\lfloor T/2 \rfloor$ & 4.880 & 4.700 & 8.393 & 8.019 & 0.950 & 0.970 & 0.970 \\
$\phi=0.1$ & $\lfloor 2T/3 \rfloor$ & 4.140 & 3.950 & 6.841 & 6.701 & 0.950 & 0.970 & 0.990 \\
& $\lfloor 4T/5 \rfloor$ & 3.680 & 3.030 & 7.139 & 5.696 & 0.970 & 0.980 & 0.990 \\
\hline\hline
& $\lfloor T/3 \rfloor$ & 4.040 & 3.510 & 6.589 & 5.709 & 0.880 & 0.910 & 0.960 \\
$\theta=-0.9$ & $\lfloor T/2 \rfloor$ & 3.780 & 3.620 & 6.585 & 6.153 & 0.920 & 0.940 & 0.960 \\
$\phi=0.3$ & $\lfloor 2T/3 \rfloor$ & 3.250 & 2.880 & 5.379 & 4.695 & 0.940 & 0.950 & 0.990 \\
& $\lfloor 4T/5 \rfloor$ & 2.880 & 2.690 & 5.769 & 5.474 & 0.950 & 0.980 & 0.990 \\
\hline\hline
& $\lfloor T/3 \rfloor$ & 1.980 & 1.790 & 3.597 & 3.369 & 0.890 & 0.890 & 0.910 \\
$\theta=-0.9$ & $\lfloor T/2 \rfloor$ & 1.760 & 1.810 & 2.926 & 2.972 & 0.860 & 0.880 & 0.930 \\
$\phi=0.7$ & $\lfloor 2T/3 \rfloor$ & 2.160 & 1.630 & 3.734 & 2.648 & 0.900 & 0.900 & 0.940 \\
& $\lfloor 4T/5 \rfloor$ & 1.510 & 1.500 & 2.696 & 2.702 & 0.920 & 0.950 & 0.970 \\
\hline\hline
& $\lfloor T/3 \rfloor$ & 1.030 & 1.010 & 1.775 & 1.752 & 0.930 & 0.930 & 0.940 \\
$\theta=-0.9$ & $\lfloor T/2 \rfloor$ & 0.990 & 0.940 & 1.565 & 1.562 & 0.940 & 0.950 & 0.960 \\
$\phi=0.9$ & $\lfloor 2T/3 \rfloor$ & 1.290 & 1,260 & 2.170 & 2.163 & 0.900 & 0.910 & 0.920 \\
& $\lfloor 4T/5 \rfloor$ & 1.200 & 1.200 & 1.897 & 1.860 & 0.890 & 0.890 & 0.930 \\
\hline
\end{tabular}
\end{table}

\begin{table}[ht]
\footnotesize
\caption{Performance of the model for Scenario II with $T=500$.} \label{Tab.settingII_more2}
\centering
\setlength{\tabcolsep}{1pt} 
\begin{tabular}{@{} c|cccccccc @{}}
\hline
Coefficient & Truth & AB ($\lfloor T\hat{\tau} \rfloor$) & AB ($\lfloor T\tilde{\tau} \rfloor$) & RMSE ($\lfloor T\hat{\tau} \rfloor$) & RMSE ($\lfloor T\tilde{\tau} \rfloor$) & 90$\%$ CP 
($\lfloor T\tilde{\tau} \rfloor$) & 95$\%$ CP ($\lfloor T\tilde{\tau} \rfloor$) & 99$\%$ CP ($\lfloor T\tilde{\tau} \rfloor$) \\\hline\hline
& $\lfloor T/3 \rfloor$ & 6.510 & 6.350 & 10.074 & 9.976 & 0.940 & 0.970 & 1.000 \\
$\theta=-0.9$ & $\lfloor T/2 \rfloor$ & 7.920 & 7.830 & 12.504 & 12.349 & 0.940 & 0.950 & 0.970 \\
$\phi=-0.1$ & $\lfloor 2T/3 \rfloor$ & 7.120 & 6.210 & 11.889 & 10.449 & 0.950 & 0.960 & 0.990 \\
& $\lfloor 4T/5 \rfloor$ &  6.780 & 6.200 & 13.324 & 12.636 & 0.920 & 0.940 & 0.950 \\
\hline\hline
& $\lfloor T/3 \rfloor$ & 9.630 & 9.050 & 14.999 & 13.873 & 0.930 & 0.980 & 0.990 \\
$\theta=-0.9$ & $\lfloor T/2 \rfloor$ & 11.660 & 10.350 & 17.311 & 14.855 & 0.930 & 0.940 & 0.980 \\
$\phi=-0.3$ & $\lfloor 2T/3 \rfloor$ & 9.850 & 9.490 & 15.989 & 15.391 & 0.950 & 0.960 & 0.990 \\
& $\lfloor 4T/5 \rfloor$ &  11.120 & 9.580 & 20.316 & 16.468 & 0.890 & 0.920 & 0.950 \\
\hline\hline
& $\lfloor T/3 \rfloor$ &  16.540 & 16.190 & 36.083 & 34.738 & 0.900 & 0.950 & 0.980 \\
$\theta=-0.9$ & $\lfloor T/2 \rfloor$ & 15.070 & 13.900 & 29.896 & 28.574 & 0.920 & 0.970 & 0.990 \\
$\phi=-0.5$ & $\lfloor 2T/3 \rfloor$ & 12.720 & 11.460 & 21.085 & 17.281 & 0.910 & 0.940 & 0.970 \\
& $\lfloor 4T/5 \rfloor$ &  14.370 & 12.440 & 25.380 & 21.172 & 0.870 & 0.900 & 0.940 \\
\hline
\end{tabular}
\end{table}

\begin{table}[ht]
\footnotesize
\caption{Performance of the model for Scenario II with $T=500$.} \label{Tab.settingII_more3}
\centering
\setlength{\tabcolsep}{1pt} 
\begin{tabular}{@{} c|cccccccc @{}}
\hline
Coefficient & Truth & AB ($\lfloor T\hat{\tau} \rfloor$) & AB ($\lfloor T\tilde{\tau} \rfloor$) & RMSE ($\lfloor T\hat{\tau} \rfloor$) & RMSE ($\lfloor T\tilde{\tau} \rfloor$) & 90$\%$ CP 
($\lfloor T\tilde{\tau} \rfloor$) & 95$\%$ CP ($\lfloor T \tilde{\tau} \rfloor$) & 99$\%$ CP ($\lfloor T\tilde{\tau} \rfloor$) \\\hline\hline
& $\lfloor T/3 \rfloor$ & 4.400 & 4.230 & 6.657 & 6.332 & 0.960 & 0.980 & 0.990 \\
$\theta=0.9$ & $\lfloor T/2 \rfloor$ &  4.500 & 3.800 & 6.850 & 6.037 & 0.910 & 0.960 & 1.000 \\
$\phi=-0.1$ & $\lfloor 2T/3 \rfloor$ & 4.300 & 3.740 & 6.597 & 5.753 & 0.930 & 0.960 & 0.980 \\
& $\lfloor 4T/5 \rfloor$ & 5.420 & 4.760 & 8.415 & 7.597 & 0.880 & 0.920 & 0.970 \\
\hline\hline
& $\lfloor T/3 \rfloor$ & 3.330 & 2.920 & 5.815 & 4.315 & 0.960 & 0.970 & 0.990 \\
$\theta=0.9$ & $\lfloor T/2 \rfloor$ & 3.040 & 2.810 & 4.652 & 4.208 & 0.920 & 0.960 & 0.990 \\
$\phi=-0.3$ & $\lfloor 2T/3 \rfloor$ & 3.140 & 2.560 & 5.667 & 4.200 & 0.940 & 0.950 & 0.970 \\
& $\lfloor 4T/5 \rfloor$ & 3.770 & 3.390 & 5.832 & 5.438 & 0.860 & 0.890 & 0.950 \\
\hline\hline
& $\lfloor N/3 \rfloor$ & 2.790 & 2.230 & 4.520 & 3.375 & 0.940 & 0.970 & 0.970 \\
$\theta=0.9$ & $\lfloor N/2 \rfloor$ & 2.330 & 2.270 & 3.754 & 3.618 & 0.910 & 0.920 & 0.970 \\
$\phi=-0.5$ & $\lfloor 2N/3 \rfloor$ & 2.810 & 2.180 & 5.397 & 3.868 & 0.900 & 0.930 & 0.940 \\
& $\lfloor 4N/5 \rfloor$ & 3.000 & 2.940 & 5.075 & 5.073 & 0.820 & 0.830 & 0.890 \\
\hline
\end{tabular}
\end{table}

\begin{table}[ht]
\footnotesize
\caption{Performance of the model for Scenario II with $T=500$.} \label{Tab.settingII_more4}
\centering
\setlength{\tabcolsep}{1pt} 
\begin{tabular}{@{} c|cccccccc @{}}
\hline
Coefficient & Truth & AB ($\lfloor T\hat{\tau} \rfloor$) & AB ($\lfloor T\tilde{\tau} \rfloor$) & RMSE ($\lfloor T\hat{\tau} \rfloor$) & RMSE ($\lfloor T\tilde{\tau} \rfloor$) & 90$\%$ CP 
($\lfloor T\tilde{\tau} \rfloor$) & 95$\%$ CP ($\lfloor T\tilde{\tau} \rfloor$) & 99$\%$ CP ($\lfloor T\tilde{\tau} \rfloor$) \\\hline\hline
& $\lfloor T/3 \rfloor$ & 7.440 & 7.400 & 13.129 & 13.148 & 0.980 & 0.990 & 0.990 \\
$\theta=0.9$ & $\lfloor T/2 \rfloor$ & 7.050 & 6.100 & 12.308 & 11.217 & 0.970 & 0.980 & 1.000 \\
$\phi=0.1$ & $\lfloor 2T/3 \rfloor$ & 5.910 & 5.140 & 9.692 & 8.445 & 0.950 & 0.960 & 0.990 \\
& $\lfloor 4T/5 \rfloor$ &  7.380 & 6.340 & 11.688 & 10.288 & 0.900 & 0.940 & 0.990 \\
\hline\hline
& $\lfloor T/3 \rfloor$ & 12.040 & 11.970 & 22.967 & 23.012 & 0.970 & 0.970 & 0.980 \\
$\theta=0.9$ & $\lfloor T/2 \rfloor$ & 10.090 & 9.550 & 17.545 & 17.122 & 0.960 & 0.990 & 1.000 \\
$\phi=0.3$ & $\lfloor 2T/3 \rfloor$ & 8.420 & 7.110 & 13.646 & 11.904 & 0.950 & 0.960 & 0.990 \\
& $\lfloor 4T/5 \rfloor$ &  10.100 & 9.100 &  16.666 & 15.381 & 0.930 & 0.940 & 0.960 \\
\hline\hline
& $\lfloor T/3 \rfloor$ & 16.030 & 16.140 & 31.987 & 32.021 & 0.920 & 0.950 & 0.980 \\
$\theta=0.9$ & $\lfloor T/2 \rfloor$ & 13.340 & 12.300 & 20.712 & 20.175 & 0.950 & 0.980 & 0.980 \\
$\phi=0.5$ & $\lfloor 2T/3 \rfloor$ &  12.600 & 11.350 &  22.094 & 20.954 & 0.900 & 0.930 & 0.950  \\
& $\lfloor 4T/5 \rfloor$ & 14.930 & 13.980 & 29.411 & 28.339 & 0.880 & 0.890 & 0.930 \\
\hline
\end{tabular}
\end{table}

\begin{table}[ht]
\footnotesize
\caption{Performance of the model for Scenario II with $T=500$.} \label{Tab.settingII_more5}
\centering
\setlength{\tabcolsep}{1pt} 
\begin{tabular}{@{} c|cccccccc @{}}
\hline
Coefficient & Truth & AB ($\lfloor T\hat{\tau} \rfloor$) & AB ($\lfloor T\tilde{\tau} \rfloor$) & RMSE ($\lfloor T\hat{\tau} \rfloor$) & RMSE ($\lfloor T\tilde{\tau} \rfloor$) & 90$\%$ CP 
($\lfloor T\tilde{\tau} \rfloor$) & 95$\%$ CP ($\lfloor T\tilde{\tau} \rfloor$) & 99$\%$ CP ($\lfloor T\tilde{\tau} \rfloor$) \\\hline\hline
& $\lfloor T/3 \rfloor$ & 7.920 & 7.460 &  14.526 & 13.471 & 0.900 & 0.940 & 0.990 \\
$\theta=0.7$ & $\lfloor T/2 \rfloor$ &  7.170 & 6.950 & 12.158 & 12.052 & 0.930 & 0.970 & 0.980 \\
$\phi=-0.1$ & $\lfloor 2T/3 \rfloor$ & 7.170 & 6.130 & 11.876 & 10.527 & 0.910 & 0.940 & 0.970 \\
& $\lfloor 4T/5 \rfloor$ & 9.230 & 7.820 & 15.842 & 14.153 & 0.830 & 0.920 & 0.950 \\
\hline\hline
& $\lfloor T/3 \rfloor$ & 4.550 & 4.000 & 7.079 & 5.762 & 0.900 & 0.970 & 0.990 \\
$\theta=0.7$ & $\lfloor T/2 \rfloor$ & 4.640 & 3.800 & 6.562 & 5.694 & 0.920 & 0.960 & 1.000 \\
$\phi=-0.3$ & $\lfloor 2T/3 \rfloor$ & 4.270 & 3.930 & 6.890 & 6.070 & 0.930 & 0.950 & 0.970 \\
& $\lfloor 4T/5 \rfloor$ & 5.020 & 4.160 & 7.645 & 6.430 & 0.910 & 0.930 & 0.960 \\
\hline\hline
& $\lfloor T/3 \rfloor$ & 3.640 & 2.980 & 6.104 & 4.506 & 0.930 & 0.950 & 0.970 \\
$\theta=0.7$ & $\lfloor T/2 \rfloor$ &  3.130 & 2.600 & 4.742 & 4.040 & 0.920 & 0.940 & 0.990 \\
$\phi=-0.5$ & $\lfloor 2T/3 \rfloor$ & 3.190 & 2.950 & 5.527 & 4.738 & 0.900 & 0.940 & 0.970 \\
& $\lfloor 4T/5 \rfloor$ & 3.850 & 3.650 & 6.190 & 5.986 & 0.810 & 0.870 & 0.940 \\
\hline
\end{tabular}
\end{table}

\begin{table}[ht]
\footnotesize
\caption{Performance of the model for Scenario II with $T=500$.} \label{Tab.settingII_more6}
\centering
\setlength{\tabcolsep}{1pt} 
\begin{tabular}{@{} c|cccccccc @{}}
\hline
Coefficient & Truth & AB ($\lfloor T\hat{\tau} \rfloor$) & AB ($\lfloor T\tilde{\tau} \rfloor$) & RMSE ($\lfloor T\hat{\tau} \rfloor$) & RMSE ($\lfloor T\tilde{\tau} \rfloor$) & 90$\%$ CP 
($\lfloor T\tilde{\tau} \rfloor$) & 95$\%$ CP ($\lfloor T\tilde{\tau} \rfloor$) & 99$\%$ CP ($\lfloor T\tilde{\tau} \rfloor$) \\\hline\hline
& $\lfloor T/3 \rfloor$ & 6.410 & 5.830 & 9.413 & 8.566 & 0.930 & 0.970 & 1.000 \\
$\theta=-0.7$ & $\lfloor T/2 \rfloor$ & 7.220 & 6.790 & 10.928 & 10.294 & 0.940 & 0.960 & 0.980 \\
$\phi=0.1$ & $\lfloor 2T/3 \rfloor$ & 8.810 & 7.840 & 14.367 & 12.939 & 0.910 & 0.920 & 0.970 \\
& $\lfloor 4T/5 \rfloor$ & 7.080 & 6.710 & 13.360 & 13.159 & 0.910 & 0.930 & 0.950 \\
\hline\hline
& $\lfloor T/3 \rfloor$ & 5.140 & 4.130 & 7.996 & 6.172 & 0.930 & 0.960 & 0.990 \\
$\theta=-0.7$ & $\lfloor T/2 \rfloor$ &  5.680 & 4.650 & 9.438 & 7.876 & 0.940 & 0.970 & 0.980 \\
$\phi=0.3$ & $\lfloor 2T/3 \rfloor$ &  4.990 & 3.800 &  8.272 & 6.263 & 0.920 & 0.960 & 1.000 \\
& $\lfloor 4T/5 \rfloor$ & 4.240 & 3.880 & 8.070 & 7.593 & 0.930 & 0.970 & 0.980 \\
\hline\hline
& $\lfloor T/3 \rfloor$ & 3.790 & 3.040 & 6.002 & 5.079 & 0.880 & 0.900 & 0.960 \\
$\theta=-0.7$ & $\lfloor T/2 \rfloor$ &  3.170 & 3.580 & 5.160 & 6.263 & 0.890 & 0.920 & 0.970 \\
$\phi=0.5$ & $\lfloor 2T/3 \rfloor$ &  3.450 & 2.700 & 5.709 & 4.701 & 0.930 & 0.940 & 0.980 \\
& $\lfloor 4T/5 \rfloor$ & 2.360 & 2.170 & 4.273 & 3.931 & 0.960 & 0.970 & 1.000 \\
\hline
\end{tabular}
\end{table}

\begin{table}[ht]
\footnotesize
\caption{Performance of the model for Scenario III with $T=500$.} \label{Tab.settingIII_more}
\centering
\setlength{\tabcolsep}{1pt} 
\begin{tabular}{@{} c|cccccccc @{}}
\hline
Coefficient & Truth & AB ($\lfloor T\hat{\tau} \rfloor$) & AB ($\lfloor T\tilde{\tau} \rfloor$) & RMSE ($\lfloor T\hat{\tau} \rfloor$) & RMSE ($\lfloor T\tilde{\tau} \rfloor$) & 90$\%$ CP 
($\lfloor T\tilde{\tau} \rfloor$) & 95$\%$ CP ($\lfloor T\tilde{\tau} \rfloor$) & 99$\%$ CP ($\lfloor T\tilde{\tau} \rfloor$) \\\hline\hline
& $\lfloor T/3 \rfloor$ & 5.500 & 4.730 & 8.953 & 7.434 & 0.870 & 0.930 & 0.980 \\
$\phi=-0.1$ & $\lfloor T/2 \rfloor$ & 4.900 & 4.220 & 7.362 & 6.568 & 0.920 & 0.930 & 1.000 \\
& $\lfloor 2T/3 \rfloor$ & 4.730 & 3.880 & 7.838 & 6.384 & 0.910 & 0.940 & 0.970 \\
& $\lfloor 4T/5 \rfloor$ & 5.840 & 4.610 & 9.305 & 7.148 & 0.840 & 0.910 & 0.970 \\
\hline\hline
& $\lfloor T/3 \rfloor$ & 3.950 & 3.780 & 6.706 & 5.951 & 0.900 & 0.900 & 0.950 \\
$\phi=-0.3$ & $\lfloor T/2 \rfloor$ & 3.700 & 3.280 & 5.628 & 5.259 & 0.900 & 0.940 & 0.980 \\
& $\lfloor 2T/3 \rfloor$ & 2.780 & 2.530 & 4.324 & 3.711 & 0.920 & 0.940 & 0.990 \\
& $\lfloor 4T/5 \rfloor$ & 4.310 & 3.700 & 6.520 & 5.810 & 0.840 & 0.870 & 0.960 \\
\hline\hline
& $\lfloor T/3 \rfloor$ & 2.370 & 2.190 & 3.632 & 3.384 & 0.940 & 0.960 & 0.980 \\
$\phi=-0.5$ & $\lfloor T/2 \rfloor$ & 2.460 & 2.340 & 3.876 & 3.723 & 0.890 & 0.920 & 0.960 \\
& $\lfloor 2T/3 \rfloor$ & 1.860 & 1.980 & 2.821 & 2.946 & 0.900 & 0.950 & 0.970 \\
& $\lfloor 4T/5 \rfloor$ & 3.000 & 2.790 & 4.790 & 4.647 & 0.820 & 0.850 & 0.940 \\
\hline\hline
& $\lfloor T/3 \rfloor$ & 1.810 & 1.570 & 3.183 & 2.610 & 0.900 & 0.930 & 0.950 \\
$\phi=-0.7$ & $\lfloor T/2 \rfloor$ & 1.970 & 1.990 & 3.551 & 3.557 & 0.860 & 0.880 & 0.920 \\
& $\lfloor 2T/3 \rfloor$ & 1.830 & 1.700 & 3.132 &  2.877 & 0.870 & 0.890 & 0.930 \\
& $\lfloor 4T/5 \rfloor$ & 2.140 & 1.950 & 3.608 & 3.477 & 0.850 & 0.870 & 0.930 \\
\hline
\end{tabular}
\end{table}

\begin{figure}[ht]
\centering
\begin{tabular}{ cc }
\includegraphics[width=0.4\textwidth]{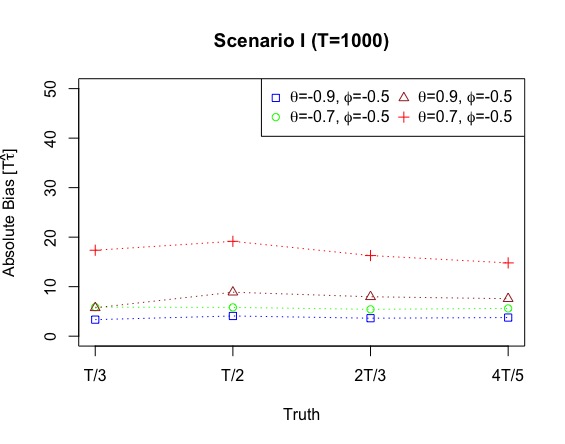} &
\includegraphics[width=0.4\textwidth]{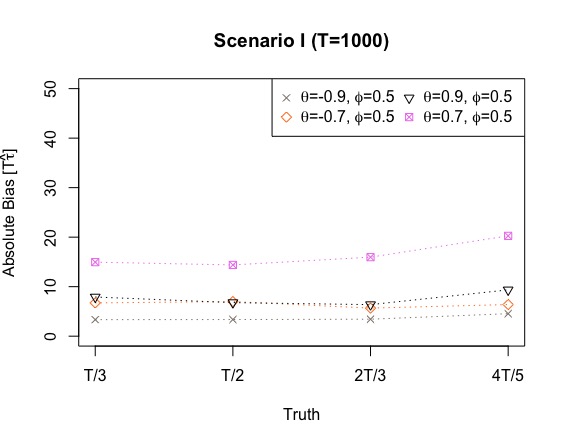} \\
\includegraphics[width=0.4\textwidth]{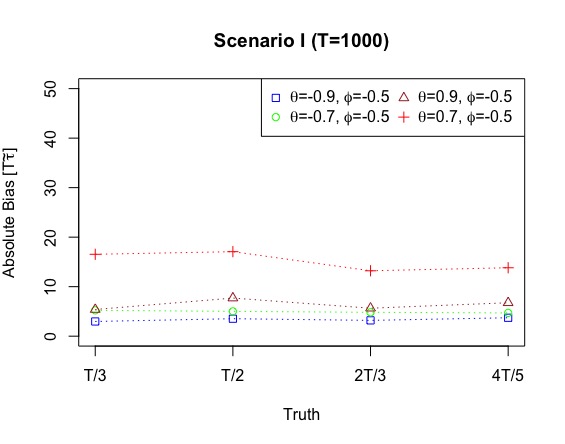} &
\includegraphics[width=0.4\textwidth]{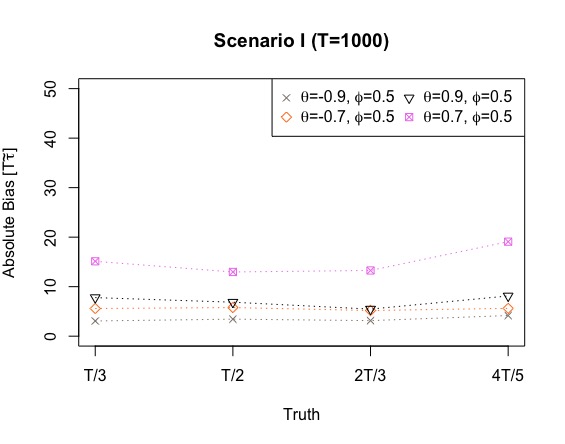} \\
\includegraphics[width=0.4\textwidth]{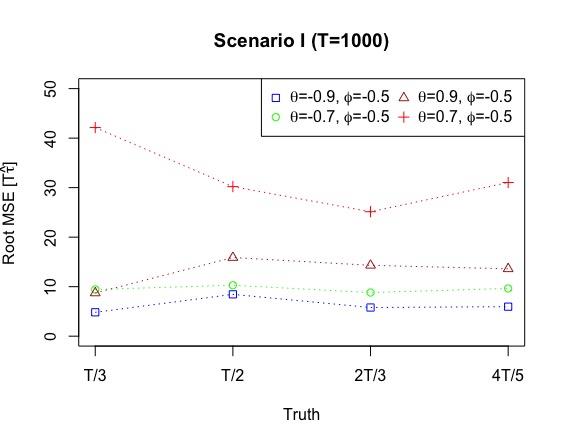} &
\includegraphics[width=0.4\textwidth]{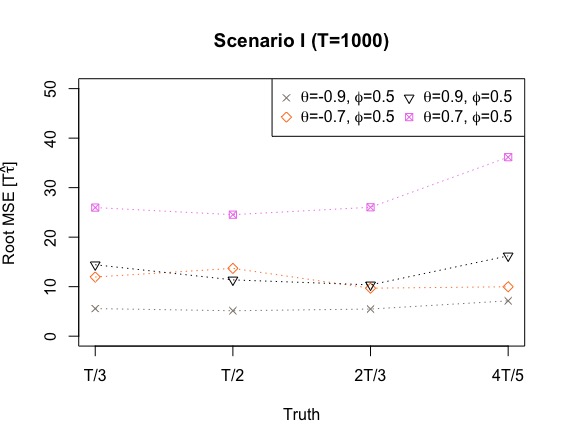} \\
\includegraphics[width=0.4\textwidth]{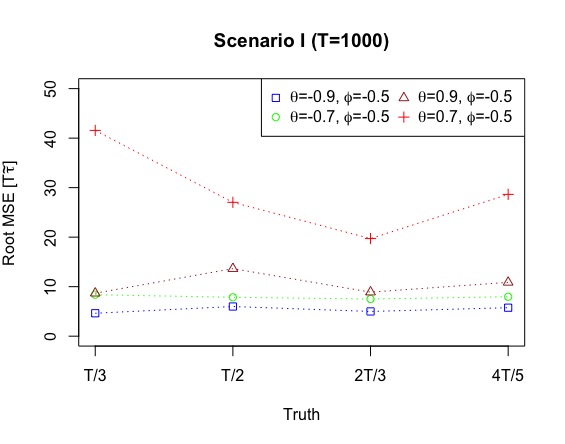} &
\includegraphics[width=0.4\textwidth]{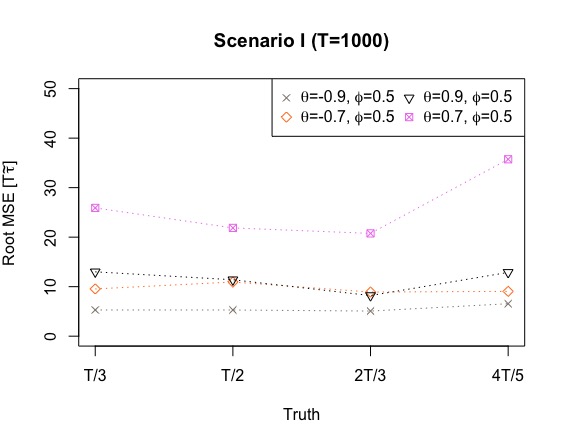}
\end{tabular}
\caption{AB and RMSE of $\lfloor T\hat{\tau} \rfloor$ and $\lfloor T\tilde{\tau} \rfloor$ for Scenario I with $T=1000$.}
\label{Fig.AB_RMSE_CaseI_1000}
\end{figure}

\begin{figure}[ht]
\centering
\begin{tabular}{ cc }
\includegraphics[width=0.4\textwidth]{90CP_CaseI_plot1.jpeg} &
\includegraphics[width=0.4\textwidth]{90CP_CaseI_plot2.jpeg} \\
\includegraphics[width=0.4\textwidth]{95CP_CaseI_plot1.jpeg} &
\includegraphics[width=0.4\textwidth]{95CP_CaseI_plot2.jpeg} \\
\includegraphics[width=0.4\textwidth]{99CP_CaseI_plot1.jpeg} &
\includegraphics[width=0.4\textwidth]{99CP_CaseI_plot2.jpeg} 
\end{tabular}
\caption{($90\%$, $95\%$, $99\%$) CP for $\lfloor T\tilde{\tau} \rfloor$ for Scenario I with $T=1000$.}
\label{Fig.CP_CaseI_1000}
\end{figure}

\begin{figure}[ht]
\centering
\begin{tabular}{ cc }
\includegraphics[width=0.4\textwidth]{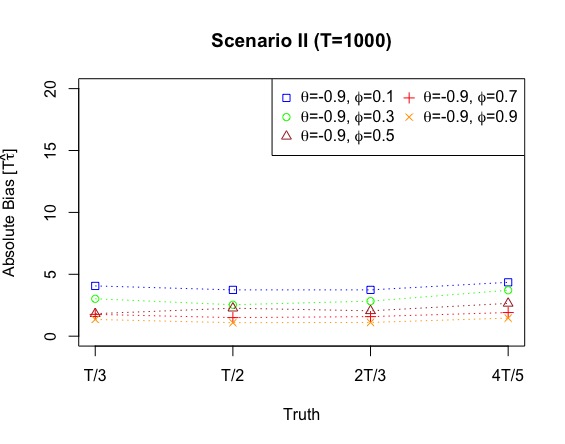} &
\includegraphics[width=0.4\textwidth]{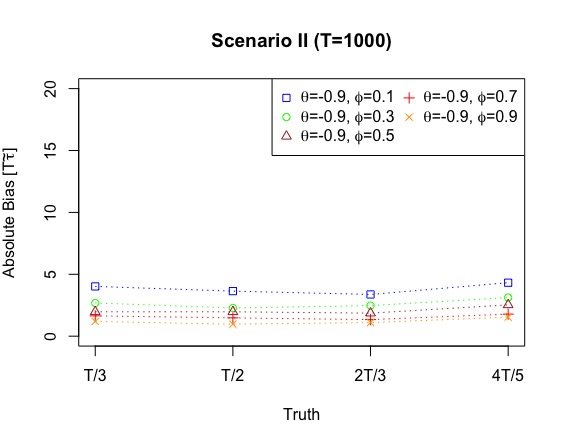} \\
\includegraphics[width=0.4\textwidth]{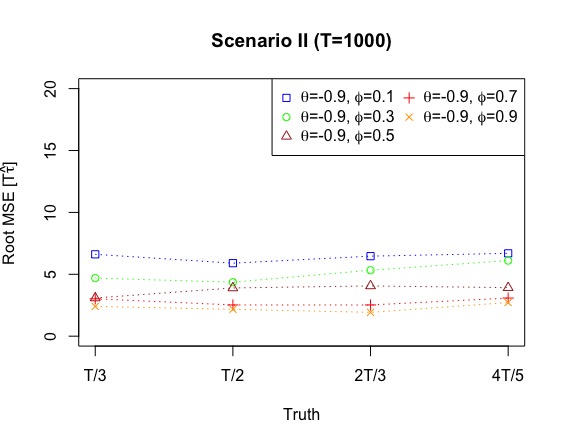} &
\includegraphics[width=0.4\textwidth]{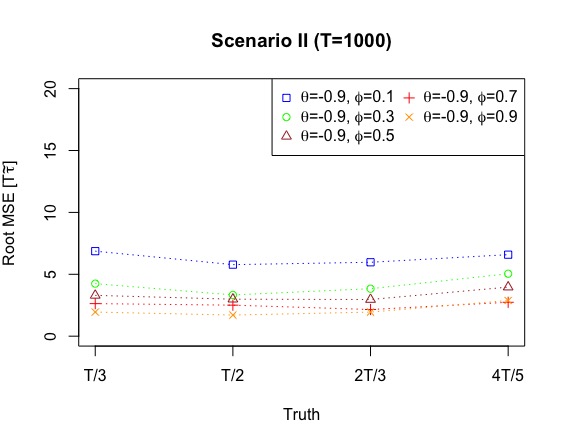} 
\end{tabular}
\caption{AB and RMSE of $\lfloor T\hat{\tau} \rfloor$ and $\lfloor T\tilde{\tau} \rfloor$ for Scenario II with $T=1000$.}
\label{Fig.AB_RMSE_CaseII_1000}
\end{figure}

\begin{figure}[ht]
\centering
\begin{tabular}{ cc }
\includegraphics[width=0.4\textwidth]{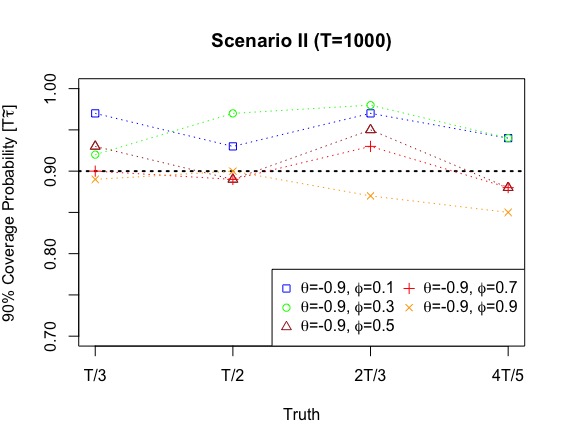} &
\includegraphics[width=0.4\textwidth]{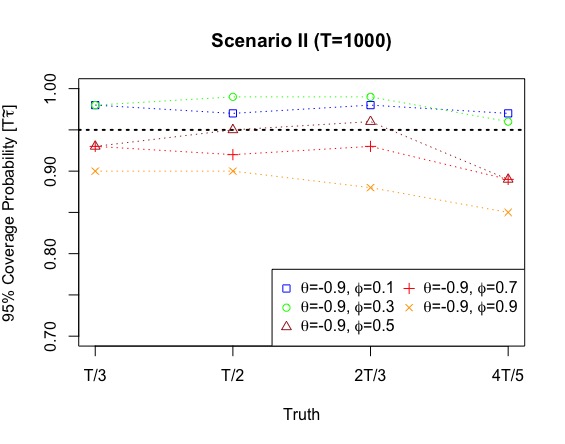} \\
\includegraphics[width=0.4\textwidth]{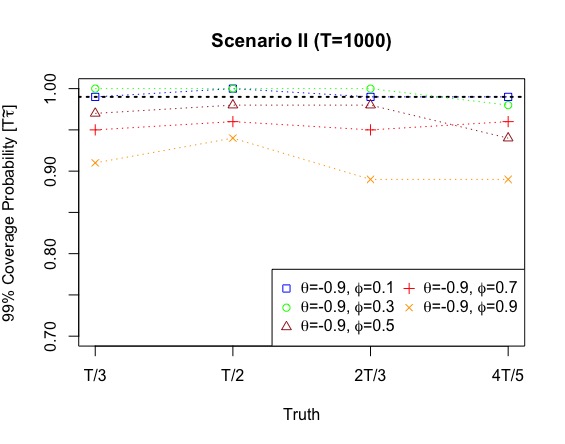}
\end{tabular}
\caption{($90\%$, $95\%$, $99\%$) CP for $\lfloor T\tilde{\tau} \rfloor$ for Scenario II with $T=1000$.}
\label{Fig.CP_CaseII_1000}
\end{figure}

\begin{figure}[ht]
\centering
\begin{tabular}{ cc }
\includegraphics[width=0.4\textwidth]{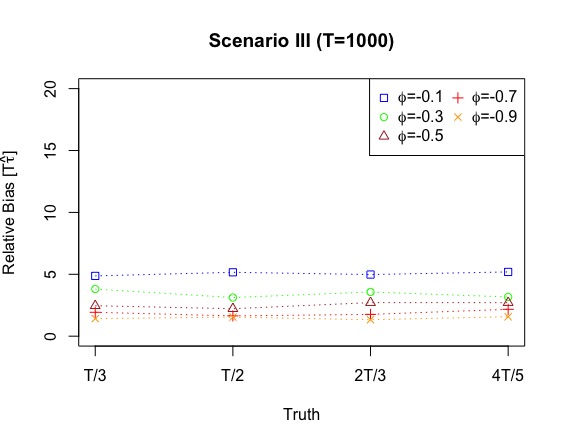} &
\includegraphics[width=0.4\textwidth]{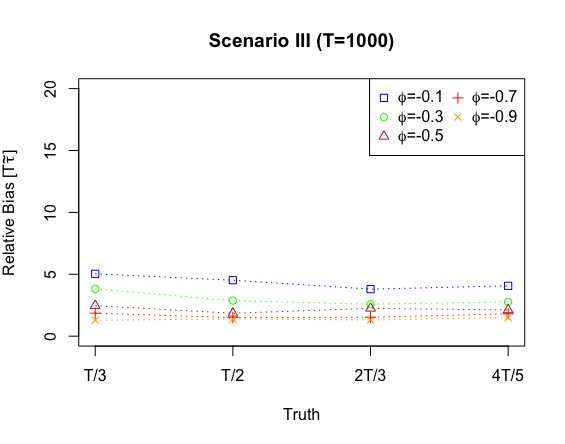} \\
\includegraphics[width=0.4\textwidth]{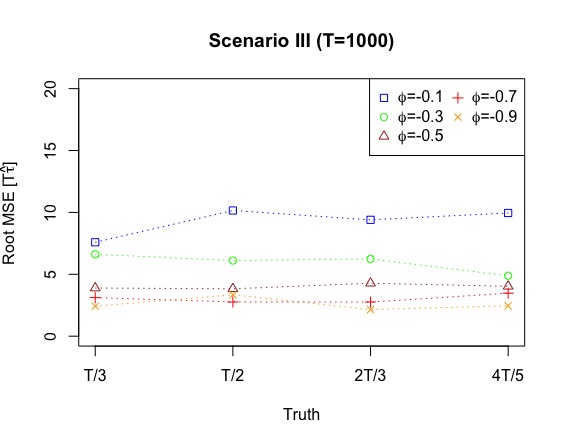} &
\includegraphics[width=0.4\textwidth]{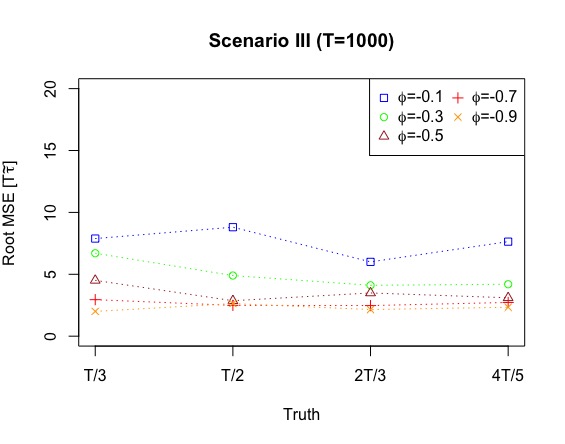} 
\end{tabular}
\caption{AB and RMSE of $\lfloor T\hat{\tau} \rfloor$ and $\lfloor T\tilde{\tau} \rfloor$ for Scenario III with $T=1000$.}
\label{Fig.AB_RMSE_CaseIII_1000}
\end{figure}

\begin{figure}[ht]
\centering
\begin{tabular}{ cc }
\includegraphics[width=0.4\textwidth]
{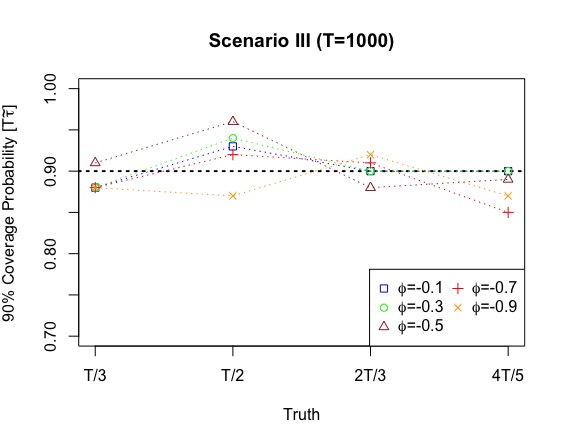} &
\includegraphics[width=0.4\textwidth]{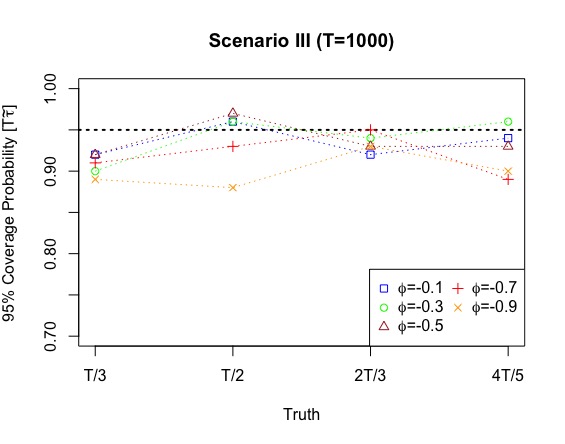} \\
\includegraphics[width=0.4\textwidth]{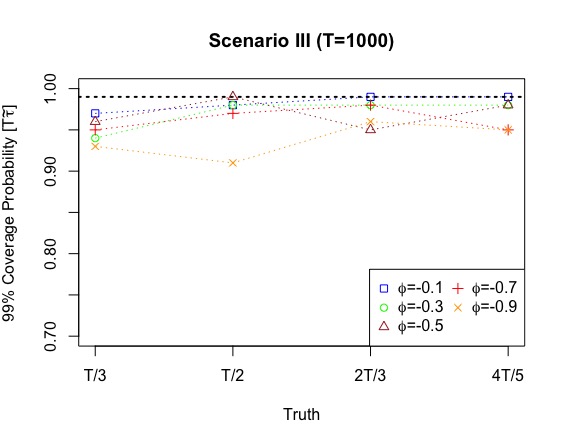}
\end{tabular}
\caption{($90\%$, $95\%$, $99\%$) CP for $\lfloor T\tilde{\tau} \rfloor$ for Scenario III with $T=1000$.}
\label{Fig.CP_CaseIII_1000}
\end{figure}

\begin{figure}[ht]
\centering
\begin{tabular}{ cc }
\includegraphics[width=0.4\textwidth]{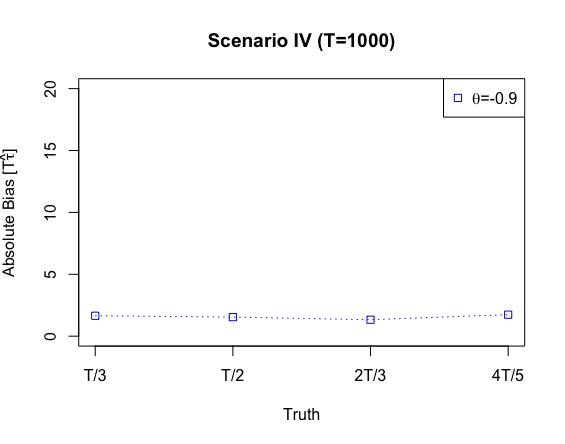} &
\includegraphics[width=0.4\textwidth]{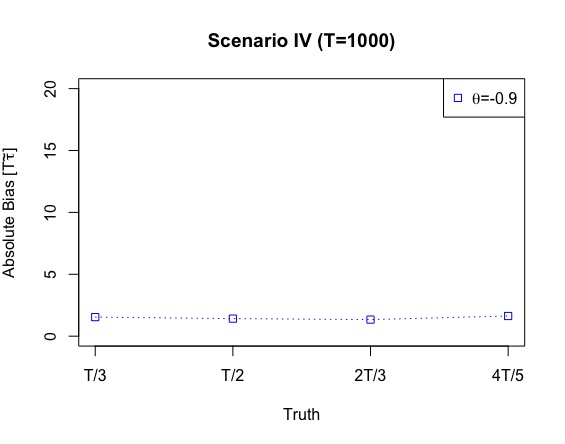} \\
\includegraphics[width=0.4\textwidth]{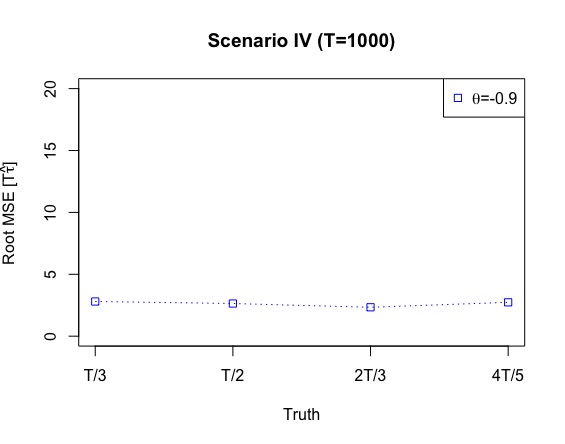} &
\includegraphics[width=0.4\textwidth]{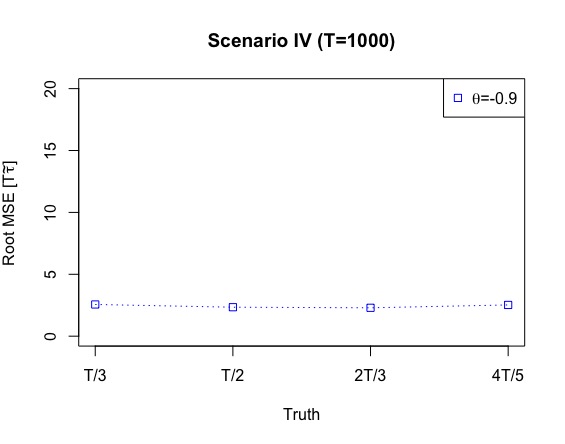} 
\end{tabular}
\caption{AB and RMSE of $\lfloor T\hat{\tau} \rfloor$ and $\lfloor T\tilde{\tau} \rfloor$ for Scenario IV with $T=1000$.}
\label{Fig.AB_RMSE_CaseIV_1000}
\end{figure}

\begin{figure}[ht]
\centering
\begin{tabular}{ cc }
\includegraphics[width=0.4\textwidth]{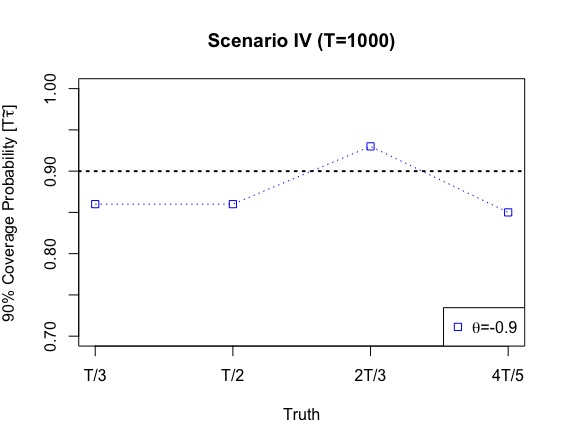} &
\includegraphics[width=0.4\textwidth]{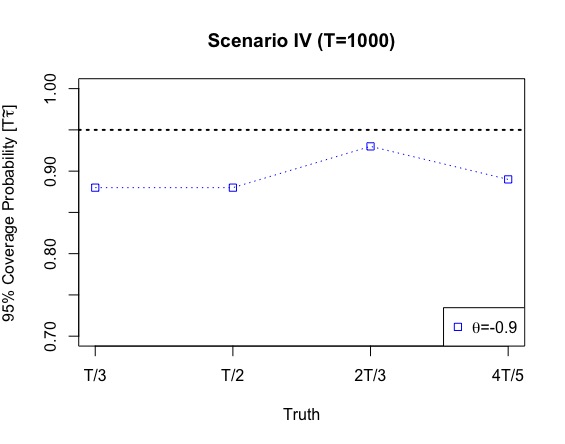} \\
\includegraphics[width=0.4\textwidth]{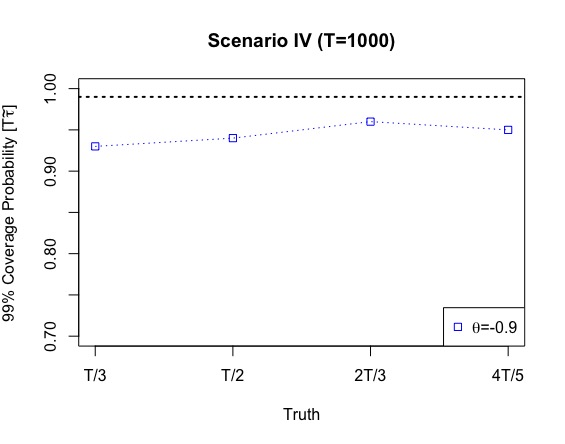}
\end{tabular}
\caption{($90\%$, $95\%$, $99\%$) CP for $\lfloor T\tilde{\tau} \rfloor$ for Scenario IV with $T=1000$.}
\label{Fig.CP_CaseIV_1000}
\end{figure}

\begin{figure}[ht]
\centering
\begin{tabular}{ cc }
\includegraphics[width=0.4\textwidth]{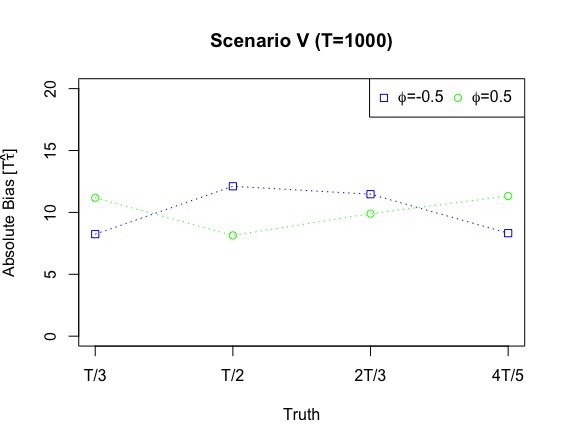} &
\includegraphics[width=0.4\textwidth]{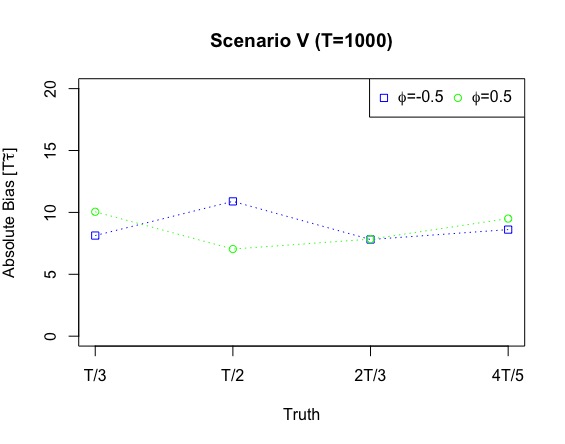} \\
\includegraphics[width=0.4\textwidth]{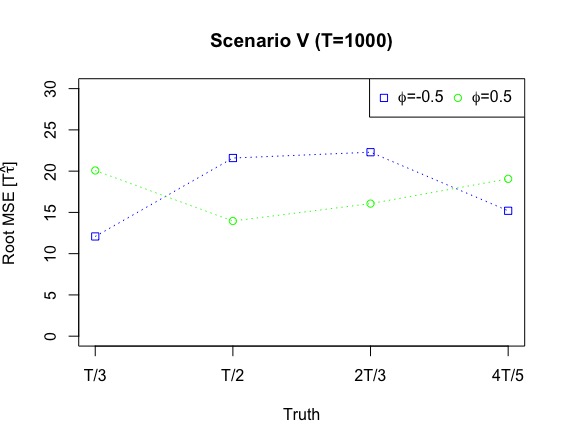} &
\includegraphics[width=0.4\textwidth]{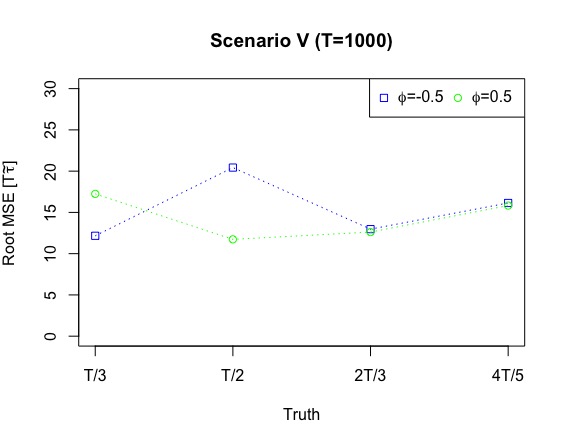} 
\end{tabular}
\caption{AB and RMSE of $\lfloor T\hat{\tau} \rfloor$ and $\lfloor T\tilde{\tau} \rfloor$ for Scenario V with $T=1000$.}
\label{Fig.AB_RMSE_CaseV_1000}
\end{figure}

\begin{figure}[ht]
\centering
\begin{tabular}{ cc }
\includegraphics[width=0.4\textwidth]{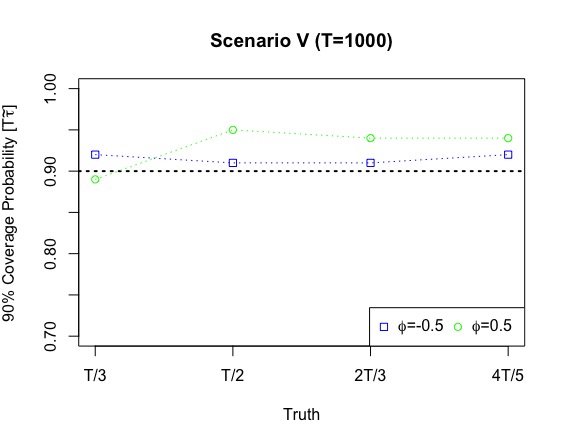} &
\includegraphics[width=0.4\textwidth]{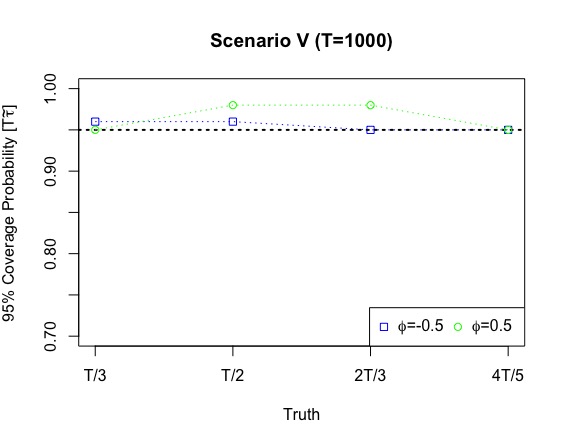} \\
\includegraphics[width=0.4\textwidth]{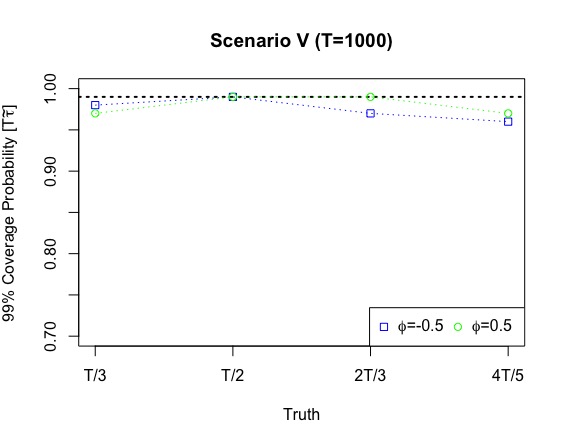}
\end{tabular}
\caption{($90\%$, $95\%$, $99\%$) CP for $\lfloor T\tilde{\tau} \rfloor$ for Scenario V with $T=1000$.}
\label{Fig.CP_CaseV_1000}
\end{figure}

\clearpage

\section{E \quad Additional Application Results} \label{sec:AppE}
\renewcommand{\theequation}{E.\arabic{equation}}
\setcounter{equation}{0}
\renewcommand{\thelemma}{E.\arabic{lemma}}
\renewcommand{\thedefinition}{E.\arabic{definition}}
\setcounter{table}{0}
\renewcommand{\thetable}{E.\arabic{table}}
\setcounter{figure}{0}
\renewcommand{\thefigure}{E.\arabic{figure}}

\subsection{E.1 \quad EEG data set}

Since our main methodology is motivated by the EEG data set, we provide the estimated models before and after the change point as follows.

$\bullet$ Channel P3: Before change point, the estimated process is AR(3) such that $X_t=-0.356 X_{t-1} - 0.263 X_{t-2} -0.209 X_{t-3} + \epsilon_t$, where $\epsilon_t \sim N(0,\sigma^2=0.175)$. After change point, the estimated process is AR(0) such that $X_t=\epsilon_t$, where $\epsilon_t \sim N(0,\sigma^2=0.414)$.
    
$\bullet$ Channel T3: Before change point, the estimated process is AR(0) such that $X_t = \epsilon_t$, where $\epsilon_t \sim N(0,\sigma^2=0.171)$. After change point, the estimated process is AR(1) $X_t=-0.325 X_{t-1}+\epsilon_t$, where $\epsilon_t \sim N(0,\sigma^2=0.337)$.
   
$\bullet$ Channel T5: Before change point, the estimated process is AR(1) such that $X_t=-0.228 X_{t-1}+\epsilon_t$, where $\epsilon_t \sim N(0,\sigma^2=0.261)$. After change point, the estimated process is AR(0) such that $X_t= \epsilon_t$ where $\epsilon_t \sim N(0,\sigma^2=0.519)$. 

We estimate the change points using some alternative methods, and the results are given in Table \ref{Tab.EEG_alters}. The SDE method could accurately estimate the change points, given that the truth is 85 seconds. We also construct the NSP confidence intervals for the EEG data sets, and the results are given in Figure \ref{Fig.NSP_CIs_EEG}. Note that the NSP method is sensitive to the significance level and there is no proper way to select it in practice. 

\begin{table}[ht]
\footnotesize
\caption{ Location of estimated change points in the EEG data set using some selected alternative methods. Note that the true value is around 85 seconds.} \label{Tab.EEG_alters}
\centering
\setlength{\tabcolsep}{10pt} 
\renewcommand{\arraystretch}{1.5}
\begin{tabular}{@{} cccccccc @{}}
\hline
Method & Channel P3 & Channel T3 & Channel T5   \\ \hline\hline
PELT & 92 & 90 & $\{ 18, 20, 33, 35, 85\}$ \\
SDE  & 92 & 90 & 85 \\
DP-UNIVAR & $\{92, 104 \}$ & $\{93, 95, 98, 100, 102 \}$ & $\{96, 98, 100, 103\}$ \\
LOCAL & 104 & 94 & 104 \\
STD & $\{104, 92, 97 \}$ & $\{ 94, 87, 100\}$ & $\{ 104, 90, 82, 97\}$ \\
WBS UNI-NONPAR & $\{ 104, 95\}$ & missing & $\{90, 82, 98 \}$ \\
ROB-WBS & $\{ 104, 100\}$ & missing & $\{104, 90, 85, 93 \}$  \\
\hline
\end{tabular}
\end{table}

\begin{figure}[ht]
\centering
\begin{tabular}{ c }
\includegraphics[width=1\textwidth]{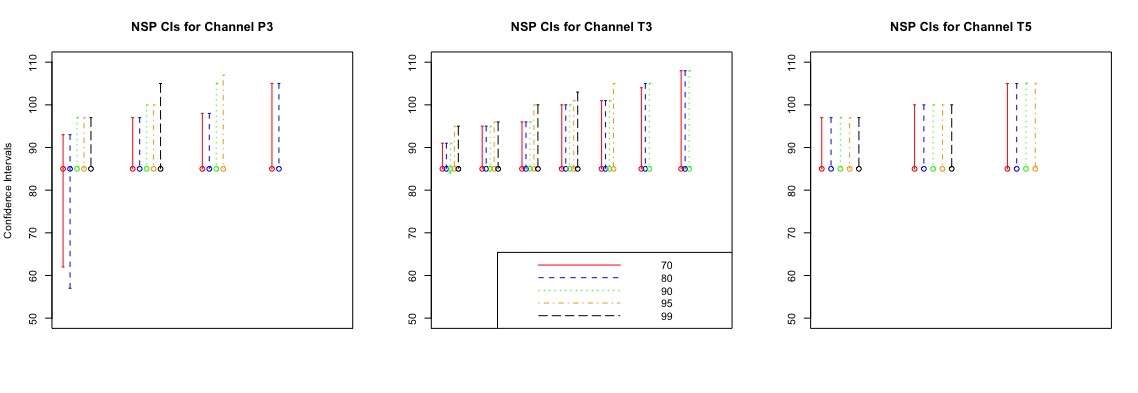} 
\end{tabular}
\caption{NSP confidence intervals for the EEG data set. The true value of $t=85$ is displayed as a point in all intervals.}
\label{Fig.NSP_CIs_EEG}
\end{figure}

\clearpage

\subsection{E.2 \quad Surveillance video data set}

We estimate the change points for applications related to the EEG data set and the surveillance video data set described in the manuscript using some alternative methods.

The images and matrices of the main surveillance video data set are shown in Figures \ref{Fig.image_walkout} and \ref{Fig.image_walkin}.
The results of comparisons across multiple methods from the surveillance video data sets for two of the actions are given in Tables \ref{Tab.Image1&2_PELT} -- \ref{Tab.Image1&2_ROB-WBS}. For the first action, the ROB-WBS method seems to perform better than the rest given the true value of 116. For the second action, we observe that DP-UNIVAR and STD could perform relatively well compared to the rest.

\begin{table}[ht]
\footnotesize
\caption{Location of estimated change points from the video data set for the ``\textit{first person walks out of the lobby} (so-called action 1)" and ``\textit{second person walks into the lobby} (so-called action 2)" using PELT method. Note that the true value is 116 for action 1 and 174 for action 2.} \label{Tab.Image1&2_PELT}
\centering
\setlength{\tabcolsep}{2pt} 
\renewcommand{\arraystretch}{1.5}
\begin{tabular}{@{} ccc @{}}
\hline
Action & Pixel & Estimated change points based on \textbf{PELT} method  \\ \hline\hline
1 & 700 & $\{9, 12, 29, 31, 37, 48, 60, 62, 64, 70, 72, 77, 79, 83, 92, 104, 115, 117, 120, 158, 161 \}$ \\
& 702 & $\{12, 18, 21, 24, 26, 54, 56, 58, 62, 64, 70, 72, 83, 87, 92, 96, 99,$ \\
 & & $103, 108, 110, 123, 125, 135, 138, 144, 147, 151, 168 \}$ \\
 & 731 & $\{4, 6, 8, 10, 14, 17, 43, 45, 53, 55, 57, 91, 94, 109, 114, 125, 127, 131, 141, 143, 161, 163 \}$ \\
& 762 & $\{24, 26, 30, 32, 37, 41, 43, 48, 59, 62, 64, 80, 82, 97, 101, 106, 122, 128 \}$ \\
& 764 & $\{4, 6, 12, 16, 22, 24, 34, 36, 53, 55, 61, 63, 72,  90, 96, 98, 104, 118, 127, 130, 132, 149, 155, 163 \}$ \\
\hline
2 & 48 & $\{131, 134, 214, 216, 219, 221 \}$ \\
& 78 & $\{129, 131, 157, 167, 169, 237, 239, 260, 262, 278 \}$ \\
& 110 & missing \\
& 174 & $\{119, 132, 134, 142, 144, 165, 167, 170, 172, 225, 227, 247, 253, 255, 287 \}$ \\
& 209 & $\{129, 131, 149, 151, 159, 162, 231, 233 \}$ \\
& 241 & $\{130, 132, 134, 137, 139, 167, 169, 184, 186, 209, 211, 216, 218,$ \\
& & $224, 228, 231, 233, 240, 242, 244, 246, 275, 279, 287\}$ \\
\hline
\end{tabular}
\end{table}

\begin{table}[ht]
\footnotesize
\caption{Location of estimated change points from the video data set for the ``\textit{first person walks out of the lobby} (so-called action 1)" and ``\textit{second person walks into the lobby} (so-called action 2)" using SDE method. Note that the true value is 116 for action 1 and 174 for action 2.} \label{Tab.Image1&2_SDE}
\centering
\setlength{\tabcolsep}{2pt} 
\renewcommand{\arraystretch}{1.5}
\begin{tabular}{@{} ccc @{}}
\hline
Action & Pixel & Estimated change points based on \textbf{SDE} method  \\ \hline\hline
1 & 700 & 24 \\
& 702 & 88 \\
 & 731 & 91 \\
& 762 & 96 \\
& 764 & 104 \\
\hline
2 & 48 & 259 \\
& 78 & 246 \\
& 110 & 231 \\
& 174 & 237 \\
& 209 & 236 \\
& 241 & 201 \\
\hline
\end{tabular}
\end{table}

\begin{table}[ht]
\footnotesize
\caption{Location of estimated change points from the video data set for the ``\textit{first person walks out of the lobby} (so-called action 1)" and ``\textit{second person walks into the lobby} (so-called action 2)" using DP-UNIVAR method. Note that the true value is 116 for action 1 and 174 for action 2.} \label{Tab.Image1&2_DP-UNIVAR}
\centering
\setlength{\tabcolsep}{2pt} 
\renewcommand{\arraystretch}{1.5}
\begin{tabular}{@{} ccc @{}}
\hline
Action & Pixel & Estimated change points based on \textbf{DP-UNIVAR} method  \\ \hline\hline
1 & 700 & $\{48, 91, 104, 114\}$ \\
& 702 & $\{83, 90, 101, 108 \}$ \\
 & 731 & $\{93, 109 \}$ \\
& 762 & $\{97, 106 \}$ \\
& 764 & $\{104, 118 \}$ \\
\hline
2 & 48 & $\{126, 144, 151, 182, 188, 207, 218, 224, 230, 260, 274 \}$ \\
& 78 & $\{157, 178, 248, 278, 285, 287 \}$ \\
& 110 & $\{133, 145, 152, 173, 176, 231, 234, 276, 283 \}$ \\
& 174 & $\{121, 128, 157, 170, 227, 246, 262, 272, 284 \}$ \\
& 209 & $\{138, 147, 158, 164, 203, 254, 259 \}$ \\
& 241 & $\{142, 219, 260 \}$ \\
\hline
\end{tabular}
\end{table}

\begin{table}[ht]
\footnotesize
\caption{Location of estimated change points from the video data set for the ``\textit{first person walks out of the lobby} (so-called action 1)" and ``\textit{second person walks into the lobby} (so-called action 2)" using LOCAL method. Note that the true value is 116 for action 1 and 174 for action 2.} \label{Tab.Image1&2_LOCAL}
\centering
\setlength{\tabcolsep}{2pt} 
\renewcommand{\arraystretch}{1.5}
\begin{tabular}{@{} ccc @{}}
\hline
Action & Pixel & Estimated change points based on \textbf{LOCAL} method  \\ \hline\hline
1 & 700 & 84 \\
& 702 & 79 \\
 & 731 & 91 \\
& 762 & 96 \\
& 764 & 98 \\
\hline
2 & 48 & 260 \\
& 78 & 278 \\
& 110 & 276 \\
& 174 & 170 \\
& 209 & 203 \\
& 241 & 142 \\
\hline
\end{tabular}
\end{table}

\begin{table}[ht]
\footnotesize
\caption{Location of estimated change points from the video data set for the ``\textit{first person walks out of the lobby} (so-called action 1)" and ``\textit{second person walks into the lobby} (so-called action 2)" using STD method. Note that the true value is 116 for action 1 and 174 for action 2.} \label{Tab.Image1&2_STD}
\centering
\setlength{\tabcolsep}{2pt} 
\renewcommand{\arraystretch}{1.5}
\begin{tabular}{@{} ccc @{}}
\hline
Action & Pixel & Estimated change points based on \textbf{STD} method  \\ \hline\hline
1 & 700 & $\{84, 48, 109, 37, 65, 91, 120, 102, 161, 132 \}$ \\
& 702 & $\{79, 12, 104, 90, 129, 109 \}$ \\
 & 731 & $\{91, 109, 96, 114 \}$ \\
& 762 & $\{91, 37, 106, 97, 120 \}$ \\
& 764 & $\{98, 118, 104, 134 \}$ \\
\hline
2 & 48 & $\{126, 260, 239, 274, 223, 254, 265, 207, 230, 244, 188, 217, 182, 199, 142, 151, 163, 174 \}$ \\
& 78 & $\{278, 140, 178, 157, 248, 169, 188, 266, 197, 258, 271, 219, 208, 242, 227, 235 \}$ \\
& 110 & $\{133, 122, 273, 234, 283, 171, 260, 152, 176, 239, 145, 203, 197, 208, 228 \}$ \\
& 174 & $\{284, 275, 172, 157, 225, 128, 164, 210, 246, 151, 178, 217, 237, 262, 137, 197, 230, 189, 204 \}$ \\
& 209 & $\{122, 203, 164, 211, 158, 173, 224, 138, 182, 252, 129, 147, 192, 231, 259, 244, 266 \}$ \\
& 241 & $\{142, 122, 272, 129, 219, 164, 260, 158, 171, 224, 266, 200, 229, 189, 209, 236, 250 \}$ \\
\hline
\end{tabular}
\end{table}

\begin{table}[ht]
\footnotesize
\caption{Location of estimated change points from the video data set for the ``\textit{first person walks out of the lobby} (so-called action 1)" and ``\textit{second person walks into the lobby} (so-called action 2)" using WBS UNI-NONPAR method. Note that the true value is 116 for action 1 and 174 for action 2.} \label{Tab.Image1&2_WBS UNI-NONPAR}
\centering
\setlength{\tabcolsep}{2pt} 
\renewcommand{\arraystretch}{1.5}
\begin{tabular}{@{} ccc @{}}
\hline
Action & Pixel & Estimated change points based on \textbf{WBS UNI-NONPAR} method  \\ \hline\hline
1 & 700 & $\{108, 48, 114, 37, 60 \}$ \\
& 702 & $\{89, 83, 102, 36, 96, 108 \}$ \\
 & 731 & $\{91, 109, 96, 114 \}$ \\
& 762 & $\{95, 37, 106, 48, 120 \}$ \\
& 764 & $\{101, 90, 118, 106 \}$ \\
\hline
2 & 48 & $\{260, 239, 274, 142, 254, 265, 134, 151, 244, 163, 223, 174, 230, 182 \}$ \\
& 78 & $\{219, 208, 258, 188, 232, 274, 163, 248, 242 \}$ \\
& 110 & $\{273, 234, 203, 260, 177, 208, 239, 197, 228 \}$ \\
& 174 & $\{284, 172, 157, 275, 128, 164, 223, 151, 197, 246, 137, 189, 210, 237, 262, 178, 204, 217 \}$ \\
& 209 & $\{158, 147, 164, 138, 203, 173, 211, 183, 252, 192, 224, 259, 231, 244 \}$ \\
& 241 & $\{142, 260, 219, 272, 164, 224, 266, 158, 171, 229, 200, 236, 189, 209, 250 \}$ \\
\hline
\end{tabular}
\end{table}

\begin{table}[ht]
\footnotesize
\caption{Location of estimated change points from the video data set for the ``\textit{first person walks out of the lobby} (so-called action 1)" and ``\textit{second person walks into the lobby} (so-called action 2)" using ROB-WBS method. Note that the true value is 116 for action 1 and 174 for action 2.} \label{Tab.Image1&2_ROB-WBS}
\centering
\setlength{\tabcolsep}{2pt} 
\renewcommand{\arraystretch}{1.5}
\begin{tabular}{@{} ccc @{}}
\hline
Action & Pixel & Estimated change points based on \textbf{ROB-WBS} method  \\ \hline\hline
1 & 700 & $\{111, 48, 141, 37, 60, 120 \}$ \\
& 702 & $\{88, 83, 102, 36, 92, 108 \}$ \\
 & 731 & $\{91, 111, 94, 109 \}$ \\
& 762 & $\{91, 37, 117, 48, 97, 106 \}$ \\
& 764 & $\{98, 118, 104 \}$ \\
\hline
2 & 48 & $\{260, 239, 274, 142, 254, 264, 126, 151, 243, 127, 163, 130, 161, 188, 182, 231, 174, 224 \}$ \\
& 78 & $\{219, 208, 258, 188, 232, 278, 157, 227, 235, 242, 248 \}$ \\
& 110 & $\{273, 234, 231, 266, 171, 236, 176, 246, 203, 248, 201, 223, 209, 210 \}$ \\
& 174 & $\{172, 157, 275, 128, 169, 247, 284, 152, 225, 262, 137, 197, 237, 272, 189, 217, 230, 176, 210, 205 \}$ \\
& 209 & $\{156, 147, 164, 138, 200, 181, 211, 178, 192, 262, 175, 254, 224, 220, 231 \}$ \\
& 241 & $\{142, 121, 260, 129, 219, 272, 164, 224, 268, 284, 158, 171,$ \\
& & $228, 282, 145, 200, 236, 149, 195, 209,
239, 189 \}$ \\
\hline
\end{tabular}
\end{table}

\clearpage

\begin{figure}[ht]
\centering
\begin{tabular}{ cc }
\includegraphics[width=0.65\textwidth]{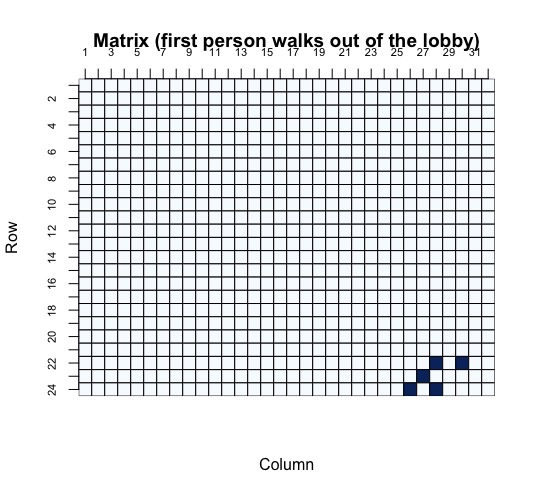} \\
\includegraphics[width=0.65\textwidth]{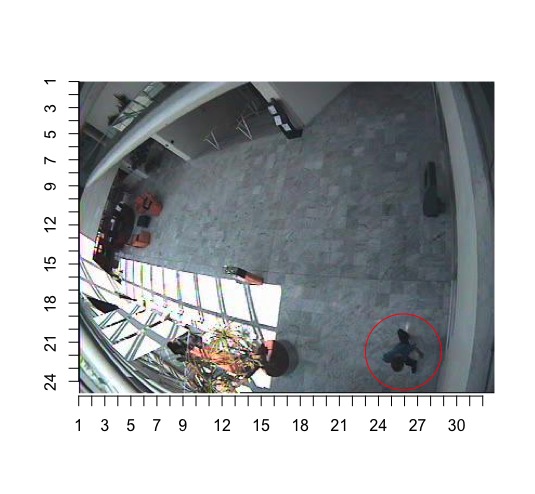} 
\end{tabular}
\caption{The result of image analysis for the ``\textit{first person walks out of the lobby}". The shaded cells from the matrix clearly predict the pixels that the person moves out of the lobby. Note that the true value is 116 for action 1.}
\label{Fig.image_walkout}
\end{figure}

\begin{figure}[ht]
\centering
\begin{tabular}{ cc }
\includegraphics[width=0.65\textwidth]{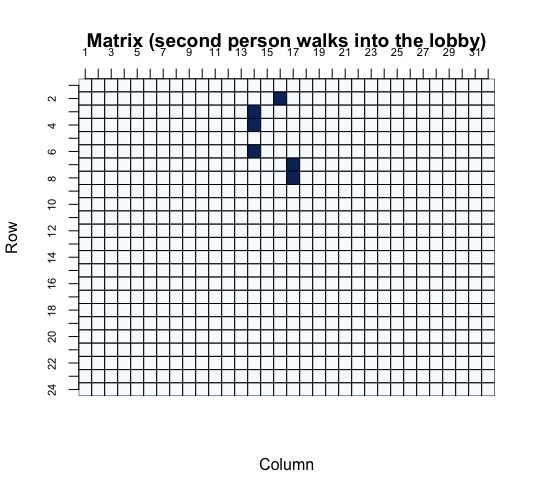} \\
\includegraphics[width=0.65\textwidth]{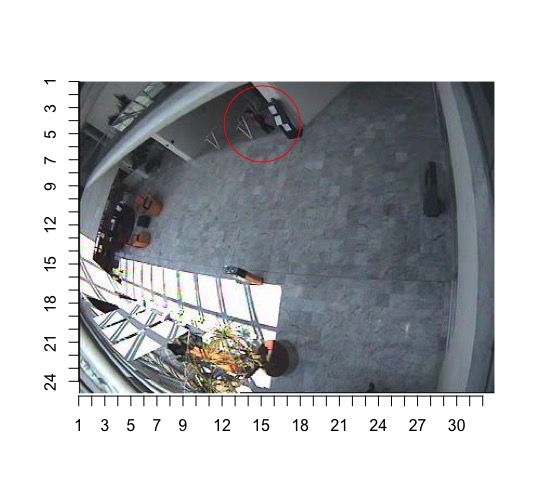} 
\end{tabular}
\caption{The result of image analysis for the ``\textit{second person walks into the lobby}". The shaded cells from the matrix clearly predict the pixels that the person moves into the lobby. Note that the true value is 174 for action 2.}
\label{Fig.image_walkin}
\end{figure}

\end{document}